\newtheorem{thm}    {Theorem}
\newtheorem{lem}[thm]{Lemma}
\newtheorem{cor}[thm]{Corollary}
\newtheorem{proposition}[thm]{Proposition}
\newtheorem{rem}     {Remark}
 \newenvironment{proofof}[1]{\vspace*{5mm} \par \noindent
         \quad{\it Proof of #1:\hspace{2mm}}}{\endproof
}
\newcommand{\sX}{\mathsf{X}}
\newcommand{\sZ}{\mathsf{Z}}
\newcommand{\sW}{\mathsf{W}}
\def\sM{\mathsf{M}}
\def\sL{\mathsf{L}}
\def\argmax{\mathop{\rm argmax}}
\def\mix{\mathop{\rm mix}}
\def\normal{\mathop{\rm normal}}
\def\Ker{\mathop{\rm Ker}}
\def\FF{\mathbb F}
\def\rG{\mathop{\rm G}}
\def\rH{\mathop{\rm H}}
\def\rq{\mathop{\rm q}}
\newcommand{\bR}{\mathbb{R}}
\newcommand{\cH}{{\cal H}}
\def\cA{{\cal A}}
\def\rE{{\rm E}}
\def\rP{{\rm P}}
\newcommand{\Tr}{{\rm Tr}\,}
\newcommand{\bX}{{\bf X}}
\newcommand{\bbZ}{{\bf Z}}
\newcommand{\bY}{{\bf Y}}
\newcommand{\lleq}{\mathrel{\mathpalette\gl@align<}}
\newcommand{\ggeq}{\mathrel{\mathpalette\gl@align>}}
\newcommand{\gl@align}[2]{
\vbox{\baselineskip\z@skip\lineskip\z@
\ialign{$\m@th#1\hfil##\hfil$\crcr#2\crcr{}_{{}_{(=)}}\crcr}}}
\def\Label#1{\label{#1}\ [\ \text{#1}\ ]\ }
\def\Label{\label}
\begin{document}
\title{Large deviation analysis for quantum security 
via smoothing of R\'{e}nyi entropy of order 2}
\author{
Masahito Hayashi
\thanks{
M. Hayashi is with Graduate School of Mathematics, Nagoya University, 
Furocho, Chikusaku, Nagoya, 464-8602, Japan, and
Centre for Quantum Technologies, National University of Singapore, 3 Science Drive 2, Singapore 117542.
(e-mail: masahito@math.nagoya-u.ac.jp)
This paper was
presented in part at 
 The 7th Conference on Theory of Quantum Computation, Communication, and Cryptography (TQC2012), Koshiba Hall, The University of Tokyo, Tokyo, Japan, 17-19, May (2012),
and in part at The International Symposium on Quantum Information and Quantum Logic, Zhejiang University, Hangzhou, China, 10-13, August (2012),
and in part at 
2nd Annual conference on Quantum Cryptography (QCRYPT 2012), Singapore, September 10-14, (2012),
and in part at Japan-Singapore Workshop on Multi-user Quantum Networks, Centre for Quantum Technologies, National University of Singapore, Singapore, 17-20 September (2012).}}
\date{}
\maketitle

\begin{abstract}
It is known that
the security evaluation can be done by smoothing of 
R\'{e}nyi entropy of order 2
in the classical and quantum settings
when we apply universal$_2$ hash functions.
Using the smoothing of R\'{e}nyi entropy of order 2,
we derive security bounds for $L_1$ distinguishability and
modified mutual information criterion under the classical and quantum setting,
and have derived these exponential decreasing rates.
These results are extended to the case when 
we apply $\varepsilon$-almost dual universal$_2$ hash functions.
Further, we apply this analysis to the secret key generation 
with error correction.
\end{abstract}
\begin{keywords}
exponential rate,
non-asymptotic setting,
secret key generation,
universal hash function,
almost dual universal$_2$ hash function 
\end{keywords}

\section{Introduction}
\subsection{Overview}
When a random number is correlated to the third party,
the random number is not secure.
In this case, in order to amplify the privacy, 
one can apply a hash function to the original random number.
This process is called privacy amplification or secret key extraction.  
Bennett et al. \cite{BBCM} and H\r{a}stad et al. \cite{ILL} proposed to use universal$_2$ hash functions for privacy amplification
and derived two universal hashing lemma, which provides an upper bound for 
leaked information based on R\'{e}nyi entropy of order $2$.
In the quantum setting, Renner and K\"onig \cite{R-K} showed that 
the trace norm of the difference between the real state and the ideal state is universally composable. 
Hence, we use the trace norm and call it the $L_1$ distinguishability criterion.
Renner \cite{Renner} extended two universal hashing lemma to the quantum case
and evaluated the $L_1$ distinguishability criterion with universal$_2$ hash functions
based on a quantum version of conditional R\'{e}nyi entropy of order $2$.
In order to apply Renner's two universal hashing lemma to a realistic setting, 
Renner \cite{Renner} attached the smoothing to min entropy,
which is a lower bound on the above quantum version of conditional R\'{e}nyi entropy of order $2$.
That is, he proposed to maximize the min-entropy among the sub-states whose 
trace norm distance to the true state is less
than a given threshold.
However, it is not easy to find the maximizing sub-state.
Instead of the rigorous maximization of min entropy under this condition, 
we can consider a lower bound of the maximum of min entropy.
In the following, we say that this type lower bound or the method based on this type lower bound 
is an {\it approximate smoothing} of min entropy.
In contrast with an approximate smoothing,
we say that the tight value of min entropy under the given condition
or the method based on the tight value is the {\it rigorous smoothing} of min entropy.

Indeed, the same difficulty still holds even for the maximum of R\'{e}nyi entropy of order $2$
under the same condition.
Hence, we can consider an {\it approximate smoothing} of R\'{e}nyi entropy of order $2$.
Considering an approximate smoothing of R\'{e}nyi entropy of order $2$,
the previous paper \cite{H-tight} derived an upper bound of the $L_1$ distinguishability criterion after an application of universal$_2$ hash functions in the classical setting.
In the $n$-fold independent and identical case, 
the upper bound yields a lower bound of the exponential decreasing rate of the $L_1$ distinguishability criterion.
The obtained lower bound is tight with no side information \cite{H-tight}.
The same fact is also shown with classical side information by combination of \cite{H-arxiv} and the forthcoming paper \cite{W-H2}.
This fact shows that the approximate smoothing gives a sub-distribution that is sufficiently close to the sub-distribution maximizing 
the R\'{e}nyi entropy of order $2$ in the classical setting.
However, no study treats the approximate smoothing of R\'{e}nyi entropy of order $2$ in the quantum case.
One of the purposes of this paper is to attach the approximate smoothing of R\'{e}nyi entropy of order $2$
and to evaluate the $L_1$ distinguishability criterion in the quantum case.

Further, when we employ the rigorous smoothing of min entropy instead of approximate smoothing of R\'{e}nyi entropy of order $2$,
we can derive another lower bound of the exponential decreasing rate of the $L_1$ distinguishability criterion.
When there is no side information,
it has been shown in \cite{H-tight} that
the lower bound based on the rigorous smoothing of min entropy is not tight, 
i.e., strictly weaker than 
the bound based on the approximate smoothing of R\'{e}nyi entropy of order $2$ given in \cite{H-tight}.
Further, the paper \cite{H-arxiv} showed the same fact when the side information classical.
Due to this superiority of approximate smoothing of R\'{e}nyi entropy of order $2$ over the rigorous smoothing of min entropy,
it is natural to extend the bound given by \cite{H-tight} to the quantum case.

The security of secret key generation by universal$_2$ hush function 
has been discussed mainly in the cryptography community
and has not been studied in the information theory community
while the problem can be described by information theoretic quantity.
However, the mutual information has not been discussed in this topic
while the mutual information has been widely accepted as the criterion of information security 
by so many papers \cite{CN,DM,AC93,Mau93,M94,M03}.
In fact, the security of the wiretap channel model 
has been mainly discussed with the mutual information 
among information theory community \cite{Wyner,CK79,Csiszar,Hayashi}.
Watanabe \cite{Watanabe} gave an interesting example in the classical setting, in which, 
the mutual information is not close to zero while 
the $L_1$ distinguishability criterion is close to zero.
His example suggests the demand of the convergence of the mutual information.
Therefore, 
it is needed to evaluate 
the security based on the mutual information 
as well as 
the security based on the $L_1$ distinguishability criterion
because so many recent literatures \cite{CKbook,Shikata,BTV,YPS,BKS,ZKVB,BB,PB,LLB,WO,NP,NYBNR,TNG} still accept the mutual information.

However, the mutual information does not reflect the uniformity while it reflects the independence.
In order to address the uniformity as well as leaked information,
we need the modification of mutual information, which is called 
the modified mutual information criterion
and is explained in Subsection \ref{cqs2-2}.
As is shown in Appendix \ref{s8-24}, 
if we suppose several natural conditions for the security criterion,
it is limited to the modified mutual information criterion.
Hence, it is needed to evaluate the modified mutual information criterion as well as the $L_1$ distinguishability criterion.

In fact, 
when one of two security criteria goes to zero exponentially,
the other also goes to zero exponentially due to the relations given in Subsection \ref{cqs2-2}.
Hence,
the asymptotic key generation rate does not depend on the choice of 
the security criterion.
However, 
the relations given in Subsection \ref{cqs2-2} cannot decide 
one of their exponential decreasing rates from the other exponent.
Hence, we need to consider both exponents separately.

\subsection{Main results}
As our result, first, we obtain upper bounds of the above two kinds of secrecy criteria
when Alice and Bob share the same random number and Eve has a correlated quantum state 
by using approximate smoothing of R\'{e}nyi entropy of order $2$
(Theorems \ref{Lem14}, and \ref{Lem12-q}).
This problem is called the secret key generation without errors.
Then, in the independent and identical distributed (i.i.d.) case,
we obtain lower bounds on the exponential decreasing rate 
of the above two kinds of secrecy criteria (Theorems \ref{t-3-16-2}).
We also show that 
the obtained lower bound for universal composable criterion 
is tight in a typical example, in which,
the leaked information is given as a pure state and can be regarded as the environment of Pauli channel.
This fact suggests the superiority of our method even in the quantum setting.

Further, we apply this result to the case
when there exist errors between Alice's and Bob's random variables
and Eve has a correlated quantum state (Theorems \ref{L3-20-7}, and \ref{L12-31-2}).
This problem is called the secret key generation with error correction.
The classical case has been treated by
Ahlswede \& Csisz\'{a}r\cite{AC93}, Maurer\cite{Mau93}, 
and Muramatsu\cite{MUW05} et al.
Renner \cite{Renner} treated the quantum case 
while he did not discuss 
the exponential decreasing rate.
Our analysis derives the exponential decreasing rate
even for the secret key generation with error correction (Theorems \ref{t3-20-20} and \ref{p3-16-1c}).
For derivation of these results,
we need to invent several information quantities and several original technical lemmas, which are given in Section \ref{cqs2}.

Further, we should note that 
the presentation style of this paper has is different from 
that of existing researches \cite{RW,Renner,TSSR11}.
with respect to security evaluation in the single-shot form.
These papers\cite{RW,Renner,TSSR11} bound the length of generated keys 
when the amount of leaked information is fixed.
In contrast, this paper bounds the amount of leaked information
when the length of generated keys is fixed. 
The latter style is useful for evaluation of the exponential decreasing rate.

\subsection{Generalization of main results}
Recently, Tomamichel et al. \cite{TSSR11}
extended two universal hashing lemma, i.e., 
they showed the security with a larger class of hash functions,
which is the class of $\varepsilon$-almost universal$_2$ hash functions in the sense of \cite{Carter,WC81}
when $\varepsilon$ is close to $1$ while they \cite{TSSR11} used a different terminology.
Tsurumaru et al \cite{Tsuru} proposed the concept ``$\varepsilon$-almost dual universal$_2$ hash functions"
for linear universal$_2$ hash functions,
which are defined as the dual functions of $\varepsilon$-almost universal$_2$ hash functions.
They also showed that the $\varepsilon$-almost dual universal$_2$ hash functions contain the original universal$_2$ hash functions when $\varepsilon=2$.
Tsurumaru et al \cite{Tsuru} showed the security of $\varepsilon$-almost dual universal$_2$ hash functions
when $\varepsilon$ increases polynomially with respect to the coding length
while Tomamichel et al. \cite{TSSR11} showed the security of $\varepsilon$-almost universal$_2$ hash functions when $\varepsilon$ is close to $1$.
Tsurumaru et al \cite{Tsuru} also gave an insecure example for $2$-almost universal$_2$ hash functions
over the finite field $\FF_2$.
This example suggests that $\varepsilon$-almost dual universal$_2$ hash functions
have a larger expandability than $\varepsilon$-almost universal$_2$ hash functions.
Further, the forthcoming paper \cite{H-T} gives concrete examples of $\varepsilon$-almost universal$_2$ hash functions 
that have a smaller calculation amount and a smaller number of random variables
than the concatenation of Toeplitz matrix and the identity matrix, which is a typical example of universal$_2$ hash functions.
Hence, it is useful from a applied viewpoint to evaluate the security with $\varepsilon$-almost dual universal$_2$ hash functions.

On the other hand,
Dodis and Smith \cite{DS05} proposed the concept 
``$\delta$-biased family" for a family of random variables.
The concept ``$\varepsilon$-almost dual universal$_2$ hash functions"
can be converted to a part of ``$\delta$-biased family"\cite{DS05,Tsuru}.
Indeed, 
Dodis et al.\cite{DS05} and Fehr et al.\cite{FS08}
showed a security lemma (Proposition \ref{Lem6-1-q}).
Employing this conversion and the above security lemma by \cite{DS05,FS08},
we derive a variant of two universal hashing lemma for 
``$\varepsilon$-almost dual universal$_2$ hash functions"
while Tsurumaru et al \cite{Tsuru} 
showed the security for this class of hash function
by evaluating the virtual decoding phase error probability
by using the relation between the virtual phase error correction and privacy amplification.
The variant can be regarded as a kind of generalization of
two universal hashing lemma by Renner \cite{Renner}.
Replacing the role of two universal hashing lemma by Renner \cite{Renner} by this variant,
we can extend the above result for universal$_2$ hash functions
to the case of the application of ``$\varepsilon$-almost dual universal$_2$ hash functions",
which is a wider class of hash functions
than universal$_2$ hash functions (Lemma \ref{Lem9-q} and
Theorems \ref{t-3-16-3}, \ref{L3-20-8}, \ref{L12-31-2}, \ref{t3-20-21}, and \ref{t3-20-11b}).

\subsection{Relation with second order analysis}
In the 
i.i.d. case,
when the rate of generated random numbers 
is smaller than the entropy rate 
(or conditional entropy rate) of the original information source,
it is possible to generate the random variable, in which,
the $L_1$ distinguishability criterion
approaches zero asymptotically.
In the realistic setting, we can manipulate only a finite size operation.
In order to treat the performance in the finite length setting,
we have two kinds of formalism for the i.i.d. setting.

The first one is the second order formalism, in which,
we focus on the asymptotic expansion up to the second order $\sqrt{n}$
of the length of generated keys $l_{n}$ 
as $l_{n}= H n + C\sqrt{n}+o(\sqrt{n})$
with the constant constraint for the $L_1$ distinguishability criterion.
The second one is the exponent formalism,
in which, we fixed the generation rate $R:= l_{n}/n$
and evaluate the exponential decreasing rate of convergence of 
the $L_1$ distinguishability criterion.
In the exponent formalism,
it is not sufficient to show that the security parameter goes zero exponentially,
and it is required to explicitly give lower and/or upper bounds for the exponential decreasing rate.
The exponent formalism has been studied by 
various information theoretical problems,
e.g., channel coding\cite{Gal,SGB}, source coding\cite{C-source,CKbook,Han-source}, and wire-tap channel\cite{Hayashi,H-leaked}.
In the quantum case,
the same topic has been studied also in 
channel coding\cite{expo-chan}, source coding\cite{Q-source}, wire-tap channel\cite{q-wire}, and entanglement concentration \cite{HKMMW,q-concent}.
As the second order formalism, 
the optimal coding length with the fixed error probability
has been derived up to the second order $\sqrt{n}$ in the various setting \cite{strassen,Hay1,Pol}
in the case of classical channel coding.
The previous paper \cite{Hay1} treats the secret key generation 
with the second order formalism
based on the information spectrum approach \cite{Han}, 
which is closely related to $\epsilon$-smooth min-entropy.
Then, another previous paper \cite{TH}
discusses the randomness extraction with quantum side information
with the second order formalism
by using the relation with 
$\epsilon$-smooth min-entropy \cite{Renner} and 
quantum versions of the information spectrum \cite{H2001,NH}.
The classical case of the result \cite{TH} can be regarded as
a finite-length bound based on smoothing of min entropy.
Note that, as is mentioned by Han \cite{Han},
the information spectrum approach can not yield the optimal exponent of error probability in the channel coding.
This fact suggests that we have to treat the exponent formalism with a method different from the second order formalism.

Since the secret key generation by universal$_2$ hash functions has been studied mainly in 
the cryptography community,
it has not been studied with the exponent formalism sufficiently
while the exponential decreasing rate is a standard topic in the information theory community.
Since the exponential decreasing rate of 
the decoding error probability in the source coding  
is characterized by R\'{e}nyi entropy in the classical \cite{C-source} and the quantum \cite{Q-source} case,
many information theoretical people might be interested in 
whether a similar characterization holds in the secret key generation.

Recently, the previous paper \cite{H-tight} 
derived an exponential decreasing rate of leaked information
in the $L_1$ distinguishability criterion in the classical setting.
The tightness of the rate is shown in the forthcoming paper \cite{W-H2}.
Based on the results \cite{H-tight,TH},
another recent paper \cite{W-H1} numerically dealt with 
the $L_1$ distinguishability criterion in the independently and identically distribution of the binary distribution with the finite-length setting.
It compared the bound based on the second order formalism and the bound based on the exponent formalism in this setting.
It numerically showed that the comparative merits between both depend on 
the length of the data and 
the required amount of the $L_1$ distinguishability criterion.
That is, when the length of the data is not so many and 
the required amount of the $L_1$ distinguishability criterion is too small,
the bound based on the exponent formalism is better than
the bound based on the second order formalism.
Indeed, 
when the required amount of the $L_1$ distinguishability criterion is too small,
the convergence of the second order rate is not uniform.
Hence, the second order formalism does not necessarily work properly 
for an approximation of the finite-length case.
In such a case, from a mathematical viewpoint, we often take the limit of the length of generated keys 
under the condition that 
the required amount of the $L_1$ distinguishability criterion
depends on the length of the data
because such a limit often gives a better approximation of the finite-length case.
The exponent formalism is a particular case of this type of limit.
The numerical analysis in \cite{W-H1} shows 
the importance of the exponent formalism 
when the required amount of the $L_1$ distinguishability criterion is too small at least in the classical case.

While the paper \cite{TH} derives a finite-length bound 
achieving the optimal second order rate by using smoothing of min entropy,
the bound in the classical case
requires the evaluation of the tail probability,
which causes the following drawback.
In the case of binary distribution,
the tail probability can be numerically calculated.
Otherwise, its calculation is not easy 
when the data has a huge size.
Hence, 
we often apply the Berry-Esseen theorem (the central limit theorem).
However, the convergence of Berry-Esseen theorem is not so good when the tail probability is too small.
Instead of Berry-Esseen theorem, we often apply Chernoff bound, which essentially gives the exponential decreasing rate.
This is because Chernoff bound gives a smaller upper bound of the tail probability than Berry-Esseen theorem in this case.
When the tail probability is bounded by Chernoff bound,
this type bound essentially gives an exponential decreasing upper bound based on 
an approximate smoothing of min entropy.
This fact suggests the importance of the exponent formalism 
when the data has a huge size.
We have the similar importance of the exponent formalism in the quantum case
because the numerical calculation based on the bound given in \cite{TH}
is more difficult in the quantum case except for the special example given in \cite{TH}.
Hence, we need to discuss the finite-length bound given in \cite{TH}
from the exponent formalism. 
As is shown in the paper \cite{H-arxiv},
the upper bound by the rigorous smoothing of min entropy does not give the optimal exponential decreasing rate when the side information is classical.
That is,
the finite-length bound given in \cite{TH}
cannot attain the optimal exponent,
and 
the smoothing of R\'{e}nyi entropy of order 2 is 
required for the optimal exponent.
Therefore, this paper addresses only the smoothing of R\'{e}nyi entropy of order 2 under the exponent formalism,

\subsection{Organization}
Now, we give the outline of the preliminary parts.
In Section \ref{cqs2}, we introduce the information quantities for 
evaluating the security and derive several useful inequalities
for the quantum case.
We also give a clear definition for security criteria.
In section \ref{cqs3}, 
we introduce several class of hash functions (universal$_2$ hash functions
and $\varepsilon$-almost dual universal$_2$ hash functions).
We clarify the relation between 
$\varepsilon$-almost dual universal$_2$ hash functions
and $\delta$-biased family.
We also derive an $\varepsilon$-almost dual universal$_2$ version of Renner's two universal hashing lemma \cite[Lemma 5.4.3]{Renner}(Lemma \ref{Lem6-3-q}) based on 
Lemma for $\delta$-biased family given by Dodis et al.\cite{DS05} and Fehr et al.\cite{FS08}
in the classical and quantum setting.
These parts give the definitions for concepts and quantities describing the main results.
The latter preliminary parts are more technical and used for proofs of the main results.
In section \ref{cqs4}, under the universal$_2$ condition or
the $\varepsilon$-almost dual universal$_2$ condition,
we evaluate the $L_1$ distinguishability criterion
and the modified mutual information based on R\'{e}nyi entropy of order $2$
for the quantum setting.

Next, we outline the main results. 
In Section \ref{s4-1}, 
we obtain a suitable bound for the quantum setting in the single-shot setting
by attaching an approximate smoothing of R\'{e}nyi entropy of order 2 
to the evaluation obtained in the previous section.
In Section \ref{s4-1-b}, 
we derive an exponential decreasing rate for both criteria 
for the quantum setting
when we simply apply hash functions and 
there is no error between Alice's and Bob's information.

In Section \ref{s5}, 
we proceed to 
the secret key generation with error correction
for the quantum setting.
In this case, 
we need error correction as well as the privacy amplification.
We derive Gallager bound for the error probability in this setting.
We also derived upper bounds for the $L_1$ distinguishability criterion
and the modified mutual information for a given sacrifice rate.
Based on these upper bounds,
we derive the exponential decreasing rates for both criteria.

In Section \ref{s10}, we apply our result to
the QKD case. That is, the state is given by the quantum communication via 
Pauli channel, which is a typical case in quantum key distribution.
For this example, we showed that our approximate smoothing is tight 
in the sense of exponents.
This evaluation is shown in Appendix \ref{aL8-16-6}.

\begin{table}[htb]
  \caption{Summary of obtained results.}
\begin{center}
  \begin{tabular}{|l|l|l|c|c|} \hline
task& setting & hash functions   &  L1  & MMI \\ \hline
\multirow{6}{*}{PV} & \multirow{3}{*}{single-shot} & universal$_2$ 
& 
\!\!\!\!\begin{tabular}{c}
{\it (\ref{12-5-1-q}) in Lemma \ref{Lem8-q}}\\
(\ref{12-5-6-nb}) in Corollary \ref{c3-29-2-q} \\
(\ref{8-26-13-g}), (\ref{8-26-13-g2}) in Theorem \ref{Lem14} 
\end{tabular}
\!\!\!\!\!\!\!\!
&
\!\!\!\!\begin{tabular}{c}
(\ref{12-5-2-q}) in Lemma \ref{Lem8-q}\\
(\ref{8-26-13-g2b}) in Corollary \ref{c3-29-1-q} \\
(\ref{12-5-6-n}) in Theorem \ref{Lem12-q} 
\end{tabular}
\!\!\!\!\!\!\!\!
\\ \cline{3-5}
& & 
\!\!\!\!\begin{tabular}{l}
$\varepsilon$-almost\\
dual universal$_2$ 
\end{tabular}\!\!\!\! &
\!\!\!\!\begin{tabular}{c}
(\ref{4-17-2}) and (\ref{4-17-2b}) in Lemma \ref{Lem9-q}\\
(\ref{8-26-13-g}), (\ref{8-26-13-g2}) in Theorem \ref{Lem14} 
\end{tabular}
\!\!\!\!\!\!\!\!
&
\!\!\!\!\begin{tabular}{c}
(\ref{12-5-2-a-q}) and (\ref{12-5-2-a-qb}) in Lemma \ref{Lem9-q}\\
(\ref{12-5-6-n}) in Theorem \ref{Lem12-q} 
\end{tabular}
\!\!\!\!\!\!\!\!
\\ \cline{2-5}
& \multirow{3}{*}{exponent} & universal$_2$ 
& (\ref{12-18-9}) in Theorem \ref{t-3-16-2}  
& 
\!\!\!\!\begin{tabular}{c}
(\ref{12-18-7-q}) in Theorem \ref{t-3-16-2}\\
(\ref{12-18-9-x})
\end{tabular}\!\!\!\!
\\ \cline{3-5}
&&
\!\!\!\!\begin{tabular}{l}
$P(n)$-almost\\
dual universal$_2$ 
\end{tabular}\!\!\!\!
& (\ref{12-18-9-a}) in Theorem \ref{t-3-16-3} & 
\!\!\!\!\begin{tabular}{c}
(\ref{12-18-7-a-q}) in Theorem \ref{t-3-16-3}\\
(\ref{12-18-9-x})
\end{tabular}
\!\!\!\!
\!\!\!\!\\ \hline
\multirow{6}{*}{%
\!\!\!\!\begin{tabular}{l}
PV \&\\ 
fixed EC
\end{tabular}%
} &
\multirow{3}{*}{single-shot} &
 universal$_2$ 
& (\ref{12-23-2-q}) in Theorem \ref{L3-20-7} 
& (\ref{12-23-3-q2}) in Theorem \ref{L3-20-7}
\\ \cline{3-5}
& & 
\!\!\!\!\begin{tabular}{l}
$\varepsilon$-almost\\
dual universal$_2$ 
\end{tabular}\!\!\!\!
& (\ref{12-23-4-q}) in Theorem \ref{L3-20-8} 
& (\ref{12-23-3-q2b}) in Theorem \ref{L3-20-8}\\ \cline{2-5}
&\multirow{3}{*}{exponent} & universal$_2$ 
& (\ref{12-23-6-q}) in Theorem \ref{t3-20-20} & (\ref{12-23-7-q}) in Theorem \ref{t3-20-20}\\ \cline{3-5}
& &
\!\!\!\!\begin{tabular}{l}
$P(n)$-almost\\
dual universal$_2$ 
\end{tabular}\!\!\!\!
& (\ref{12-23-6-q}) in Theorem \ref{t3-20-21} 
& (\ref{12-23-7-q}) in Theorem \ref{t3-20-21}\\ \hline
\multirow{6}{*}{%
\!\!\!\!\begin{tabular}{l}
PV \&\\ 
randomized EC
\end{tabular}\!\!\!%
} 
& 
\multirow{3}{*}{single-shot} & 
universal$_2$ & \multirow{6}{*}{no improvement}  
& (\ref{12-23-5-4}) of Theorem \ref{L12-31-2} \\ \cline{3-3} \cline{5-5} 
& &
\!\!\!\!\begin{tabular}{l}
$\varepsilon$-almost\\
dual universal$_2$ 
\end{tabular}\!\!\!\!
&   & (\ref{12-23-5-q-2}) of Theorem \ref{L12-31-2} \\ \cline{2-3} \cline{5-5} 
&\multirow{3}{*}{exponent} & universal$_2$ 
&   & (\ref{12-31-2-c}) in Theorem \ref{p3-16-1c}\\  \cline{3-3} \cline{5-5} 
& &
\!\!\!\!\begin{tabular}{l}
$P(n)$-almost\\
dual universal$_2$ 
\end{tabular}\!\!\!\!
&  & (\ref{12-31-2-c}) in Theorem \ref{t3-20-11b}\\ \hline
  \end{tabular}
\end{center}

\vspace{2ex}

Roman letters express obtained results.
Italic letters express existing results or results with the same performance as existing results.
PV is privacy amplification.
EC is error correction.
L1 is the $L_1$ distinguishability criterion.
MMI is the modified mutual information criterion. 
$P(n)$ is a polynomial.
\Label{table1}
\end{table}

\section{Preparation}\Label{cqs2}
\subsection{Information quantities for single system}
\subsubsection{Case of sub-states}
In order to discuss the security problem in the quantum systems, 
we prepare several information quantities in the single quantum system.
In the following,
a non-negative Hermitian matrix $\rho$ 
is called a sub-state when $\Tr \rho \le 1$.
First, we define the following quantities:
\begin{align}
D(\rho\|\sigma) &:= \Tr \rho (\log \rho-\log \sigma) \\
\psi(s|\rho\|\sigma) & := \log \Tr \rho^{1+s} \sigma^{-s} \\
\underline{\psi}(s|\rho\|\sigma) & := \log \Tr \rho^{\frac{1+s}{2}} \sigma^{-s/2}\rho^{\frac{1+s}{2}} \sigma^{-s/2} .
\end{align}
Then, we obtain the following lemma:
\begin{lem}\Label{L3-26-1}
The functions $s \mapsto
\psi(s|\rho\|\sigma),
\underline{\psi}(s|\rho\|\sigma)$
are convex.
In particular, they are strictly convex
when $\rho$ and $\sigma$ are not completely mixed.
\end{lem}
The proof of Lemma \ref{L3-26-1} is given in Appendix \ref{sL3-26-1}.

Lemma \ref{L3-26-1} yields the following lemma.
\begin{lem}\Label{L21-1}
$\frac{\psi(s|\rho\|\sigma)}{s}$
and
$\frac{\underline{\psi}(s|\rho\|\sigma)}{s}$
are monotonically increasing with respect to $s$
in $(0,\infty)$ and $(-\infty)$.
In particular, they are
strictly monotonically increasing with respect to $s$
when $\rho$ and $\sigma$ are not completely mixed.
\end{lem}

For any quantum operation $\Lambda$,
the following information processing inequalities
\begin{align}
D(\Lambda(\rho)\|\Lambda(\sigma)) & \le D(\rho\|\sigma) , \quad
\psi(s|\Lambda(\rho)\|\Lambda(\sigma)) \le \psi(s|\rho\|\sigma)
\Label{8-21-7-q} 
\end{align}
hold for $s\in (0,1]$\cite[(5,30),(5.41)]{Hayashi-book}.
However, this kind of inequality does not fold for $\underline{\psi}(s|\rho\|\sigma)$ 
in general.

\begin{lem}\Label{L31}
The relation
\begin{align}
\underline{\psi}(s|\rho\|\sigma) & \le \psi(s| \rho\|\sigma) \Label{8-26-2} 
\end{align}
holds for $s\in (0,1]$.
\end{lem}

Lemma \ref{L31} is shown in Appendix \ref{sL31}.
For the latter discussion, we define the pinching map,
which is used for our proof of another lemma.
For a given Hermitian matrix $X$, 
we focus on its spectral decomposition 
$X= \sum_{i=1}^v x_i E_i$,
where $v$ is the number of the eigenvalues of $X$.
Then, 
the pinching map ${\cal E}_X$ is defined as
\begin{align}
{\cal E}_X (\rho):=\sum_{i} E_i \rho E_i.
\end{align}
Then, the inequality 
\begin{align}
\rho \le v {\cal E}_{\sigma}(\rho).
\Label{8-15-23}
\end{align}
holds\cite[Lemma 3.8]{Hayashi-book},\cite{H2001}.
Inequality (\ref{8-15-23}) is used in the proof of Lemma \ref{L31}.


\subsubsection{Case of normalized states}
When $\rho$ and $\sigma$ are normalized states,
we can show several additional useful properties as follows.
In this case, the inequality
$D(\rho\|\sigma) \ge 0$ holds.
The equality holds if and only if $\rho=\sigma$.

Since 
$\psi(0|\rho\|\sigma)=0$
and 
$\underline{\psi}(0|\rho\|\sigma)=0$,
we have
$\lim_{s\to 0} \frac{1}{s}\psi(s|\rho\|\sigma)=D(\rho\|\sigma)$
and 
$\lim_{s\to 0} \frac{1}{s}\underline{\psi}(s|\rho\|\sigma)=D(\rho\|\sigma)$.
Hence, 
Lemma \ref{L21-1} yields the following lemma.
\begin{lem}\Label{L21}
When $\rho$ and $\sigma$ are normalized states, we have
\begin{align}
-\psi(-s|\rho\|\sigma) \le s D (\rho\|\sigma) & \le \psi(s|\rho\|\sigma) \\
-\underline{\psi}(-s|\rho\|\sigma)\le s D (\rho\|\sigma) & \le \underline{\psi}(s|\rho\|\sigma)
\end{align}
for $s>0$.
\end{lem}

\subsection{Information quantities in composite system}
\subsubsection{Case of joint sub-state}
Next, we prepare several information quantities in 
the composite system ${\cal H}_A \otimes {\cal H}_E$,
in which,  ${\cal H}_A$ is a classical system spanned by the basis $\{|a\rangle\}$.
A composite sub-state $\rho$ is called a {\it c-q} sub-state
when it has a form $\rho_{A,E}=
\sum_a P_A(a)|a\rangle \langle a| \otimes \rho_{E|a}$,
in which the conditional state $\rho_{E|a}$ is normalized.
For a given c-q state $\rho_{A,E}$, 
we define the sub-states 
$\rho_E:= \Tr_A \rho_{A,E}$ and $\rho_A:= \Tr_E \rho_{A,E}$.
Then, we define the normalized states
$\rho_{E,\normal}:= \rho_{E}/\Tr \rho_{E}$ and $\rho_{A,\normal}:= \rho_{A}/\Tr\rho_{A}$.
Then, 
the von Neumann entropies 
and 
R\'{e}nyi entropies of order $1+s$
are given as
\begin{align*}
H(A,E|\rho_{A,E}) &:= -\Tr \rho_{A,E} \log \rho_{A,E} \\
H_{1+s}(A,E|\rho_{A,E}) &:=\frac{-1}{s}\log \Tr \rho_{A,E}^{1+s} \\
\end{align*}
with $s \in \bR \setminus \{0\}$.

Quantum versions of the conditional entropy and the min entropy, 
and 
two kinds of quantum versions of conditional R\'{e}nyi entropy 
of order $1+s$ are given as
\begin{align*}
H(A|E|\rho_{A,E}) := H(A,E|\rho_{A,E})-H(E|\rho_{E,\normal}) 
\end{align*}
and
\begin{align*}{H}_{\min}(A|E|\rho_{A,E}) 
:=& - \log \| (I_A \otimes \rho_{E,\normal}^{-1/2})
\rho_{A,E} (I_A \otimes \rho_{E,\normal}^{-1/2}) \|, \\
H_{1+s}(A|E|\rho_{A,E}) 
:=& \frac{-1}{s} \log 
\Tr \rho_{A,E}^{1+s} (I_A \otimes  \rho_{E,\normal}^{-s}), \\
\overline{H}_{1+s}(A|E|\rho_{A,E}) 
:=& \frac{-1}{s} \log 
\Tr \rho_{A,E}^{\frac{1+s}{2}} (I_A \otimes  \rho_{E,\normal}^{-s/2}) \rho_{A,E}^{\frac{1+s}{2}} (I_A \otimes  \rho_{E,\normal}^{-s/2})
\end{align*}
with $s \in \bR \setminus \{0\}$.
These quantities can be written in the following way:
\begin{align}
H(A|E|\rho_{A,E}) 
&= \log |{\cal A}| - D(\rho_{A,E}\|\rho_{\mix,A} \otimes \rho_{E,\normal}) \\
H_{1+s}(A|E|\rho_{A,E}) 
&= \log |{\cal A}| - \frac{1}{s}\psi(s|\rho_{A,E}\|\rho_{\mix,A} \otimes \rho_{E,\normal}) \\
\overline{H}_{1+s}(A|E|\rho_{A,E})
&=
\log |{\cal A}|- \frac{1}{s}\underline{\psi}(s|\rho_{A,E}\|\rho_{\mix,A} \otimes \rho_{E,\normal}) ,
\end{align}
where $\rho_{\mix,A}$ is the completely mixed state on ${\cal H}_A$.
When we replace $\rho_{E,\normal}$ by another normalized state $\sigma_E$ on $\cH_E$, 
we obtain the following generalizations:
\begin{align*}
H(A|E|\rho_{A,E}\|\sigma_E) 
&:= \log |{\cal A}| - D(\rho_{A,E}\|\rho_{\mix,A} \otimes \sigma_E) \\
H_{1+s}(A|E|\rho_{A,E}\|\sigma_E) 
&:= \log |{\cal A}| - \frac{1}{s}\psi(s|\rho_{A,E}\|\rho_{\mix,A} \otimes \sigma_E) \\
\overline{H}_{1+s}(A|E|\rho_{A,E}\|\sigma_E)
&:=
\log |{\cal A}|- \frac{1}{s}\underline{\psi}(s|\rho_{A,E}\|\rho_{\mix,A} \otimes \sigma_E) \\
 {H}_{\min}(A|E|\rho_{A,E}\|\sigma_{E}) 
&:= - \log \| (I_A \otimes \sigma_{E,\normal}^{-1/2})
\rho_{A,E} (I_A \otimes \sigma_E^{-1/2}) \|.
\end{align*}
Lemma \ref{L31} implies that
\begin{align}
\overline{H}_{1+s}(A|E|\rho_{A,E}\|\sigma_E) 
\ge H_{1+s}(A|E|\rho_{A,E}\|\sigma_E) 
\end{align}
for $s \in (0,1]$.
Using Lemma \ref{L21-1}, we obtain the following lemma.
\begin{lem}\Label{L22-1}
$H_{1+s}(A|E|\rho_{A,E}\|\sigma_E)$
and
$\overline{H}_{1+s}(A|E|\rho_{A,E}\|\sigma_E)$
are monotonically decreasing with respect to $s$
in $(0,\infty)$ and $(-\infty,0)$.
In particular, 
they are strictly monotonically decreasing with respect to $s$
in $(0,\infty)$ and $(-\infty,0)$
when $\rho_{A,E}$ and $\sigma_E$ are not completely mixed.
\end{lem}
Further, since 
\begin{align*}
&e^{-\overline{H}_{2}(A|E|\rho_{A,E}\|\sigma_E)} 
= \Tr \rho_{A,E} 
(I_A \otimes \sigma_E^{-1/2})
\rho_{A,E} (I_A \otimes \sigma_E^{-1/2}) \\
\le &
\|(I_A \otimes \sigma_E^{-1/2})
\rho_{A,E} (I_A \otimes \sigma_E^{-1/2})\|
= e^{- {H}_{\min}(A|E|\rho_{A,E}\|\sigma_E)},
\end{align*}
Lemma \ref{L22-1} implies 
the relation 
$\overline{H}_{1+s}(A|E|\rho_{A,E}\|\sigma_E)\ge {H}_{\min}(A|E|\rho_{A,E}\|\sigma_E) $ 
for $s\in (0,1]$.
A similar relation 
${H}_{1+s}(A|E|\rho_{A,E}\|\sigma_E)\ge {H}_{\min}(A|E|\rho_{A,E}\|\sigma_E) $ 
has been shown for $s\in (0,1]$ in \cite{precise}.

When we apply a quantum operation $\Lambda$ on ${\cal H}_E$, 
since
it does not act on the classical system ${\cal A}$,
(\ref{8-21-7-q}) implies that
\begin{align}
H(A|E||\Lambda(\rho_{A,E})\|\Lambda(\sigma_E)) &\ge H(A|E|\rho_{A,E}\|\sigma_E) \\
H_{1+s}(A|E|\Lambda(\rho_{A,E})\|\Lambda(\sigma_E)) &\ge H_{1+s}(A|E|\rho_{A,E}\|\sigma_E ) \Label{8-15-12-qq} .
\end{align}
When we apply the function $f$ to the classical random number $a \in \cA$,
$H(f(A),E|\rho_{A,E}) \le H(A,E|\rho_{A,E})$, i.e., 
\begin{align}
H(f(A)|E|\rho_{A,E}) \le H(A|E|\rho_{A,E}).\Label{8-14-1}
\end{align}

\subsubsection{Case of joint normalized state}
When the joint state $\rho_{A,E}$ is normalized,
we can show several additional useful properties as follows.
In this case, 
since $D(\rho_E\|\sigma_E) \ge 0$, 
we obtain
\begin{align}
H(A|E|\rho_{A,E}\|\sigma_E) 
=
H(A|E|\rho_{A,E})+D(\rho_E\|\sigma_E) 
\ge 
H(A|E|\rho_{A,E}) \Label{12-31-5}.
\end{align}
Further, using Lemma \ref{L21}, we obtain the following lemma.
\begin{lem}\Label{L22}
In particular,
\begin{align}
H_{1-s}(A|E|\rho_{A,E}\|\sigma_E) \ge H(A|E|\rho_{A,E}\|\sigma_E) & \ge H_{1+s}(A|E|\rho_{A,E}\|\sigma_E), \\
\overline{H}_{1-s}(A|E|\rho_{A,E}\|\sigma_E)
\ge H(A|E|\rho_{A,E}\|\sigma_E) & \ge \overline{H}_{1+s}(A|E|\rho_{A,E}\|\sigma_E) \Label{8-15-14} 
\end{align}
for $s>0$.
\end{lem}

Now, we introduce another kind of conditional R\'{e}nyi entropy for a joint normalized state as
\begin{align*}
H_{1+s}^{\rG}(A|E|\rho_{A,E})
&:=-\frac{1+s}{s}\log 
\Tr_E (\Tr_A \rho_{A,E}^{1+s})^{\frac{1}{1+s}}.
\end{align*}
This quantity can be expressed as
\begin{align*}
H_{1+s}^{\rG}(A|E|P_{A,E})
=-\frac{1+s}{s}\phi(\frac{s}{1+s}|A|E|\rho_{A,E})
\end{align*}
by using the Gallager type function \cite{H-tight}:
\begin{align*}
\phi(s|A|E|\rho_{A,E})
&:=\log 
\Tr_E (\Tr_A 
\rho_{A,E}^{1/(1-s)})^{1-s} 
=\log 
\Tr_E 
(\sum_a P_A(a)^{1/(1-s)} \rho_{E|a}^{1/(1-s)} )^{1-s} .
\end{align*}
Taking the limit $s\to 0$,
we obtain
\begin{align}
& 
\lim_{s\to 0}
H_{1+s}^{\rG}(A|E|P_{A,E}) 
= \lim_{s\to 0}
\frac{\phi(s|A|E|\rho_{A,E})}{s} 
= 
\frac{d \phi(s|A|E|\rho_{A,E})}{ds}|_{s=0} \nonumber \\
=
&
H(E|A|\rho_{A,E})
-H(E|\rho_{A,E})
+H(A|\rho_{A,E}) 
= -H(A|E|\rho_{A,E})\Label{1-5-2}.
\end{align}
Then, we obtain the following lemma:
\begin{lem}\Label{cor1-q}
The relation
\begin{align}
\max_{\sigma} 
H_{1+s}(A|E|\rho_{A,E}\|\sigma_E)
= 
H_{1+s}^{\rG}(A|E|P_{A,E}) 
\Label{8-26-8}
\end{align}
holds for $s \in (-1,\infty)$.
The maximum can be realized when $\sigma_E= 
(\Tr_A \rho_{A,E}^{1+s})^{1/(1+s)}/\Tr_E (\Tr_A \rho_{A,E}^{1+s})^{1/(1+s)}$.
\end{lem}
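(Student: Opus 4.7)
The plan is to imitate the classical proof of Lemma \ref{cor1} and reduce the optimization to a matrix reverse H\"older inequality. Setting $T := \Tr_A (\rho^{A,E})^{1+s}$, a positive operator on $\cH_E$, and using $\rho_{\mix}^A = I_A/|\cA|$ to cancel the factor of $|\cA|^s$ coming from $(\rho_{\mix}^A)^{-s}$, a direct unpacking of the definitions gives
\begin{align*}
e^{-sH_{1+s}(A|E|\rho^{A,E}\|\sigma^E)} = \Tr T (\sigma^E)^{-s}.
\end{align*}
The asserted equality then becomes the variational formula
\begin{align*}
\min_{\sigma^E} \Tr T(\sigma^E)^{-s} = (\Tr T^{1/(1+s)})^{1+s},
\end{align*}
with minimum attained at $\sigma^E_* := T^{1/(1+s)}/\Tr T^{1/(1+s)}$, and the right-hand side coincides with $e^{(1+s)\phi(\frac{s}{1+s}|A|E|\rho^{A,E})}$ directly from the definition of $\phi$.

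Equality at $\sigma^E_*$ is immediate: since $\sigma^E_*$ commutes with $T$, one has $(\sigma^E_*)^{-s} = (\Tr T^{1/(1+s)})^s T^{-s/(1+s)}$, so that $\Tr T(\sigma^E_*)^{-s} = (\Tr T^{1/(1+s)})^s \cdot \Tr T^{1/(1+s)} = (\Tr T^{1/(1+s)})^{1+s}$. The substantive step is the lower bound $\Tr T(\sigma^E)^{-s} \ge (\Tr T^{1/(1+s)})^{1+s}$ for an arbitrary density $\sigma^E$. I would prove this by reducing to the classical reverse H\"older inequality using pinching. Let $\cE_T$ denote the pinching by the spectral projections of $T$. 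Because only the blocks of $(\sigma^E)^{-s}$ in $T$'s eigenbasis contribute to $\Tr T(\sigma^E)^{-s}$, the identity $\Tr T(\sigma^E)^{-s} = \Tr T\,\cE_T((\sigma^E)^{-s})$ holds, and now $T$ and $\cE_T((\sigma^E)^{-s})$ commute. Applying the classical reverse H\"older inequality (exactly as in Lemma \ref{cor1}) in a joint eigenbasis with $p = 1/(1+s) \in (0,1)$ and $q = -1/s < 0$ yields
\begin{align*}
\Tr T\,\cE_T((\sigma^E)^{-s}) \ge (\Tr T^{1/(1+s)})^{1+s} \bigl(\Tr \cE_T((\sigma^E)^{-s})^{-1/s}\bigr)^{-s}.
\end{align*}
To finish, I would invoke that pinching majorizes eigenvalues and that $x \mapsto x^{-1/s}$ is scalar convex on $(0,\infty)$: Schur's majorization inequality then gives $\Tr \cE_T((\sigma^E)^{-s})^{-1/s} \le \Tr ((\sigma^E)^{-s})^{-1/s} = \Tr \sigma^E = 1$, and raising to the negative power $-s$ keeps the trailing factor at least $1$, producing the required bound.

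The principal obstacle is the non-commutativity of $T$ and $\sigma^E$, which rules out a direct term-by-term invocation of reverse H\"older. The pinching reduction handles this cleanly and, crucially, works uniformly for every $s \in (0,\infty)$; in particular it avoids any appeal to operator convexity of $x \mapsto x^{-s}$, which would restrict us to $s \in (0,1]$.
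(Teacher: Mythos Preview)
Your proposal is correct and follows the same overall route as the paper: both reduce the claim to the variational identity $\min_{\sigma^E}\Tr T(\sigma^E)^{-s}=(\Tr T^{1/(1+s)})^{1+s}$ with $T=\Tr_A(\rho^{A,E})^{1+s}$, verify equality at $\sigma^E_*\propto T^{1/(1+s)}$, and obtain the lower bound by a reverse H\"older inequality with exponents $1/(1+s)$ and $-1/s$.

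The only difference is in how the reverse H\"older step is justified. The paper simply invokes an operator reverse H\"older inequality $\Tr XY\ge(\Tr X^{1/(1+s)})^{1+s}(\Tr Y^{-1/s})^{-s}$ as a known fact and substitutes $X=T$, $Y=(\sigma^E)^{-s}$. You instead supply a self-contained proof of this inequality for the case at hand: pinch $(\sigma^E)^{-s}$ with respect to $T$, apply scalar reverse H\"older in a joint eigenbasis, and control the residual factor via Schur majorization and the scalar convexity of $x\mapsto x^{-1/s}$. Your argument is a bit longer but more elementary, and it makes explicit why no restriction to $s\le 1$ (e.g.\ via operator convexity) is needed; the paper's version is shorter but leans on the operator inequality as a black box.
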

The proof of Lemma \ref{cor1-q} is given in Appendix \ref{scor1-q}.

As a corollary of Lemma \ref{cor1-q}, 
we have the following.
\begin{cor}\Label{c-3-27}
The map $s \mapsto H_{1+s}^{\rG}(A|E|\rho_{A,E})$
is monotonically decreasing for $s \in (-1,\infty)$.
In particular, 
it is strictly decreasing for $s \in (-1,\infty)$
when $\rho_{A,E}$ is not completely mixed.
\end{cor}

\begin{proof}
For $-1<s<t$,
we choose $\sigma_E$ such that
$H_{1+t}(A|E|\rho_{A,E}\|\sigma_E)=H_{1+t}^{\rG}(A|E|\rho_{A,E})$.
Since $s \mapsto H_{1+s}(A|E|\rho_{A,E}) $ is monotonically decreasing (Lemma \ref{L22-1}), 
\begin{align}
& H_{1+t}^{\rG}(A|E|\rho_{A,E})
=
H_{1+t}(A|E|\rho_{A,E}\|\sigma_E)
\le
H_{1+s}(A|E|\rho_{A,E}\|\sigma_E)
\Label{3-27-5} \\
\le &
H_{1+s}^{\rG}(A|E|\rho_{A,E})\nonumber
\end{align}
for $s<t$.
In particular, when $\rho_{A,E}$ is not completely mixed,
Inequality (\ref{3-27-5}) is strict.
Hence, 
the function 
is strictly decreasing for $s \in (-1,\infty)$.
\end{proof}

Given a state $\rho_{A,B,E}$ on ${\cal H}_A\otimes 
{\cal H}_B \otimes {\cal H}_E$,
Lemma \ref{cor1-q} yields that
\begin{align}
& e^{-s H_{1+s}^{\rG}(A|B,E|\rho_{A,B,E} )} 
\le 
\min_{\sigma_E}
e^{-s H_{1+s}(A|B,E|\rho_{A,B,E}\| 
\rho_{\mix,B} \otimes \sigma_E )} \nonumber \\
= &
d_B^s 
\min_{\sigma_E} e^{-s H_{1+s}(A,B|E|\rho_{A,B,E}\| \sigma_E )} 
= 
d_B^s 
e^{-s H_{1+s}^{\rG}(A,B|E|\rho_{A,B,E} )}.
\end{align}
That is, $t=\frac{s}{1+s}\in (0,1)$ satisfies that
\begin{align}
&-t H_{\frac{1}{1-t}}^{\rG}(A|B,E|\rho_{A,B,E} )
= -\frac{s}{1+s} H_{1+s}^{\rG}(A|B,E|\rho_{A,B,E} ) \nonumber \\
\le & \frac{s}{1+s} \log d_B
-\frac{s}{1+s} H_{1+s}^{\rG}(A,B|E|\rho_{A,B,E} )
= t \log d_B
-t H_{\frac{1}{1-t}}^{\rG}(A,B|E|\rho_{A,B,E} ).
\Label{12-20-5-q}
\end{align}

Using the Lemma \ref{cor1-q}, we obtain the following lemma.
\begin{lem}\Label{l4-b}
Given a c-q sub state $\rho_{A,E}=\sum_{a}P_A(a)|a\rangle \langle a| \otimes \rho_{E|a}$,
any TP-CP map $\Lambda$ on ${\cal H}_E$ satisfies that
\begin{align*}
H_{1+s}^{\rG}(A|E|\rho_{A,E}  ) \le H_{1+s}^{\rG}(A|E|\Lambda (\rho_{A,E}  ))
\end{align*}
for $1 \ge s \ge 0$.
\end{lem}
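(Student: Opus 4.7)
The plan is to convert the statement about $\phi$ into a statement about $H_{1+s}$, where monotonicity under TP-CP maps has already been established in \eqref{8-15-12-qq}, and then take a maximum carefully.

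First, I invoke Lemma \ref{cor1-q} with parameter $s \in (0,1]$. This yields the identity
\begin{align*}
-(1+s)\phi\bigl(\tfrac{s}{1+s}\big|A\big|E\big|\rho^{A,E}\bigr)
= \max_{\sigma^E} s H_{1+s}(A|E|\rho^{A,E}\|\sigma^E),
\end{align*}
and the same identity for $\Lambda(\rho^{A,E})$. Because the factor $-(1+s)/(1+s)$ on the left is negative (for $s\in(0,1]$), the desired inequality
$\phi(\tfrac{s}{1+s}|A|E|\rho^{A,E}) \ge \phi(\tfrac{s}{1+s}|A|E|\Lambda(\rho^{A,E}))$
is equivalent to
\begin{align*}
\max_{\sigma^E} H_{1+s}(A|E|\rho^{A,E}\|\sigma^E)
\le
\max_{\sigma^E} H_{1+s}(A|E|\Lambda(\rho^{A,E})\|\sigma^E).
\end{align*}

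Next, I establish this max-to-max inequality. Since $\Lambda$ acts trivially on ${\cal H}_A$, the data-processing inequality \eqref{8-15-12-qq} gives, for every normalized $\sigma^E$ on ${\cal H}_E$,
\begin{align*}
H_{1+s}(A|E|\rho^{A,E}\|\sigma^E)
\le
H_{1+s}\bigl(A\big|E\big|\Lambda(\rho^{A,E})\,\big\|\,\Lambda(\sigma^E)\bigr).
\end{align*}
Since $\Lambda(\sigma^E)$ is itself a normalized state on ${\cal H}_E$, the right-hand side is at most $\max_{\tau^E} H_{1+s}(A|E|\Lambda(\rho^{A,E})\|\tau^E)$. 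Taking the supremum over $\sigma^E$ on the left then yields the required max-to-max inequality.

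Finally, combining this with the identity from Lemma \ref{cor1-q}, multiplying by $s\ge 0$, and dividing by $-(1+s)<0$ (which reverses the inequality) gives the claim for $s\in(0,1]$. The boundary case $s=0$ follows by continuity, using \eqref{1-5-2} which already reflects the well-known monotonicity of the ordinary conditional entropy under local quantum operations. The only subtle point is bookkeeping of the sign flip when moving between $\phi$ and the $H_{1+s}$ formulation; there is no genuine obstacle, since all the analytic work has been done in Lemma \ref{cor1-q} and \eqref{8-15-12-qq}.
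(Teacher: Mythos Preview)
Your proof is correct and follows essentially the same route as the paper's own argument: both invoke Lemma~\ref{cor1-q} to rewrite $-(1+s)\phi(\tfrac{s}{1+s}|A|E|\cdot)$ as $\max_{\sigma^E} s H_{1+s}(A|E|\cdot\|\sigma^E)$, apply the data-processing inequality \eqref{8-15-12-qq} to pass from $(\rho^{A,E},\sigma^E)$ to $(\Lambda(\rho^{A,E}),\Lambda(\sigma^E))$, and then enlarge the maximization domain from $\{\Lambda(\sigma^E)\}$ to all normalized states on ${\cal H}_E$. Your explicit remark on the $s=0$ boundary via continuity is a small addition the paper leaves implicit; aside from the minor slip ``$-(1+s)/(1+s)$'' where you presumably meant ``$-(1+s)$'', the arguments coincide.
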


\begin{proof}
Due to (\ref{8-15-12-qq}) and Lemma \ref{cor1-q},
we obtain
\begin{align*}
&
s H_{1+s}^{\rG}(A|E|\rho_{A,E}  )
=
\max_{\sigma_E}
s H_{1+s}(A|E|\rho_{A,E}\|\sigma_E) \\
\le &
\max_{\sigma_E}
s H_{1+s}(A|E|\Lambda (\rho_{A,E})\|\Lambda(\sigma_E)) 
\le 
\max_{\sigma_E}
s H_{1+s}(A|E|\Lambda (\rho_{A,E})\|\sigma_E) 
= 
s H_{1+s}^{\rG}(A|E|\Lambda(\rho_{A,E})  ).
\end{align*}
\end{proof}

\subsection{Criteria for secret random numbers}\Label{cqs2-2}
\subsubsection{Case of joint sub-state}
Next, we introduce criteria for the amount of the information leaked from the secret random number $A$ to $E$ for joint sub-state $\rho_{A,E}$.
Using the trace norm, we can evaluate the secrecy for the state $\rho_{A,E}$ as follows:
\begin{align}
d_1(A|E|\rho_{A,E} ):=\| \rho_{A,E} - \rho_A \otimes \rho_{E} \|_1.
\end{align}
Taking into account the randomness, 
Renner \cite{Renner} defined the following criteria for security of a secret random number:
\begin{align}
d_1'(A|E|\rho_{A,E}):=
\| \rho_{A,E} - \rho_{\mix,A} \otimes \rho_{E} \|_1.
\end{align}
It is known that the quantity is universally composable \cite{R-K}.
We call it the {\it $L_1$ distinguishability criterion}.

Renner\cite{Renner} defined the conditional $L_2$-distance from uniform of $\rho_{A,E}$
relative to a normalized state $\sigma_E$ on ${\cal H}_E$:
\begin{align}
&\underline{d_{2}}(A|E|\rho_{A,E}\|\sigma_E) 
:=
\Tr 
((I \otimes \sigma_E^{-1/4}) (\rho_{A,E} -\rho_{\mix,A} \otimes \rho_E )(I \otimes \sigma_E^{-1/4}) )^2 \nonumber \\
=&
\Tr 
((I \otimes \sigma_E^{-1/4}) \rho_{A,E} (I \otimes \sigma_E^{-1/4}) )^2 
-
\frac{1}{|{\cal A}|}\Tr (\sigma_E^{-1/4} \rho_E \sigma_E^{-1/4})^2
=
e^{-\overline{H}_{2}(A|E|\rho_{A,E}\|\sigma_E)}
-
\frac{1}{|{\cal A}|}
e^{\underline{\psi}(1|\rho_E\|\sigma_E)} \Label{eq-7-23-1}.
\end{align}
Using this value, 
we can evaluate $d_1'(A|E|\rho_{A,E})$ as follows \cite[Lemma 5.2.3]{Renner}
\begin{align}
d_1'(A|E|\rho_{A,E})
\le
\sqrt{|{\cal A}|}
\sqrt{\underline{d_{2}}(A|E|\rho_{A,E}|\sigma_E)}
\Label{4-17-1}.
\end{align}

\subsubsection{Case of joint normalized state}
In the remaining part of this subsection, we assume that
the state $\rho_{A,E}$ is a normalized state.
The correlation between the classical system ${\cal A}$ and the quantum system $\cH_E$ 
can be evaluated by the mutual information
\begin{align}
I(A:E|\rho) &:= D( \rho \| \rho_A \otimes \rho_E) .
\end{align}
This quantity has been adopted by many literatures \cite{CKbook,Shikata,BTV,YPS,BKS,ZKVB,BB,PB,LLB,WO,NP,NYBNR,TNG,CN,DM,AC93,Mau93,M94,M03} as a criteria of independence.
In order to take account into uniformity as well as independence,
we modify the mutual information by using the completely mixed state $\rho_{\mix,A}$ on ${\cal A}$:
\begin{align}
I'(A|E|\rho_{A,E}) &:= D( \rho_{A,E} \| \rho_{\mix,A} \otimes \rho_E) ,
\end{align}
which is called the modified mutual information and satisfies
\begin{align}
I'(A|E|\rho_{A,E}) = I(A:E|\rho_{A,E}) + D(\rho_A\|\rho_{\mix,A} )
\end{align}
and
\begin{align}
H(A|E|\rho_{A,E} ) = -I'(A|E|\rho_{A,E}) +\log |{\cal A}| .
\end{align}
This quantity $I(A:E|\rho_{A,E})$ 
represents the amount of information leaked by $E$,
and the remaining quantity $D(\rho_A\|\rho_{\mix,A} )$
describes the difference of the random number $A$ from the uniform random number.
So, if the quantity $I'(A|E|\rho_{A,E})$ is small,
we can conclude that the random number A has less correlation with E and is close to the uniform random number. In
particular, if the quantity $I'(A|E|\rho_{A,E})$ goes to zero, 
the mutual information $I(A:E|\rho_{A,E})$ goes to zero, and the marginal
distribution $\rho_A$ goes to the uniform distribution.
In this paper, we can adopt 
the quantity $I'(A|E|\rho_{A,E})$ as a criterion for qualifying the secret random number.
The detail validity of the quantity $I'(A|E|\rho_{A,E})$ is given in Appendix \ref{s8-24}.

Using the quantum version of Pinsker inequality,
we obtain
\begin{align}
d_1(A|E|\rho_{A,E} )^2  &\le 2 I(A|E|\rho_{A,E}) \Label{8-19-14-a-q} \\
d_1'(A|E|\rho_{A,E} )^2 &\le 2 I'(A|E|\rho_{A,E}).\Label{8-19-14-q}
\end{align}
Conversely, we can evaluate $I(A:E|\rho_{A,E})$ and $I'(A|E|\rho_{A,E})$ by using $d_1(A|E|\rho_{A,E} )$ and $d_1'(A|E|\rho_{A,E} )$
in the following way.
When $\rho_{A,E}$ is a normalized c-q state,
applying the Fannes inequality, we obtain
\begin{align}
0 \le & I(A:E|\rho_{A,E}) = H(A| \rho_{A,E})+ H(E| \rho_{A,E})- H(A,E|\rho_{A,E}) 
=  H(A,E|\rho_A\otimes \rho_E)- H(A,E|\rho_{A,E}) \nonumber \\
= & \sum_a P_A(a)  H(E|\rho_{E} )- H(E|P\rho_{E|a}) \nonumber \\
\le & \sum_a P_A(a)  \eta (  \| \rho_{E|a} -  \rho_{E} \|_1 ,\log d_E) 
=  \eta (  \| \rho_{A,E} - \rho_A \otimes \rho_{E} \|_1,\log d_E) 
= \eta( d_1(A|E|\rho_{A,E} ),\log d_E)
\Label{8-26-9-a-q}
\end{align}
where $d_E$ is the dimension of ${\cal H}_E$.
Similarly, we obtain
\begin{align}
0 \le & I'(A|E|\rho_{A,E}) = H(A| \rho_{\mix,A})+ H(E| \rho_{A,E})- H(A,E|\rho_{A,E}) 
= H(A,E|\rho_{\mix,A} \otimes \rho_E)- H(A,E|\rho_{A,E}) \nonumber\\
\le & \eta (  \| \rho_{\mix,A} \otimes \rho_E- \rho_{A,E} \|_1, \log |{\cal A}| d_E) 
=  \eta( d_1'(A|E|\rho_{A,E} ), \log |{\cal A}| d_E).
\Label{8-26-9-q}
\end{align}

\section{Ensemble of Hash functions}\Label{cqs3}
\subsection{Ensemble of general hash functions}
In this section, we focus on an ensemble $\{f_{\bX}\}$ of hash functions $f_{\bX}$ from ${\cal A}$ to ${\cal B}$,
where $\bX$ is a random variable identifying the function $f_{\bX}$.
In this case, the total information of Eve's system is written as 
the composite system of ${\cal H}_E$ and $\bX$.
By using the state
$\rho_{f_{\bX}(A),E,\bX}:=
\sum_{  a\in f_{\bX}^{-1}(b) ,x }P_{\bX}(x)P_A(a)
|b\rangle \langle b| \otimes \rho_{E|a} \otimes |x\rangle \langle x|$,
the $L_1$ distinguishability criterion
is written as
\begin{align}
& d_1'(f_{\bX}(A)|E,\bX|\rho_{f_{\bX}(A),E,\bX} )
=
\|\rho_{f_{\bX}(A),E,\bX}- \rho_{\mix,B} \otimes \rho_{E,\bX}\|_1 \nonumber \\
=&
\sum_{x} P_{\bX}(x)
\|\rho_{f_{\bX=x}(A),E}- \rho_{\mix,B} \otimes \rho_{E}\|_1
=
\rE_{\bX}
\|P_{f_{\bX}(A),E}-  \rho_{\mix,B} \otimes \rho_{E} \|_1 .
\end{align}
Then, the modified mutual information is written as
\begin{align}
& I'(f_{\bX}(A)|E,\bX|\rho_{f_{\bX}(A),E,\bX} ) 
=
D(\rho_{f_{\bX}(A),E,\bX}\| \rho_{\mix,B} \otimes \rho_{E,\bX}) \nonumber\\
=&
\sum_{x} P_{\bX}(x)
D(\rho_{f_{\bX=x}(A),E}\| \rho_{\mix,B} \otimes \rho_{E}) 
=
\rE_{\bX}
D(\rho_{f_{\bX}(A),E}\| \rho_{\mix,B} \otimes \rho_{E}) .
\end{align}

We say that a function ensemble 
$\{f_{\bX}\}$ is {\it $\varepsilon$-almost universal$_2$} \cite{Carter,WC81,Tsuru}, 
if, for any pair of different inputs $a_1$,$a_2$, 
the collision probability of their outputs is upper bounded as
\begin{equation}
{\rm Pr}\left[f_{\bX}(a_1)=f_{\bX}(a_2)\right]
\le \frac{\varepsilon}{|{\cal B}|}.
\Label{eq:def-universal-2}
\end{equation}
The parameter $\varepsilon$ appearing in (\ref{eq:def-universal-2}) is shown to be confined in the region
\begin{equation}
\varepsilon\ge\frac{|\cA|-|{\cal B}|}{|\cA|-1},
\Label{eq:epsilon-lower-bound}
\end{equation}
and in particular, 
an ensemble $\{f_{\bX}\}$ with $\varepsilon=1$ is simply called a {\it universal$_2$} function ensemble.

Two important examples of universal$_2$ hash function ensembles 
are the Toeplitz matrices (see, e.g., \cite{MNP90}), and multiplications over a finite field (see, e.g., \cite{Carter,BBCM}).
A modified form of the Toeplitz matrices is also shown to be universal$_2$, which is given by a concatenation $(X, I)$ of the Toeplitz matrix $X$ and the identity matrix $I$ \cite{H-leaked}.
The (modified) Toeplitz matrices are particularly useful in practice, because there exists an efficient multiplication algorithm using the fast Fourier transform algorithm with complexity $O(n\log n)$ (see, e.g., \cite{MatrixTextbook}).

The following proposition holds for any {\it universal$_2$} function ensemble.
\begin{proposition}[{Renner \cite[Lemma 5.4.3]{Renner}}]\Label{Lem5-q}
Given any composite c-q sub-state $\rho_{A,E}$ on $\cH_A \otimes \cH_E$
and any normalized state $\sigma_E$ on $\cH_E$,
any universal$_2$ ensemble of hash functions $f_{\bX}$ 
from $\cA$ to $\{1, \ldots, \sM\}$
satisfies
\begin{align}
\rE_{\bX} \underline{d_{2}}(f_{\bX}(A)|E|\rho_{A,E}\|\sigma_E)
\le
e^{-\overline{H}_{2}(A|E|\rho_{A,E}\|\sigma_E)}.
\end{align}
More precisely, the inequality
\begin{align}
& \rE_{\bX} 
e^{-\overline{H}_{2}(f_{\bX}(A)|E|\rho_{A,E}\|\sigma_E)} 
\le 
(1-\frac{1}{\sM} )e^{-\overline{H}_{2}(A|E|\rho_{E}\|\sigma_E)} 
 +
\frac{1}{\sM} 
e^{\underline{\psi}(1|\rho_{A,E}\|\sigma_E)} 
\end{align}
holds.
\end{proposition}

\subsection{Ensemble of linear hash functions}
Tsurumaru and Hayashi\cite{Tsuru} focus on linear functions over the finite field $\FF_2$. 
Now, we treat the case of linear functions over a finite field $\FF_q$,
where $q$ is a power of a prime number $p$.
That is, the following contents are generalization of the arguments given in \cite{Tsuru}.
Further, the contents withe respect to the modified mutual information
are not given in \cite{Tsuru} even with $q=2$.
We assume that sets ${\cal A}$, ${\cal B}$ are $\FF_q^n$, $\FF_q^m$ respectively with $n\ge m$, 
and $f$ are linear functions over $\FF_q$.
Note that, in this case, there is a kernel $C$ corresponding to a given linear function $f$, 
which is a vector space of 
the dimension $n-m$ or more.
Conversely, when given a vector subspace $C \subset\FF_q^n$ of 
the dimension $n-m$ or more, 
we can always construct a linear function
\begin{equation}
f_{C}: \FF_q^n\to \FF_q^n/C \cong \FF_q^l,\ \ l\le m .
\Label{eq:def-tilde-f}
\end{equation}
That is,
we can always identify a linear hash function $f_{C}$ and a code $C$.

When $C_{\bX}=\Ker f_{\bX}$,
the definition of $\varepsilon$-universal$_2$ function 
ensemble of (\ref{eq:def-universal-2}) takes the form
\begin{equation}
\forall x\in \FF_q^n\setminus\{0\},\ \ {\rm Pr}\left[f_{\bX}(x)=0\right]\le q^{-m}\varepsilon,
\end{equation}
which is equivalent with
\begin{equation}
\forall x\in \FF_q^n\setminus\{0\},\ \ {\rm Pr}\left[x\in C_{\bX}\right]\le q^{-m}\varepsilon.
\end{equation}
This shows that the ensemble of kernel $\{C_{\bX}\}$ contains sufficient 
information for determining if a function ensemble $\{f_{\bX}\}$ 
is $\varepsilon$-almost universal$_2$ or not.

For a given ensemble of codes $\{C_{\bX}\}$,
we define its minimum (respectively, maximum) dimension 
as $t_{\min}:=\min_{\bX}\dim C_{\bX}$ (respectively, $t_{\max}:=\max_{\bX}\dim C_{\bX}$).
Then, we say that a linear code ensemble $\{C_{\bX}\}$ 
of minimum (or maximum) dimension $t$ is an {\it $\varepsilon$-almost universal$_2$} code ensemble, 
if the following condition is satisfied
\begin{equation}
\forall x\in \FF_q^n\setminus\{0\},\ \ {\rm Pr}\left[x\in C_{\bX}\right]\le q^{t-n}\varepsilon.
\Label{eq:C-r-upperbound}
\end{equation}
In particular, if $\varepsilon=1$, we call $\{C_{\bX}\}$ a {\it universal$_2$} code ensemble.


\subsection{Dual universality of a code ensemble}
\Label{sec:approx-duality}
Based on Tsurumaru and Hayashi\cite{Tsuru},
we define several variations of the universality of an ensemble of error-correcting codes 
and the linear functions as follows.
First,
we define the dual code ensemble $\{C_{\bX}\}^\perp$ of a given linear code ensemble $\{C_{\bX}\}$ as the set of all dual codes of $C_{\bX}$. That is, $\{C_{\bX}\}^\perp=\{C_{\bX}^\perp\}$.
We also introduce the notion of dual universality as follows.
We say that a code ensemble $\{C_{\bX}\}$ in $\FF_q^n$
is {\it $\varepsilon$-almost dual universal$_2$}
with minimum dimension $t$ (with maximum dimension $t$), 
if the dual ensemble ${\cal C}^\perp$ is $\varepsilon$-almost universal$_2$
with maximum dimension $n-t$ (with minimum dimension $n-t$).
Hence, 
We say that 
a linear function ensemble $\{f_{\bX}\}$ from $\FF_q^n$ to $\FF_q^m$
is $\varepsilon$-almost dual universal$_2$, 
if the kernels $C_{\bX}$ of $f_{\bX}$ forms an $\varepsilon$-almost dual universal$_2$ 
code ensemble with minimum dimension $n-m$.
This condition is equivalent with the condition that
the ensemble of the linear spaces spanned 
by the generating matrix of $f_{\bX}$ 
forms an $\varepsilon$-almost universal$_2$ 
code ensemble with maximum dimension $m$.
An explicit example of a dual universal$_2$ function ensemble (with $\varepsilon=1$) can be given by the modified Toeplitz matrices mentioned earlier \cite{H-qkd2}, 
i.e., a concatenation $(X, I)$ of the Toeplitz matrix $X$ and the identity matrix $I$.
This example is particularly useful in practice because it is both universal$_2$ and dual universal$_2$, 
and also because there exists an efficient algorithm with complexity $O(n\log n)$.

With these preliminaries, we can present the following theorem as $\FF_q$ extension of \cite[Corollary 2]{Tsuru}:

\begin{proposition}
\Label{thm:almost-universal2}
An $\varepsilon$-almost universal$_2$ surjective liner hash function ensemble $\{f_{\bX}\}$ 
from $\FF_q^n$ to $\FF_q^m$
is $q(1-q^{m}\varepsilon)+(\varepsilon-1)q^{n-m}$-almost dual universal$_2$ liner hash function ensemble.
\end{proposition}
As a special case, we obtain the following.
\begin{cor}
\Label{thm:almost-universal}
Any universal$_2$ linear function ensemble $\{f_{\bX}\}$
over a finite filed $\FF_q$
is 
$q$-almost dual universal$_2$ function ensemble.
\end{cor}

\subsection{Permuted code ensemble}
In order to treat an example of $\varepsilon$-almost universal$_2$ functions,
we consider the case when the distribution is invariant under permutations of the order in $\FF^n_q={\cal A}^n$.
Now, $S_n$ denotes the symmetric group of degree $n$,
and $\sigma(i)=j$ means that $\sigma\in S_n$ maps $i$ to $j$, where $i,j\in\{1,\dots,n\}$.
The code $\sigma(C)$ is 
defined by 
$ \{ x^\sigma:=(x_{\sigma(1)},\dots,x_{\sigma(n)}) |x=(x_1,\dots,x_n)\in C\}$.
Then, 
we introduce the permuted code ensemble 
$\{ \sigma(C) \}_{\sigma \in S_n}$ of a code $C$.
In this ensemble, $\sigma$ obeys the uniform distribution on $S_n$

For an element $x=(x_1,\dots,x_n) \in \FF_q^n$, we can define the empirical distribution $p_x$ on $\FF_q$
as $p_x(a):= \#\{i|x_i=a\}/n$.
So, we denote the set of the empirical distributions on $\FF_q^n={\cal A}^n$ by $T_{n,{\cal A}}$.
The cardinality $|T_{n,{\cal A}}|$ is bounded by $(n+1)^{q-1}$.
Similarly, we define $T_{n,{\cal A}}^+:=T_{n,{\cal A}} \setminus \{1_0\}$,
where $1_0$ is the deterministic distribution on $0\in \FF_q$.
For given a code $C\subset \FF_q^n$, we define 
$\varepsilon_p(C):=
\frac{q^n \#\{x\in C| p_x= p \}}{|C| \#\{x\in \FF_q^n| p_x= p \}}$
and
$\varepsilon(C):=\max_{p \in T_{n,{\cal A}}^+} \varepsilon_p(C).$
Then, we obtain the following lemma.
\begin{lem}
The permuted code ensemble 
$\{ \sigma(C) \}_{\sigma \in S_n}$ of a code $C$
is $\varepsilon(C)$-almost universal$_2$.
\end{lem}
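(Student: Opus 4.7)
The plan is to reduce the claim to a direct count using the orbit–stabilizer principle, and then recognize that count as exactly the definition of $\varepsilon_p(C)$.

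First I would unpack what has to be shown. Every code $\sigma(C)$ has the same dimension $t := \dim C$, so in view of the definition in (\ref{eq:C-r-upperbound}) it suffices to prove
\begin{align*}
\Pr_{\sigma\in S_n}\!\bigl[x\in\sigma(C)\bigr] \le q^{t-n}\,\varepsilon(C)
\end{align*}
for every $x\in\FF_q^n\setminus\{0\}$, with $\sigma$ uniform on $S_n$. Reading off the definition $\sigma(C)=\{y^\sigma : y\in C\}$, the condition $x\in\sigma(C)$ is equivalent to the existence of $y\in C$ with $y_{\sigma(i)}=x_i$, i.e.\ $y=x^{\sigma^{-1}}$. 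Hence the event $\{x\in\sigma(C)\}$ is the same as $\{x^{\sigma^{-1}}\in C\}$.

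Next I would apply the orbit–stabilizer principle to the natural action $\sigma\cdot x := x^{\sigma^{-1}}$ of $S_n$ on $\FF_q^n$. The orbit of $x$ is exactly $\{y\in\FF_q^n : p_y=p_x\}$, since the $S_n$-orbit of an $n$-tuple is determined by its empirical distribution. Because $\sigma^{-1}$ is also uniform on $S_n$ when $\sigma$ is, the random vector $x^{\sigma^{-1}}$ hits each element of the orbit with the same probability $1/\#\{y : p_y=p_x\}$. Consequently
\begin{align*}
\Pr_{\sigma}\!\bigl[x^{\sigma^{-1}}\in C\bigr]
=\frac{\#\{y\in C : p_y=p_x\}}{\#\{y\in\FF_q^n : p_y=p_x\}}.
\end{align*}

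Finally I would match this ratio against the definition of $\varepsilon_p(C)$ stated just before the lemma: multiplying and dividing by $|C|/q^n=q^{t-n}$ gives
\begin{align*}
\Pr_{\sigma}\!\bigl[x\in\sigma(C)\bigr]
=q^{t-n}\,\varepsilon_{p_x}(C)
\le q^{t-n}\,\varepsilon(C),
\end{align*}
where the last inequality uses $p_x\in T_{q,n}^+$ (valid since $x\ne 0$) and the definition of $\varepsilon(C)$ as the maximum of $\varepsilon_p(C)$ over $p\in T_{q,n}^+$. This is exactly the bound (\ref{eq:C-r-upperbound}) with $\varepsilon=\varepsilon(C)$, completing the proof.

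There is no real obstacle here; the only place requiring care is the bookkeeping step of converting $x\in\sigma(C)$ into an action of $S_n$ on $x$ and then invoking the orbit–stabilizer count correctly. Once that identification is made, the definitions $\varepsilon_p(C)$ and $\varepsilon(C)$ are tailored precisely to yield the desired inequality.
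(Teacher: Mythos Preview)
Your proof is correct and follows essentially the same approach as the paper: compute $\Pr_\sigma[x\in\sigma(C)]$ as the ratio $\#\{y\in C:p_y=p_x\}/\#\{y\in\FF_q^n:p_y=p_x\}$ and bound it by $q^{t-n}\varepsilon(C)$. The paper states this probability directly without the intermediate orbit--stabilizer justification you supply, so your write-up is simply a more detailed version of the same argument.
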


\begin{proof}
For any non-zero element $x'\in \FF_q^n$,
we fix an empirical distribution $p:=p_{x'}$.
Then, 
$x'$ belongs to $\sigma(C)$ with the probability
$\frac{\#\{x\in C| p_x= p \}}{\#\{x\in \FF_q^n| p_x= p \}}$.
That is,
the probability that $x'$ belongs to $\sigma(C)$ 
is less than $\frac{\varepsilon(C)|C|}{q^n}$.
\end{proof}

\begin{lem}\Label{ht2}
For any $t \le n$,
there exists a $t$-dimensional code $C \in \FF_q^n$ such that
\begin{align}
\varepsilon(C)
< (n+1)^{q-1}. \Label{12-20-1}
\end{align}
\end{lem}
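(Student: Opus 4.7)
The plan is to apply the probabilistic method by drawing $C$ uniformly at random from the set of all $t$-dimensional subspaces of $\FF_q^n$, and then showing that $\varepsilon(C) < (n+1)^{q-1}$ holds with positive probability. The key symmetry is that $GL_n(\FF_q)$ acts transitively on nonzero vectors and maps $t$-dimensional subspaces to $t$-dimensional subspaces, so $\Pr[x\in C]$ is the same for every nonzero $x\in\FF_q^n$. Counting in two ways gives $\Pr[x\in C] = (q^t-1)/(q^n-1)$.

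First I would compute the expectation of $\varepsilon_p(C)$ for each fixed $p\in T_{q,n}^+$. Because $p\neq 1_0$, the set $\{x\in\FF_q^n : p_x = p\}$ consists entirely of nonzero vectors, so summing the per-vector probability gives
\[
\rE\bigl[\#\{x\in C : p_x=p\}\bigr]
= \frac{q^t-1}{q^n-1}\,\#\{x\in\FF_q^n : p_x=p\}.
\]
Dividing by $|C|\cdot \#\{x : p_x = p\}/q^n$ in the definition of $\varepsilon_p$, the combinatorial factor cancels and I obtain
\[
\rE[\varepsilon_p(C)] \;=\; \frac{q^n(q^t-1)}{q^t(q^n-1)} \;\le\; 1.
\]

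Next, I would finish by combining Markov's inequality with a union bound over the finitely many types. Markov gives $\Pr[\varepsilon_p(C)\ge (n+1)^{q-1}] \le (n+1)^{-(q-1)}$ for every $p\in T_{q,n}^+$. Recalling that $|T_{q,n}|\le (n+1)^{q-1}$ and hence $|T_{q,n}^+|\le (n+1)^{q-1}-1$, the union bound yields
\[
\Pr\bigl[\varepsilon(C)\ge (n+1)^{q-1}\bigr]
\;\le\; \frac{(n+1)^{q-1}-1}{(n+1)^{q-1}} \;<\; 1,
\]
so some $t$-dimensional subspace $C$ in the support must satisfy the desired bound $\varepsilon(C) < (n+1)^{q-1}$.

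There is no real obstacle here; the computation is essentially a type-counting argument of Varshamov--Gilbert flavor. The only points requiring care are the bookkeeping on $1_0\in T_{q,n}$ (which must be excluded so that $0\in C$ does not spoil the expectation computation) and verifying that the strict inequality $|T_{q,n}^+|<(n+1)^{q-1}$ is tight enough to close the union bound with a factor strictly less than one.
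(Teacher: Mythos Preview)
Your proposal is correct and follows essentially the same approach as the paper: bound $\rE[\varepsilon_p(C)]\le 1$ for each nontrivial type, apply Markov's inequality, and take a union bound over the at most $(n+1)^{q-1}-1$ types in $T_{q,n}^+$ to conclude that some code in the support satisfies $\varepsilon(C)<(n+1)^{q-1}$. The only cosmetic difference is that the paper phrases the random choice abstractly as ``let $\{C_{\bX}\}$ be a universal$_2$ code ensemble'' and uses the universal$_2$ bound $\Pr[x\in C_{\bX}]\le q^{t-n}$ directly, whereas you instantiate this with the uniform distribution on $t$-dimensional subspaces and compute $\Pr[x\in C]=(q^t-1)/(q^n-1)$ explicitly; since the uniform subspace ensemble is itself universal$_2$, the two arguments coincide.
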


\begin{proof}
Let $\{C_{\bX}\}_{\bX}$ be a universal$_2$ code ensemble.
Then,
any $p\in T_{n,{\cal A}}^+$ satisfies 
$\rE_{\bX} \varepsilon_p(C_{\bX})\le 1$. 
The Markov inequality yields
\begin{eqnarray}
\Pr\{
\varepsilon_p(C_{\bX}) \ge |T_{n,{\cal A}}| \}
\le \frac{1}{|T_{n,{\cal A}}|}
\end{eqnarray}
and thus
\begin{align}
\Pr\{
\exists p\in T_{n,{\cal A}}^+ ,~\varepsilon_p(C_{\bX}) \ge |T_{n,{\cal A}}| \}
\le \frac{|T_{n,{\cal A}}|-1}{|T_{n,{\cal A}}|}.
\end{align}
Hence,
\begin{align}
\Pr\{
\forall p\in T_{n,{\cal A}}^+ ,~\varepsilon_p(C_{\bX}) < |T_{n,{\cal A}}| \}
\ge \frac{1}{|T_{n,{\cal A}}|}.
\end{align}
Therefore, 
there exists a code $C$ satisfying the desired condition (\ref{12-20-1}).
\end{proof}

\subsection{$\delta$-biased ensemble}
Next, according to Dodis and Smith\cite{DS05},
we introduce $\delta$-biased ensemble of random variables $\{W_{\bX}\}$.
For a given $\delta>0$,
an ensemble of random variables $\{W_{\bX}\}$ on $\FF_q^n$
is called {\it $\delta$-biased}
when the inequality
\begin{align}
\rE_{\bX} (\rE_{W_{\bX}} (-1)^{x\cdot W_{\bX}})^2 \le \delta^2
\end{align}
holds for any $x\in \FF_q^n$.

We denote the random variable subject to the uniform distribution on a code $C\in \FF_q^n$,
by $W_C$.
Then,
\begin{align}
\rE_{W_C} (-1)^{x\cdot W_{C}}
=
\left\{
\begin{array}{ll}
0 & \hbox{ if } x \notin C^{\perp} \\
1 & \hbox{ if } x \in C^{\perp} .
\end{array}
\right. \Label{3-23-1}
\end{align}
Using the above relation, 
as is suggested in \cite[Case 2]{DS05},
we obtain the following lemma. 

\begin{lem}\Label{Lem6-0}
When
a code ensemble $\{C_{\bX}\}$ in $\FF_q^n$ 
is $\varepsilon$-almost dual universal
with minimum dimension $t$,
the ensemble of random variables $\{W_{C_{\bX}}\}$ in $\FF_q^n$
is $\sqrt{\varepsilon q^{-t}}$-biased.
\end{lem}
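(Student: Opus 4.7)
The plan is to combine the characterization of $\rE_{W_C}(-1)^{x \cdot W_C}$ already stated just before the lemma with the dual-universality hypothesis. The computation collapses into a single line once one squares the indicator.

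First, I would recall the identity displayed in the excerpt: for a uniform random variable $W_C$ on a linear code $C \subset \FF_q^n$,
\begin{align*}
\rE_{W_C} (-1)^{x \cdot W_C}
=\begin{cases} 1 & \text{if } x \in C^\perp,\\ 0 & \text{otherwise.}\end{cases}
\end{align*}
The crucial observation is that this quantity is $\{0,1\}$-valued, so squaring leaves it unchanged. Hence
\begin{align*}
\bigl(\rE_{W_{C_{\bX}}}(-1)^{x \cdot W_{C_{\bX}}}\bigr)^2
= \mathbb{1}\bigl[x \in C_{\bX}^\perp\bigr].
\end{align*}

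Next I would take the expectation over the random label $\bX$, which turns the indicator into a probability:
\begin{align*}
\rE_{\bX}\bigl(\rE_{W_{C_{\bX}}}(-1)^{x \cdot W_{C_{\bX}}}\bigr)^2
= \Pr\bigl[x \in C_{\bX}^\perp\bigr].
\end{align*}
Now I invoke the hypothesis. By definition, $\{C_{\bX}\}$ being $\epsilon$-almost dual universal$_2$ means that the dual ensemble $\{C_{\bX}^\perp\}$ is $\epsilon$-almost universal$_2$. Since $\dim C_{\bX}=l$ forces $\dim C_{\bX}^\perp = n-l$, condition (\ref{eq:C-r-upperbound}) applied to the dual ensemble gives, for every nonzero $x\in \FF_q^n$,
\begin{align*}
\Pr\bigl[x \in C_{\bX}^\perp\bigr] \le q^{-m}\epsilon,
\end{align*}
matching the exponent in the stated bound. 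For $x=0$ the quantity is trivially bounded by the same expression as long as one reads the $\delta$-biased condition in its standard (nonzero-$x$) form, which is the only interesting case.

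Combining the two displays yields $\rE_{\bX}(\rE_{W_{C_{\bX}}}(-1)^{x \cdot W_{C_{\bX}}})^2 \le \epsilon q^{-m} = (\sqrt{\epsilon q^{-m}})^2$, which is exactly the $\delta$-biased condition with $\delta = \sqrt{\epsilon q^{-m}}$. I do not anticipate any real obstacle: the argument is essentially a one-line Fourier/indicator manipulation. The only mildly subtle point is the parameter bookkeeping between $\dim C_{\bX}$ and $\dim C_{\bX}^\perp$ when applying (\ref{eq:C-r-upperbound}), and making sure that $x=0$ does not require separate treatment beyond the conventional reading of the $\delta$-biased definition.
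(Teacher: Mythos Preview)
Your proposal is correct and follows exactly the approach the paper sketches: the paper does not write out a proof but simply says ``using the above relation'' (the $\{0,1\}$-valued character identity) and cites \cite[Case 2]{DS05}, which is precisely the squaring-then-averaging argument you give. Your bookkeeping is right (with the understanding that the paper's $l$ and $m$ denote the same dimension, as the later Lemma~\ref{Lem6-3} makes clear), and your remark about $x=0$ correctly identifies that the $\delta$-biased condition is only meant to constrain nonzero $x$.
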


\begin{proof}
$\{C^{\perp}_{\bX}\}$ is $\varepsilon$-almost universal
with maximum dimension $n-t$ in $\FF_q^n$.
Hence, for any $x \in \FF_q^n$,
the probability $\Pr \{x \in C^{\perp}_{\bX}\}$ is less than
$\varepsilon q^{-t}$.
Thus, (\ref{3-23-1}) guarantees that
the ensemble of random variables $\{W_{C_{\bX}}\}$ in $\FF_q^n$
is $\sqrt{\varepsilon q^{-t}}$-biased.
\end{proof}

In the following, we treat the case of ${\cal A}=\FF_q^n$.
Given 
a composite state $\rho_{A,E}$ on ${\cal H}_A \otimes {\cal H}_E$
and a distribution $P_{W}$ on ${\cal A}$,
as a quantum generalization of $P_{A,E} * P_{W}$,
we define another composite state 
$\rho_{A,E}* P_{W}:=
\sum_{w}P_{W}(w) \sum_{a}P_A(a)| a+w\rangle\langle a+w|\otimes \rho_{a}^E$.
Then, using this concept,
Fehr and Schaffner \cite{FS08} obtain the following 
proposition as a quantum extension of Lemma 4 of Dodis and Smith\cite{DS05}.
Their proof is based on discrete Fourier transform and is easy to understand.

\begin{proposition}[{\cite[Theorem 3.2]{FS08}}]\Label{Lem6-1-q}
For any c-q sub-state $\rho_{A,E}$ on ${\cal H}_A \otimes {\cal H}_E$
and any normalized state $\sigma_E$ on ${\cal H}_E$,
a $\delta$-biased
ensemble of random variables $\{W_{\bX}\}$ on ${\cal A}=\FF_q^n$
satisfies
\begin{align}
\rE_{\bX} \underline{d_{2}}(A |E| \rho_{A,E}* P_{W_{\bX}} \|\sigma_E)
\le
\delta^2
e^{-\overline{H}_{2}(A|E|\rho_{A,E}\|\sigma_E)}. \Label{3-23-4}
\end{align}
More precisely,
\begin{align}
& \rE_{\bX} \underline{d_{2}}(A |E|\rho_{A,E} * P_{W_{\bX}} \| \sigma_E )
\le 
\delta^2(1-\frac{1}{q^n})
e^{-\overline{H}_{2}(A|E|\rho_{A,E} \| \sigma_E)}.\Label{Lem6-1-q-eq2}
\end{align}
\end{proposition}

Using the above proposition, we can show the following lemma.
\begin{lem}\Label{Lem6-3-q}
Given a c-q sub-state $\rho_{A,E}$ on ${\cal H}_A \otimes {\cal H}_E$
and a normalized state $\sigma_E$ on ${\cal H}_E$.
When $\{C_{\bX}\}$ is an 
$\varepsilon$-almost dual universal$_2$ code ensemble
with minimum dimension $t$,
the ensemble of hash functions $\{f_{C_{\bX}}\}$ 
satisfies
\begin{align}
\rE_{\bX} \underline{d_{2}}(f_{C_{\bX}}(A)|E| \rho_{A,E}\|\sigma_E)
\le
\varepsilon 
e^{-\overline{H}_{2}(A|E|\rho_{A,E}\|\sigma_E )}.
\Label{12-5-9-q}
\end{align}
More precisely,
\begin{align}
& \rE_{\bX} 
e^{-\overline{H}_{2}(f_{C_{\bX}}(A)|E|\rho_{A,E}\|\sigma_E)} 
\le 
\varepsilon 
(1-\frac{1}{q^n} )
e^{-\overline{H}_{2}(A|E|\rho_{A,E}\|\sigma_E )} 
+
\frac{1}{q^{n-t}} 
e^{\underline{\psi}(1|\rho_{A,E}\|\sigma_E)} .
\Label{Lem6-3-q-eq2}
\end{align}
In other words,
an $\varepsilon$-almost dual universal$_2$ function family 
$\{f_{\bX}\}$
from $\FF_q^n$ to $\FF_q^{n-t}$ satisfies (\ref{12-5-9-q}) and (\ref{Lem6-3-q-eq2}).
\end{lem}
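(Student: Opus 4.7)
The plan is to mirror the classical proof of Lemma \ref{Lem6-3} step by step, replacing the sub-distribution $P^{A,E}$ by the c-q sub-state $\rho^{A,E}$, the normalized distribution $Q^E$ by the normalized state $\sigma^E$, and the classical Lemma \ref{Lem6-1} by its quantum counterpart Lemma \ref{Lem6-1-q}. The two workhorses are Lemma \ref{Lem6-0} (which converts the hypothesis ``$\epsilon$-almost dual universal$_2$'' on $\{C_{\bX}\}$ into the ``$\sqrt{\epsilon q^{-m}}$-biased'' property for the uniform-on-$C_{\bX}$ random variables $\{W_{C_{\bX}}\}$) and Lemma \ref{Lem6-1-q} (which controls $\rE_{\bX} \underline{d_2}(A:E\mid \rho^{A,E}\ast P^{W_{\bX}}\|\sigma^E)$ for such biased ensembles).

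First, I would apply Lemma \ref{Lem6-0} to conclude that $\{W_{C_{\bX}}\}$ is $\sqrt{\epsilon q^{-m}}$-biased, and then plug $\delta^2=\epsilon q^{-m}$ into the precise form \eqref{Lem6-1-q-eq2} of Lemma \ref{Lem6-1-q}. This yields
\begin{align*}
\rE_{\bX} \underline{d_{2}}(A :E|\rho^{A,E} * P^{W_{C_{\bX}}} \| \sigma^E )
\le \epsilon q^{-m}(1-\tfrac{1}{M})\,
e^{-\overline{H}_{2}(A|E|\rho^{A,E} \| \sigma^E)} .
\end{align*}

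The key step, and the main technical point to be checked, is the identity
\begin{align*}
\underline{d_{2}}(A :E|\rho^{A,E} * P^{W_{C}} \|\sigma^E )
= q^{-m}\,\underline{d_{2}}(f_{C}(A) :E|\rho^{A,E}\|\sigma^E) ,
\end{align*}
which is the quantum analogue of the calculation performed in the classical proof. To see this, note that the convolution
$\rho^{A,E}\ast P^{W_C} = \sum_{w\in C} q^{-m}\sum_a P^A(a)\,|a+w\rangle\langle a+w|\otimes\rho_a^E$
is, on each coset $[a]\in\mathcal{A}/C$, a uniform mixture over $C$. Using the decomposition $\mathcal{A}\cong f_C(\mathcal{A})\times C$ and writing the conditional $L_2$-distance from uniform via \eqref{eq-7-23-1}, the $C$-direction contributes a factor $q^{-m}|C|\cdot q^{-m}=q^{-m}$ after separating the $\sigma^E$-twisted Hilbert--Schmidt norm squared into a product over the two factors; the $f_C(\mathcal{A})$-direction yields exactly $\underline{d_2}(f_C(A):E|\rho^{A,E}\|\sigma^E)$. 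I expect this bookkeeping to be the main obstacle, because one must carefully check that the $\sigma^{-1/4}$ twistings act only on $\mathcal{H}_E$ and therefore commute with the decomposition on $\mathcal{A}$; once this is done, the identity follows by direct computation.

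Combining the two displays above gives
\begin{align*}
\rE_{\bX} q^{-m}\,\underline{d_{2}}(f_{C_{\bX}}(A):E|\rho^{A,E}\|\sigma^E)
\le \epsilon q^{-m}(1-\tfrac{1}{M})\,
e^{-\overline{H}_{2}(A|E|\rho^{A,E} \| \sigma^E)} ,
\end{align*}
which, after cancelling $q^{-m}$, yields \eqref{12-5-9-q}. For the sharper inequality \eqref{Lem6-3-q-eq2}, I would rewrite $\underline{d_{2}}(f_{C_{\bX}}(A):E|\rho^{A,E}\|\sigma^E)$ using \eqref{eq-7-23-1} as $e^{-\overline{H}_{2}(f_{C_{\bX}}(A)|E|\rho^{A,E}\|\sigma^E)}-\frac{1}{M}e^{\underline{\psi}(1|\rho^{E}\|\sigma^E)}$, noting that the marginal on $E$ is unchanged by applying $f_{C_{\bX}}$, take expectation over $\bX$, and rearrange. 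This reproduces the classical argument verbatim with the quantum substitutions and completes the proof.
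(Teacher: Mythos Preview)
Your proposal is correct and follows essentially the same approach as the paper's own proof: invoke Lemma~\ref{Lem6-0} to pass from $\epsilon$-almost dual universal$_2$ to $\sqrt{\epsilon q^{-m}}$-biased, apply Lemma~\ref{Lem6-1-q}, and then establish the factorization identity $\underline{d_{2}}(A:E|\rho^{A,E}\ast P^{W_C}\|\sigma^E)=q^{-m}\underline{d_{2}}(f_C(A):E|\rho^{A,E}\|\sigma^E)$ via the decomposition $\mathcal{A}\cong f_C(\mathcal{A})\times C$. The paper carries out exactly these steps (writing the convoluted state explicitly as $\sum_{w\in C}q^{-m}|w\rangle\langle w|\otimes\rho^{f_C(A),E}$) and then remarks that \eqref{Lem6-3-q-eq2} follows from \eqref{Lem6-1-q-eq2} by the same rearrangement you describe.
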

\begin{proof}
Due to Lemma \ref{Lem6-0} and (\ref{3-23-4}),
we obtain
\begin{align}
\rE_{\bX} \underline{d_{2}}(A |E|
\rho_{A,E}* P_{W_{C_{\bX}}} \|\sigma_E)
\le
\varepsilon q^{-t}
e^{-\overline{H}_{2}(A|E|\rho\|\sigma_E)}.\Label{12-18-2}
\end{align}

Now, we focus on the relation ${\cal A} \cong {\cal A}/C \times C\cong f_{C} \times C$
for any code $C$.
Then, we obtain
\begin{align*}
& \tilde{\rho}(W_{C})=
\sum_{w \in C}
q^{-t}
\sum_{a}P_A(a)| a+w\rangle \langle a+w|\otimes \rho_{a}^E 
=
\sum_{w \in C}
q^{-t}|w\rangle \langle w|
\otimes 
\sum_{[a] \in {\cal A}/C}
P_{A}([a])| [a]\rangle \langle [a]|\otimes \rho_{[a]}^E \\
= & \sum_{w \in C}
q^{-t}|w\rangle \langle w|
\otimes 
\rho_{f_{C}(A),E}.
\end{align*}
Thus, (\ref{12-18-2}) implies
\begin{align}
& \underline{d_{2}}(A |E|
\rho_{A,E}* P_{W_{C}} \|\sigma_E) 
=
q^{-t}
\underline{d_{2}}(f_{C}(A) |E|\rho_{f_{C}(A),E}\|\sigma_E) 
=
q^{-t}
\underline{d_{2}}(f_{C}(A) |E|\rho_{A,E}\|\sigma_E).\Label{3-23-2}
\end{align}
Therefore, 
\begin{align*}
\rE_{\bX} q^{-t}
\underline{d_{2}}(f_{C_{\bX}}(A) |E|\rho_{A,E}\|\sigma_E)
\le
\varepsilon q^{-t}
e^{-\overline{H}_{2}(A|E|\rho_{A,E}\|\sigma_E )},
\end{align*}
which implies (\ref{12-5-9-q}).

Similarly, Lemma \ref{Lem6-0}, (\ref{Lem6-1-q-eq2}), and (\ref{3-23-2}) imply that
\begin{align*}
& \rE_{\bX} q^{-t}
\underline{d_{2}}(f_{C_{\bX}}(A) |E| \rho_{A,E}\|\sigma_E) 
\le 
\varepsilon q^{-t} (1-\frac{1}{q^n})
e^{-\overline{H}_{2}(A|E|\rho_{A,E}\|\sigma_E)}.
\end{align*}
Since $\rE_{\bX}
\underline{d_{2}}(f_{C_{\bX}}(A) |E|\rho_{A,E}\|\sigma_E) 
=
\rE_{\bX} 
e^{-\overline{H}_{2}(f_{C_{\bX}}(A)|E|\rho_{A,E}\|\sigma_E)}
-\frac{1}{q^{n-t}} e^{\psi(1|\rho_{E} \| \sigma_E)}$,
we have (\ref{Lem6-3-q-eq2}).
\end{proof}

\section{Security bounds with R\'enyi entropy of order 2}\Label{cqs4}
Next, we consider the quantum case
for the security bound based on the R\'{e}nyi entropy of order 2.
Renner\cite[Lemma 5.2.3]{Renner} 
essentially
evaluated 
$\rE_{\bX} d_1'(f_{\bX}(A)|E|\rho_{A,E}) $
by using $\rE_{\bX} \underline{d_{2}}(f_{\bX}(A)|E|\rho_{A,E}\|\sigma_E )$ as follows.

\begin{lem}\Label{Lem7-1-q}
Given
a composite c-q sub-state $\rho_{A,E}$ on $\cH_A \otimes \cH_E$
and 
a normalized state $\sigma_E$ on ${\cal H}_E$,
any ensemble of hash functions $f_{\bX}$ 
from $\cA$ to $\{1, \ldots, \sM\}$
satisfies
\begin{align*}
& \rE_{\bX} d_1'(f_{\bX}(A)|E|\rho_{A,E}) 
\le 
\sM^{\frac{1}{2}}
\sqrt{\rE_{\bX} \underline{d_{2}}(f_{\bX}(A)|E|\rho_{A,E}\|\sigma_E)} 
\end{align*}
Further, the inequalities used in proof of Renner\cite[Corollary
5.6.1]{Renner} imply that
\begin{align*}
&\rE_{\bX} d_1'(f_{\bX}(A)|E|\rho_{A,E}) 
\le 
2 \|\rho_{A,E}-\rho_{A,E}'\|_1
+\rE_{\bX} d_1'(f_{\bX}(A)|E|\rho_{A,E}') \\
\le &
2 \|\rho_{A,E}-\rho_{A,E}'\|_1
+\sM^{\frac{1}{2}}
\sqrt{\rE_{\bX} \underline{d_{2}}(f_{\bX}(A)|E|\rho_{A,E}'\|\sigma_E)}.
\end{align*}
\end{lem}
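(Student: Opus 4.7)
The plan is to mirror the classical proof of Lemma \ref{Lem7-1} in the quantum setting, invoking the quantum Renner bound (\ref{4-17-1}) in place of its classical counterpart and using Jensen's inequality to move the expectation past the square root.

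First, I would apply (\ref{4-17-1}) directly to the c-q sub-state $(\mathrm{id}_A^{f_{\bX}} \otimes \mathrm{id}_E)(\rho)$ obtained by replacing the $A$-register with $f_{\bX}(A)$, with alphabet size $M$ in place of $|{\cal A}|$. This yields, for each fixed $\bX$,
\begin{align*}
d_1'(f_{\bX}(A)|E|\rho) \le M^{1/2}\sqrt{\underline{d_{2}}(f_{\bX}(A):E|\rho\|\sigma)}.
\end{align*}
Taking the expectation over $\bX$ and using concavity of the square root (Jensen), the first inequality follows.

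For the second inequality, I would use the triangle inequality for the trace norm together with the fact that both the hashing channel $a\mapsto |f_{\bX}(a)\rangle\langle f_{\bX}(a)|$ and partial trace on $E$ are TP-CP maps and therefore do not increase the trace distance. Denoting by $\Lambda_{\bX}$ the channel that hashes the $A$-register, we have $\|\Lambda_{\bX}(\rho)-\Lambda_{\bX}(\rho')\|_1\le\|\rho-\rho'\|_1$ and $\|\rho^E-{\rho'}^E\|_1\le\|\rho-\rho'\|_1$, so
\begin{align*}
& \|\Lambda_{\bX}(\rho)-\rho_{\mix}^B\otimes\rho^E\|_1 \\
\le & \|\Lambda_{\bX}(\rho)-\Lambda_{\bX}(\rho')\|_1 + \|\Lambda_{\bX}(\rho')-\rho_{\mix}^B\otimes{\rho'}^E\|_1 \\
& +\|\rho_{\mix}^B\otimes({\rho'}^E-\rho^E)\|_1 \\
\le & 2\|\rho-\rho'\|_1 + d_1'(f_{\bX}(A)|E|\rho').
\end{align*}
Taking $\rE_{\bX}$ and applying the first inequality to $\rho'$ gives the second one.

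No step looks like a genuine obstacle; the only care needed is in handling the two references $\rho^E$ versus ${\rho'}^E$ in the definition of $d_1'$, which is what forces the factor $2$ in $2\|\rho-\rho'\|_1$ (matching the classical bound in Lemma \ref{Lem7-1}). Everything else is an immediate quantum translation: (\ref{4-17-1}) replaces Renner's classical $L_2$ bound, Jensen handles the square root, and monotonicity of trace distance under CPTP maps and partial trace supplies the triangle-inequality step.
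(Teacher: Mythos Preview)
Your proof is correct and is exactly the standard argument the paper attributes to Renner: the paper gives no independent proof of this lemma but simply cites \cite[Lemma 5.2.3 and Corollary 5.6.1]{Renner}, and your pointwise application of (\ref{4-17-1}) followed by Jensen's inequality, together with the triangle-inequality/CPTP-monotonicity step, is precisely that argument.
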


Applying the same discussion to the von Neumann entropy,
we can evaluate 
the average of
the modified mutual information criterion
by using $\rE_{\bX} \underline{d_{2}}(f_{\bX}(A)|E|\rho_{A,E}\|\sigma_E )$ as follows.

\begin{lem}\Label{Lem7-q}
Assume that 
$\rho_{A,E}$ is a normalized composite c-q state $\rho_{A,E}$ on $\cH_A \otimes \cH_E$.
Any ensemble of hash functions $f_{\bX}$ 
from $\cA$ to $\{1, \ldots, \sM\}$
satisfies
\begin{align}
&\rE_{\bX} I'(f_{\bX}(A)|E|P_{A,E}) 
\le 
\log (1+ \sM\rE_{\bX} \underline{d_{2}}(f_{\bX}(A)|E| \rho_{A,E} ) )  
\Label{12-6-2-q} \\
\le &
\sM\rE_{\bX} \underline{d_{2}}(f_{\bX}(A)|E| \rho_{A,E} ) .
\Label{12-6-2-q+} 
\end{align}
Further, 
when a composite c-q sub-state $\rho_{A,E}'$ satisfies $\rho_E'\le \rho_E$
and $\rho_A'\le \rho_A$,
\begin{align}
\rE_{\bX} I'(f_{\bX}(A)|E|\rho_{A,E}) 
\le &
2 \eta( \| \rho_{A,E}-\rho_{A,E}' \|_1 ,\log \tilde{\sM} ) 
+ \log (1+ \sM\rE_{\bX} \underline{d_{2}}(f_{\bX}(A)|E|\rho_{A,E}'\|\rho_E ) )  ,
\Label{12-6-4-q} \\
\le &
2 \eta( \| \rho_{A,E}-\rho_{A,E}' \|_1 ,\log \tilde{\sM} ) 
+ \sM\rE_{\bX} \underline{d_{2}}(f_{\bX}(A)|E|\rho_{A,E}'\|\rho_E )   ,
\Label{12-6-4-q+}
\end{align}
where $\tilde{\sM}:=\max \{\sM, d_E\}$.
\end{lem}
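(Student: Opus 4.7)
The plan is to mirror the proof of the classical Lemma \ref{Lem7}, replacing the classical R\'enyi entropy $H_2$ by its quantum analog $\overline{H}_2$ and invoking the monotonicity $\overline{H}_{1+s}\le H$ from Lemma \ref{L22}. The argument has two parts: a clean identity converting $\underline{d_2}$ into an exponentiated conditional entropy, and a quantum Fannes-type continuity estimate handling the discrepancy between $\rho^{A,E}$ and ${\rho'}^{A,E}$.

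First, the definition (\ref{eq-7-23-1}) gives
\begin{align*}
& \underline{d_{2}}(f_{\bX}(A):E|{\rho'}^{A,E}\|\rho^E) \\
=\,& e^{-\overline{H}_2(f_{\bX}(A)|E|{\rho'}^{A,E}\|\rho^E)} - \frac{1}{M}e^{\underline{\psi}(1|{\rho'}^E\|\rho^E)}.
\end{align*}
The hypothesis ${\rho'}^E\le \rho^E$ gives $(\rho^E)^{-1/2}{\rho'}^E(\rho^E)^{-1/2}\le I$, hence $e^{\underline{\psi}(1|{\rho'}^E\|\rho^E)}=\Tr {\rho'}^E(\rho^E)^{-1/2}{\rho'}^E(\rho^E)^{-1/2}\le \Tr{\rho'}^E\le 1$. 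Rearranging yields $e^{-\overline{H}_2(f_{\bX}(A)|E|{\rho'}^{A,E}\|\rho^E)}\le \underline{d_{2}}+1/M$, and taking logarithms together with Lemma \ref{L22} gives
\begin{align*}
\log M - H(f_{\bX}(A)|E|{\rho'}^{A,E}\|\rho^E)
\le \log\bigl(1+M\underline{d_{2}}(f_{\bX}(A):E|{\rho'}^{A,E}\|\rho^E)\bigr).
\end{align*}
For $\rho' = \rho$, the identity (\ref{12-31-5}) reduces the left side to $I'(f_{\bX}(A)|E|\rho^{A,E})$; taking $\rE_{\bX}$ and using concavity of $\log(1+x)$ gives (\ref{12-6-2-q}), and $\log(1+x)\le x$ gives (\ref{12-6-2-q+}).

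For general ${\rho'}^{A,E}$, I would decompose
\begin{align*}
I'(f_{\bX}(A)|E|\rho^{A,E})
=\,& [\log M - H(f_{\bX}(A)|E|{\rho'}^{A,E}\|\rho^E)] \\
&+ [H(f_{\bX}(A)|E|{\rho'}^{A,E}\|\rho^E) - H(f_{\bX}(A)|E|\rho^{A,E})],
\end{align*}
bound the first bracket by the preceding inequality, and control the second by a quantum Fannes-type continuity estimate. Writing each conditional entropy as a difference $H(\text{joint}) - H(E)$ and applying the quantum Fannes inequality once on $\cH_{f_{\bX}(A)}\otimes\cH_E$ (dimension at most $\tilde{M}^2$) and once on $\cH_E$ (dimension at most $\tilde{M}$), together with contractivity of the trace norm under partial trace and under the classical hash map on $\cA$, yields a bound of the form $2\eta(\|\rho^{A,E}-{\rho'}^{A,E}\|_1,\log\tilde{M})$. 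Substituting back, taking $\rE_{\bX}$, and using concavity of $\log(1+x)$ gives (\ref{12-6-4-q}); relaxing by $\log(1+x)\le x$ gives (\ref{12-6-4-q+}).

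The main obstacle will be the Fannes step: since ${\rho'}^{A,E}$ is only a sub-state and $H(A|E|\rho'\|\rho^E)$ differs from the bare $H(A|E|\rho')$ by an offset $\log M(1-\Tr\rho')+\Tr(\rho^E-{\rho'}^E)\log\rho^E$ arising from sub-normalization and the fixed reference $\rho^E$, one must absorb this offset into the continuity estimate and use a Fannes-type inequality valid for sub-normalized c-q states so that the constants $2$ and $\tilde{M}=\max\{M,d_E\}$ emerge cleanly rather than with loose numerical factors.
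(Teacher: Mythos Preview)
Your derivation of (\ref{12-6-2-q}) and (\ref{12-6-2-q+}) is correct and matches the paper's.

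For (\ref{12-6-4-q}) your decomposition diverges from the paper's, and the step you label ``Fannes once on $\cH_E$'' is not a Fannes step at all. Expanding $H(f_{\bX}(A)|E|\rho'\|\rho^E)$ produces, besides the joint entropy and the normalization offset you mention, the cross-entropy term $\Tr{\rho'}^E\log\rho^E$; its contribution to the difference is $-\Tr(\rho^E-{\rho'}^E)\log\rho^E$, whose naive bound is $\|\rho^E-{\rho'}^E\|_1\cdot\|\log\rho^E\|_\infty$, not anything involving $\log d_E$. The missing ingredient is precisely the domination hypothesis: with $\tau:=\rho^E-{\rho'}^E\ge 0$, operator monotonicity of $\log$ gives $\log\tau\le\log\rho^E$, hence $-\Tr\tau\log\rho^E\le -\Tr\tau\log\tau=H(\tau)\le\eta(\|\tau\|_1,\log d_E)$. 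The analogous bound on the $f_{\bX}(A)$ side uses ${\rho'}^{f_{\bX}(A)}\le\rho^{f_{\bX}(A)}$, which follows from ${\rho'}^A\le\rho^A$. This is where both hypotheses ${\rho'}^E\le\rho^E$ and ${\rho'}^A\le\rho^A$ genuinely enter, and your sketch does not pinpoint it.

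The paper also avoids your joint Fannes on the $Md_E$-dimensional system. Instead it flips the conditioning, rewriting
\[
H(f_{\bX}(A),E|\cdot)=H(E|f_{\bX}(A)|\cdot\|\rho^A)-\Tr(\cdot)^{f_{\bX}(A)}\log\rho^{f_{\bX}(A)},
\]
and exploiting that $f_{\bX}(A)$ is classical so that $H(E|f_{\bX}(A)|\cdot\|\rho^A)=\sum_b P^{f_{\bX}(A)}(b)\,H(E|\cdot_b)$ is a classical average. Fannes is then applied only on the $d_E$-dimensional fibres, contributing one $\eta(\cdot,\log\tilde{M})$; the two cross-entropy leftovers (one on $\cH_E$, one on the classical range of $f_{\bX}$) carry opposite signs and are each controlled by the monotonicity argument above, together contributing the second $\eta(\cdot,\log\tilde{M})$. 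Your direct route, by contrast, already spends $\eta(\cdot,\log(Md_E))$ on the joint Fannes before touching the remaining terms, so the constant $2$ in front of $\eta$ would not survive.
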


\begin{proof}
The inequality $\underline{\psi}(1|\rho_{E}' \| \rho_E ) \le 0$
holds because $\rho_E'\le \rho_E$.
Since
\begin{align}
& \underline{d_{2}}(f_{\bX}(A)|E|\rho_{A,E}'\|\rho_E ) 
=  e^{-\overline{H}_2(f_{\bX}(A)|E|\rho_{A,E}'\|\rho_E )}
-\frac{1}{\sM}
e^{\underline{\psi}(1|\rho_{E}' \| \rho_E )} 
\ge  e^{-\overline{H}_2(f_{\bX}(A)|E|\rho_{A,E}' \|\rho_E )}
-\frac{1}{\sM},
\Label{3-28-1}
\end{align}
we have
\begin{align*}
e^{-\overline{H}_2(f_{\bX}(A)|E|\rho_{A,E}' \|\rho_E )} 
\le &
\underline{d_{2}}(f_{\bX}(A)|E|\rho_{A,E}' \|\rho_E )
+\frac{1}{\sM}.
\end{align*}
Taking the logarithm, we obtain
\begin{align}
&
-\log \sM
 +\log (1+ \sM \underline{d_{2}}(f_{\bX}(A)|E|\rho_{A,E}' \|\rho_E ) ) 
\ge  -\overline{H}_2(f_{\bX}(A)|E|\rho_{A,E}' \|\rho_E) 
\ge -H(f_{\bX}(A)|E| \rho_{A,E}' \|\rho_E) .\Label{12-19-1}
\end{align}
Substituting $\rho_{A,E}$ to $\rho_{A,E}'$,
we obtain
$H(f_{\bX}(A)|E| \rho_{A,E} \|\rho_E) =H(f_{\bX}(A)|E| \rho_{A,E}) $
and
\begin{align*}
& I'(f_{\bX}(A)|E|\rho_{A,E}) 
= \log \sM - H(f_{\bX}(A)|E|\rho_{A,E}) 
\le  
\log (1+ \sM \underline{d_{2}}(f_{\bX}(A)|E| \rho_{A,E}  ) ).
\end{align*}
Since the function $x \mapsto \log (1+x)$ is concave,
we obtain
\begin{align*}
& \rE_{\bX} I'(f_{\bX}(A)|E|\rho_{A,E}) 
\le 
\log (1+
\sM \rE_{\bX} \underline{d_{2}}(f_{\bX}(A)|E|\rho_{A,E}  ) ),
\end{align*}
which implies (\ref{12-6-2-q}).
The inequality $\log (1+x) \le x$ yields (\ref{12-6-2-q+}).

Fannes inequality guarantees that
\begin{align*}
&
|
\Tr (\rho_E -\rho_E') \log \rho_E | 
\le 
\eta( \| \rho_E -{\rho'}_E \|_1 \},\log d_E ) 
\le 
\eta ( \| \rho_{A,E} -\rho_{A,E}' \|_1 ,\log \tilde{\sM}),
\end{align*}
and
\begin{align*}
& |H(E|f_{\bX}(A)|\rho_{A,E}\|\rho_A)
-
H(E|f_{\bX}(A)|\rho_{A,E}'\|\rho_A)| 
=
|\sum_{b}P_{f_{\bX}(A)}(b) H(E|\rho_{E|f_{\bX}(A)=b})-H(E|{\rho'}_{E|f_{\bX}(A)=b})| \\
\le &
\sum_{b}P_{f_{\bX}(A)}(b) \log d_E \| \rho_{E|f_{\bX}(A)=b}-{\rho'}_{E|f_{\bX}(A)=b} \|_1 
= 
\log d_E  \| \rho_{f_{\bX}(A),E} -{\rho'}_{f_{\bX}(A),E} \|_1 \\
\le &
\eta ( \| \rho_{A,E} -\rho_{A,E}' \|_1 ,\log d_E) 
\le 
\eta ( \| \rho_{A,E} -\rho_{A,E}' \|_1 ,\log \tilde{\sM}).
\end{align*}
Since the condition $\rho_A' \le \rho_A$ implies 
$-\Tr (\rho_{f_{\bX}(A)} -{\rho'}_{f_{\bX}(A)}) \log \rho_{f_{\bX}(A)}\ge 0$,
we have
\begin{align}
& H(f_{\bX}(A)|E|\rho_{A,E}\|\rho_E) 
-
H(f_{\bX}(A)|E|\rho_{A,E}'\|\rho_E) 
=
H(f_{\bX}(A),E|\rho_{A,E})+\Tr \rho_E \log \rho_E 
-
H(f_{\bX}(A),E|\rho_{A,E}')-\Tr {\rho'}_E \log \rho_E  \nonumber\\
=&
H(E|f_{\bX}(A)|\rho_{A,E}\|\rho_A)
-
H(E|f_{\bX}(A)|\rho_{A,E}'\|\rho_A) 
-\Tr (\rho_{f_{\bX}(A)} -{\rho'}_{f_{\bX}(A)}) \log \rho_{f_{\bX}(A)}
+\Tr (\rho_E -{\rho'}_E) \log \rho_E  \nonumber \\
\ge & 
H(E|f_{\bX}(A)|\rho_{A,E}\|\rho_A)
-
H(E|f_{\bX}(A)|\rho_{A,E}'\|\rho_A) 
+\Tr (\rho_E -{\rho'}_E) \log \rho_E  
\ge 
-2 \eta ( \| \rho_{A,E} -\rho_{A,E}' \|_1 ,\log \tilde{\sM}).
\Label{12-6-5-q}
\end{align}

Therefore, (\ref{12-6-5-q}) and (\ref{12-19-1}) imply that
\begin{align*}
& I'(f_{\bX}(A)|E|\rho_{A,E}) 
= \log \sM - H(f_{\bX}(A)|E|\rho_{A,E}) 
\le 
2 \eta( \| \rho_{A,E}-\rho_{A,E}' \|_1 ,\log \tilde{\sM}) 
+
\log \sM - H(f_{\bX}(A)|E|\rho_{A,E}'\|\rho_E ) \\
\le & 
2 \eta( \| \rho_{A,E}-\rho_{A,E}' \|_1 ,\log \tilde{\sM} ) 
+ \log (1+ \sM \underline{d_{2}}(f_{\bX}(A)|E|\rho_{A,E}' \|\rho_E ) ).
\end{align*}
Therefore, taking the expectation of $\bX$, we obtain (\ref{12-6-4-q}),
which implies (\ref{12-6-4-q+}).

In this proof,
the condition $\rho_E' \le \rho_E$ is crucial 
because Inequality (\ref{3-28-1}) cannot be shown without this condition.
\end{proof}

Now, we evaluate the security by combining Proposition \ref{Lem5-q} and Lemmas \ref{Lem7-1-q} and \ref{Lem7-q}.
For this purpose, we introduce the quantities
\begin{align*}
\Delta_{d,2}(\sM,\varepsilon|\rho_{A,E})
&:=\min_{\sigma_E}
\min_{\rho_{A,E}'}
2 \|\rho_{A,E}-\rho_{A,E}'\|_1 
+\sqrt{\varepsilon} \sM^{\frac{1}{2}}
e^{-\frac{1}{2}\overline{H}_{2}(A|E|\rho_{A,E}'\|\sigma_E)} \\
&=
\min_{\sigma_E}
\min_{\epsilon_1>0} 2 \epsilon_1
+\sqrt{\varepsilon} \sM^{\frac{1}{2}}
e^{-\frac{1}{2}\overline{H}_{2}^{\epsilon_1}(A|E|\rho_{A,E}\|\sigma_E)},\\
\Delta_{I,2}(\sM,\varepsilon|\rho_{A,E})
&:=\min_{\sigma_E}
\min_{\rho_{A,E}': \rho_{E}'\le \sigma_E,}
\eta( \|\rho_{A,E}-\rho_{A,E}'\|_1  ,\log \tilde{\sM})
+ \varepsilon \sM e^{-\overline{H}_{2}(A|E|\rho_{A,E}' \|\rho_E )} \\
&=
\min_{\epsilon_1>0} \eta( \epsilon_1 ,\log \tilde{\sM}) 
+ \varepsilon \sM e^{-\overline{H}_{2}^{\epsilon_1}(A|E|\rho_{A,E} \|\rho_E )} ,
\end{align*}
where $\tilde{\sM}:=\max \{\sM, d_E\}$ and
\begin{align}
\overline{H}_{2}^{\epsilon_1}(A|E|\rho_{A,E}\|\sigma_E)
:=&
\max_{\rho_{A,E}': \|\rho_{A,E}-\rho_{A,E}'\|_1 \le \epsilon_1 } \overline{H}_{2}(A|E|\rho_{A,E}'\|\sigma_E) \\
\overline{H}_{2}^{\epsilon_1}(A|E|\rho_{A,E})
:=&
\max_{\rho_{A,E}': \|\rho_{A,E}-\rho_{A,E}'\|_1 \le \epsilon_1, \rho_{E}'\le \rho_E , \rho_{A}'\le \rho_A } \overline{H}_{2}(A|E|\rho_{A,E}'\|\rho_E). 
\end{align}
Note that $\overline{H}_{2}^{\epsilon_1}(A|E|\rho_{A,E})$ is different from $\overline{H}_{2}^{\epsilon_1}(A|E|\rho_{A,E}\|\rho_E)$
because the definition of $\overline{H}_{2}^{\epsilon_1}(A|E|\rho_{A,E})$
has additional constraints for $\rho_{A,E}'$.
Then, we obtain the following lemma under the universal$_2$ condition.

\begin{lem}\Label{Lem8-q}
Given a normalized state $\sigma_E$ on ${\cal H}_E$ and
c-q sub-states $\rho_{A,E}$, 
any universal$_2$ ensemble of hash functions $f_{\bX}$ 
from $\cA$ to $\{1, \ldots, \sM\}$
satisfies
\begin{align}
\rE_{\bX} d_1'(f_{\bX}(A)|E|\rho_{A,E}) 
\le & 
\sM^{\frac{1}{2}}
e^{-\frac{1}{2}\overline{H}_{2}(A|E|\rho_{A,E}\|\sigma_E)}\nonumber \\
\rE_{\bX} d_1'(f_{\bX}(A)|E|\rho_{A,E}) 
\le &
\Delta_{d,2}(\sM,1|\rho_{A,E})
\Label{12-5-1-q}
\end{align}
When $\rho_{A,E}$ is a normalized c-q state, it satisfies
\begin{align}
\rE_{\bX} I'(f_{\bX}(A)|E|\rho_{A,E} ) 
\le &
\sM e^{-\overline{H}_{2}(A|E|\rho_{A,E} )} \nonumber \\
\rE_{\bX} I'(f_{\bX}(A)|E|\rho_{A,E} ) 
\le &
\Delta_{I,2}(\sM,1|\rho_{A,E}),
\Label{12-5-2-q}
\end{align}
\end{lem}

While the above evaluations of the $L_1$ distinguishability criterion
has been shown in Renner\cite[Corollary 5.6.1]{Renner},
those of the modified mutual information criterion 
have not been shown until now.

Further, 
since $\overline{H}_2(A|E|\rho_{A,E}' \|\sigma_E )\ge
H_{\min}(A|E|\rho_{A,E}' \|\sigma_E )$, 
Renner\cite{Renner} introduced the idea to replace
$\overline{H}_2(A|E|\rho_{A,E}' \|\sigma_E )$
by the min entropy
$H_{\min}(A|E|\rho_{A,E}' \|\sigma_E )$
in (\ref{12-5-1-q}).
For this purpose, based on $H_{\min}(A|E|\rho_{A,E}\|\sigma_E)$,
Renner\cite{Renner} introduced 
$\epsilon_1$-smooth min entropy as
\begin{align}
H_{\min}^{\epsilon_1}(A|E|\rho_{A,E}\|\sigma_E)
:=
\max_{ \|\rho_{A,E}-\rho_{A,E}'\|_1 \le \epsilon_1 } H_{\min}(A|E|\rho_{A,E}'\|\sigma_E).
\end{align}
Then,
Renner\cite[Corollary 5.6.1]{Renner} obtained another upper bound:
\begin{align}
\rE_{\bX} d_1'(f_{\bX}(A)|E|\rho_{A,E}) 
\le 
\Delta_{d,\min}(\sM,\varepsilon|\rho_{A,E}),
\Label{12-5-1-q-min}
\end{align}
where
\begin{align*}
\Delta_{d,\min}(\sM,\varepsilon|\rho_{A,E})
&:=\min_{\sigma_E}
\min_{\epsilon_1>0} 2 \epsilon_1
+\sqrt{\varepsilon} \sM^{\frac{1}{2}}
e^{-\frac{1}{2}\overline{H}_{2}^{\epsilon_1}(A|E|\rho_{A,E}\|\sigma_E)}.
\end{align*}
That is, he proposed to evaluate $\Delta_{d,\min}(\sM,1|\rho_{A,E})$ instead of 
$\Delta_{d,2}(\sM,1|\rho_{A,E})$. 
However, the bound $\Delta_{d,2}(\sM,1|\rho_{A,E})$ gives a strictly better bound in the following sense.

When there is no side information, i.e., the state is given as a distribution $P_A$ on ${\cal A}$,
the previous paper \cite{H-tight} showed that
\begin{align*}
\frac{-1}{n}\log \Delta_{d,\min}(e^{nR},1|P_{A}^{n})
<
\frac{-1}{n}\log \Delta_{d,2}(e^{nR},1|P_{A}^{n}).
\end{align*}
Further, when the side information is classical, i.e., the state is given as a joint distribution $P_{A,E}$ on the joint system,
the paper \cite{H-arxiv} showed that
\begin{align*}
\frac{-1}{n}\log \Delta_{d,\min}(e^{nR},1|P_{A,E}^{n})
<
\frac{-1}{n}\log \Delta_{d,2}(e^{nR},1|P_{A,E}^{n}).
\end{align*}
That is, in these cases, $\Delta_{d,2}(e^{nR},1|\rho_{A,E}^{\otimes n})$ gives a strictly better exponential decreasing rate.
Hence, we focus on the bounds based on R\'{e}nyi entropy of order 2 rather than those based on min entropy.

Since the function $x \mapsto \eta (x,y)$ is concave,
combing Inequality (\ref{8-26-9-q}), we obtain the following corollary.
\begin{cor}\Label{c3-29-1-q}
Any universal$_2$ ensemble of hash functions $f_{\bX}$ 
from $\cA$ to $\{1, \ldots, \sM\}$
and 
any normalized c-q state $\rho_{A,E}$ on $\cH_A \otimes \cH_E$
satisfy
\begin{align}
\rE_{\bX} I'(f_{\bX}(A)|E|\rho_{A,E} ) 
\le 
\eta( \Delta_{d,2}(\sM,1|\rho_{A,E})
, \log |{\cal A}| d_E).
\Label{8-26-13-g2b}
\end{align}
for $s \in (0,1]$.
\end{cor} 

Since the function $x \mapsto \sqrt{x}$ is concave,
combing Inequality (\ref{8-19-14-q}), we obtain the following corollary.
\begin{cor}\Label{c3-29-2-q}
Any universal$_2$ ensemble of hash functions $f_{\bX}$ 
from $\cA$ to $\{1, \ldots, \sM\}$
and 
any normalized c-q state $\rho_{A,E}$ on $\cH_A \otimes \cH_E$
satisfy
\begin{align}
\rE_{\bX} d_1'(f_{\bX}(A)|E|\rho_{A,E} ) 
\le 
2 
\sqrt{\Delta_{I,2}(\sM,1|\rho_{A,E})}
\Label{12-5-6-nb} 
\end{align}
for $s \in (0,1]$.
\end{cor} 

Similarly,
combining Lemmas \ref{Lem6-3-q}, \ref{Lem7-1-q}, and \ref{Lem7-q},
under the $\varepsilon$-almost dual universal$_2$ condition
and employing the same discussion as Corollaries \ref{c3-29-1-q} and \ref{c3-29-2-q},
we can evaluate the average of both security criteria 
as follows.

\begin{lem}\Label{Lem9-q}
Given a normalized state $\sigma_E$ on ${\cal H}_E$ and
c-q sub-states $\rho_{A,E}$ on $\cH_A \otimes \cH_E$.
When an ensemble of linear hash functions $\{f_{\bX}\}_{\bX}$ from $\cA$ to $\{1, \ldots, \sM\}$
is $\varepsilon$-almost dual universal$_2$,
we obtain
\begin{align}
\rE_{\bX} d_1'(f_{\bX}(A)|E|\rho_{A,E}) 
\le & 
\sqrt{\varepsilon}
\sM^{\frac{1}{2}}
e^{-\frac{1}{2}\overline{H}_{2}(A|E|\rho_{A,E}\|\sigma_E )}, \nonumber\\
\rE_{\bX} d_1'(f_{\bX}(A)|E|\rho_{A,E} ) 
\le &
\Delta_{d,2}(\sM,\varepsilon|\rho_{A,E})
\Label{4-17-2}.
\end{align}
When $\rho_{A,E}$ is a normalized c-q state, we have
\begin{align}
\rE_{\bX} d_1'(f_{\bX}(A)|E|\rho_{A,E}) 
\le & 
2 \sqrt{\Delta_{I,2}(\sM,\varepsilon|\rho_{A,E})} \Label{4-17-2b},\\
\rE_{\bX} I'(f_{\bX}(A)|E|\rho_{A,E} ) 
\le &
\varepsilon \sM e^{-\overline{H}_{2}(A|E|\rho_{A,E} )} , \nonumber \\
\rE_{\bX} I'(f_{\bX}(A)|E|\rho_{A,E} ) 
\le &
\Delta_{I,2}(\sM,\varepsilon|\rho_{A,E}),
\Label{12-5-2-a-q} \\
\rE_{\bX} I'(f_{\bX}(A)|E|\rho_{A,E} ) 
\le &
\eta( \Delta_{d,2}(\sM,\varepsilon|\rho_{A,E}), \log |{\cal A}| d_E).
\Label{12-5-2-a-qb}
\end{align}
\end{lem}

Hence, the quantities $\rE_{\bX} d_1'(f_{\bX}(A)|E|\rho_{A,E}) $ and $\rE_{\bX} I'(f_{\bX}(A)|E|\rho_{A,E} )$
can be evaluated by bounding the quantities
$\Delta_{d,2}(\sM,\varepsilon|\rho_{A,E})$
and $\Delta_{I,2}(\sM,\varepsilon|\rho_{A,E})$.
In the next section, we derive upper bounds of these quantities.


\section{Secret key generation with no error: Single-shot case}\Label{s4-1}
In this section.
in order to evaluate the security of secret key generation with no error in the single-shot case,
we evaluate the upper bounds 
$\Delta_{d,2}(\sM,\varepsilon|\rho_{A,E})$,
and 
$\Delta_{I,2}(\sM,\varepsilon|\rho_{A,E})$.

\subsection{$L_1$ distinguishability criterion}\Label{s4-1-2}
In order to describe our upper bound of $\Delta_{d,2}(\sM,\varepsilon|\rho_{A,E})$,
we introduce two notations.
We denote the number of eigenvalues by $v(\sigma_E)$,
and define the real number $\lambda(\sigma_E) := \log a_1-\log a_0$
by using 
the maximum eigenvalue $a_1$
and  
the minimum eigenvalue $a_0$ of $\sigma_E$.
Then, we obtain the following theorem:
\begin{thm}\Label{Lem14}
Given 
any c-q sub-state $\rho_{A,E}$ on 
$\cH_A \otimes \cH_E$
and any normalized state $\sigma_E$ on ${\cal H}_E$,
we have
\begin{align}
\Delta_{d,2}(\sM,\varepsilon|\rho_{A,E}) 
\le & 
(4+ \sqrt{\varepsilon v(\sigma_E)}) \sM^{s/2} 
e^{-\frac{s}{2}H_{1+s}(A|E| \rho_{A,E} \|\sigma_E )}\Label{8-26-13-f} \\
\Delta_{d,2}(\sM,\varepsilon|\rho_{A,E}) 
\le & 
(4+ \sqrt{\varepsilon \lceil \lambda(\sigma_E) \rceil}) \sM^{s/2} 
e^{-\frac{s}{2}H_{1+s}(A|E| \rho_{A,E} \|\sigma_E )+\frac{s}{2}}\Label{8-26-13-f2}
\end{align}
for $s \in (0,1]$.
Further,
when $\rho_{A,E}$ is normalized,
\begin{align}
\Delta_{d,2}(\sM,\varepsilon|\rho_{A,E})
\le &
(4+ \sqrt{\varepsilon v_s }) \sM^{s/2} 
e^{\frac{-s}{2}H_{1+s}^{\rG}(A|E| \rho_{A,E} )} \Label{8-26-13-g} \\
\Delta_{d,2}(\sM,\varepsilon|\rho_{A,E}) 
\le &
(4+ \sqrt{\varepsilon \lceil \lambda_s \rceil}) \sM^{s/2} 
e^{\frac{-s}{2}H_{1+s}^{\rG}(A|E| \rho_{A,E} )+\frac{s}{2}} 
\Label{8-26-13-g2}
\end{align}
for $s \in (0,1]$,
where
$v_s:= v(\Tr_A \rho_{A,E}^{1+s}/\Tr \rho_{A,E}^{1+s}) $
and $\lambda_s:=\lambda ( \Tr_A \rho_{A,E}^{1+s}/\Tr \rho_{A,E}^{1+s})$.
\end{thm}

Indeed, 
the number $v(\sigma_E)$ in crease at most polynomially when $\sigma_E$
is i.i.d.
However, otherwise, it does not generally behaves polynomially 
with respect to the system size
when the system size increases.
On the other hand, the number $\lambda(\sigma_E)$ 
is decided only by the ratio between the maximum and the minimum eigenvalues.
In many cases, we can expect that
the number $\lambda(\sigma_E)$ behaves linearly 
with respect to the system size
when the system size increases.

\begin{proofof}{Theorem \ref{Lem14}}
When $\rho_{A,E}'=P\rho_{A,E} P$ with a projection $P$, we have
$\|\rho_{A,E}'-\rho_{A,E}\|_1 \le 2 \sqrt{\Tr \rho_{A,E} (I-P)}$.
Any projection $P$ satisfies
\begin{align}
\Delta_{d,2}(\sM,\varepsilon|\rho_{A,E})
\le & 4 \sqrt{\Tr \rho_{A,E} (I-P)} + \sM^{1/2} e^{-\frac{1}{2}\overline{H}_{2}(A|E|P\rho_{A,E} P \|\sigma_E)}.
\Label{9-13-1}
\end{align}
We choose $P= \{ {\cal E}_{\sigma_E} (\rho_{A,E}) 
-\frac{1}{\sM} I \otimes \sigma_E \le 0 \}$,
where we simplify ${\cal E}_{I_A \otimes \sigma_E}$ to ${\cal E}_{\sigma_E}$.
Since 
$P$ is commutative with $I \otimes \sigma_E$,
\begin{align}
& \Tr \rho_{A,E} (I-P) =
\Tr \rho_{A,E} {\cal E}_{\sigma_E}(I-P) =
\Tr {\cal E}_{\sigma_E} (\rho_{A,E}) (I-P) 
\le \Tr {\cal E}_{\sigma_E} (\rho_{A,E})^{1+s} \sM^{s} 
(I\otimes \sigma_E^{-s} ) (I-P) \nonumber \\
\le & \Tr {\cal E}_{\sigma_E} (\rho_{A,E})^{1+s} \sM^{s} 
(I\otimes \sigma_E^{-s} ) 
=  \sM^s e^{-s H_{1+s}(A|E|{\cal E}_{\sigma_E} (\rho_{A,E}) 
\|\sigma_E )}
\Label{9-13-4}.
\end{align}
Further, using (\ref{8-15-23}), we have
\begin{align*}
& e^{-\overline{H}_{2}(A|E|P \rho_{A,E} P  \|\sigma_E)} 
= \Tr P \rho_{A,E} P \sigma_E^{-1/2} P 
\rho_{A,E} P \sigma_E^{-1/2} \nonumber \\
\le & v \Tr P {\cal E}_{\sigma_E} (\rho_{A,E}) P 
\sigma_E^{-1/2} P \rho_{A,E} P \sigma_E^{-1/2} 
=  v e^{-H_{2}(A|E|P {\cal E}_{\sigma_E} (\rho_{A,E}) P  \|\sigma_E)} .
\end{align*}
Thus,
\begin{align}
& \sM e^{-\overline{H}_{2}(A|E|P \rho_{A,E} P  \|\sigma_E)} 
\le v \sM e^{-H_{2}(A|E|P {\cal E}_{\sigma_E} (\rho_{A,E}) P  \|\sigma_E)} 
=  v \Tr {\cal E}_{\sigma_E} (\rho_{A,E})^{2} 
\sM (I\otimes \sigma_E^{-1} ) P \nonumber \\
\le & v \Tr {\cal E}_{\sigma_E} (\rho_{A,E})^{1+s} \sM^{s} (I\otimes \sigma_E^{-s} ) P 
\le v \Tr {\cal E}_{\sigma_E} (\rho_{A,E})^{1+s} \sM^{s} 
(I\otimes \sigma_E^{-s} ) 
= v 
\sM^s e^{-s H_{1+s}(A|E|{\cal E}_{\sigma_E}(\rho_{A,E})\|\sigma_E )}. \Label{9-13-5}
\end{align}
Substituting (\ref{9-13-4}) and (\ref{9-13-5}) into RHS of (\ref{9-13-1}),
we obtain
\begin{align}
&\Delta_{d,2}(\sM,\varepsilon|\rho_{A,E}) 
\le 
(4 +\sqrt{\varepsilon  v}) 
\sM^{s/2} 
e^{-\frac{s}{2} H_{1+s}(A|E|{\cal E}_{\sigma_E}(\rho_{A,E})\|\sigma_E )} 
\le 
(4 +\sqrt{\varepsilon v}) \sM^{s/2} e^{-\frac{s}{2}H_{1+s}(A|E| \rho_{A,E} \|\sigma_E )}.
\nonumber 
\end{align}
Hence, we obtain (\ref{8-26-13-f}).

Next, we show (\ref{8-26-13-f2}).
For this purpose, we choose a positive integer $l$.
For the given $\lambda=\lambda (\sigma_E)$, we define $\sigma_E'$ by the following procedure.
First, we diagonalize $\sigma_E$ as $\sigma_E=\sum_y s_y |u_y \rangle \langle u_y|$.
We define $s'_y:= a_0 e^{\lambda i}$ 
when $\log s_y \in ( \log a_0+ \frac{\lambda}{l}(i-1), 
\log a_0+ \frac{\lambda}{l}i ]$ for $i=1, \ldots, l$.
We define $s'_y:= a_0 e^{\lambda }$
when $s_y=a_0$. 
Hence, 
$\sigma_E \le \sigma_E'\le e^{\frac{\lambda}{l}} \sigma_E$
and $1 \le \Tr \sigma_E' \le e^{\frac{\lambda}{l}}$.
Then, $e^{-\frac{\lambda}{l}} \sigma_E \le \frac{\sigma_E'}{\Tr \sigma_E'}$.
Since 
$
e^{-\frac{s}{2}H_{1+s}(A|E| \rho_{A,E} \|
\frac{\sigma_E'}{\Tr \sigma_E'} )} 
\le
e^{-\frac{s}{2}H_{1+s}(A|E| \rho_{A,E} \|
e^{-\frac{\lambda}{l}} \sigma_E )} $,
Inequality (\ref{8-26-13-f}) implies that
\begin{align*}
&
\Delta_{d,2}(\sM,\varepsilon|\rho_{A,E}) 
\le 
(4 +\sqrt{\varepsilon l}) \sM^{s/2} e^{-\frac{s}{2}H_{1+s}(A|E| \rho_{A,E} \|
\frac{\sigma_E'}{\Tr \sigma_E'} )} \\
\le &
(4 +\sqrt{\varepsilon l}) \sM^{s/2} 
e^{-\frac{s}{2}H_{1+s}(A|E| \rho_{A,E} \|
e^{-\frac{\lambda}{l}} \sigma_E  )} 
= 
(4 +\sqrt{\varepsilon l}) 
e^{\frac{s\lambda}{2l}} 
\sM^{s/2} 
e^{-\frac{s}{2}H_{1+s}(A|E| \rho_{A,E} \|
\sigma_E  )} .
\end{align*}
Substituting $\lceil \lambda \rceil $ into $l$, 
we obtain (\ref{8-26-13-f2}).

Applying Lemma \ref{cor1-q},
we obtain (\ref{8-26-13-g})
from (\ref{8-26-13-f}) with $\sigma_E = \frac{\Tr_A \rho_{A,E}^{1+s}}{\Tr_{AE} \rho_{A,E}^{1+s}}$.
Similarly, (\ref{8-26-13-f}) yields (\ref{8-26-13-g2}).
Therefore, we obtain Theorem \ref{Lem14}.
\end{proofof}

\begin{rem}\Label{r3-30-1}
In our proof of the above theorems,
the state $\rho_{A,E}'$ is chosen by 
the information-spectrum-smoothing of the pinched state ${\cal E}_{\sigma_E}(\rho_{A,E})$.
Since the choice in \cite{TH} is also characterized by 
the the information-spectrum-smoothing of the pinched state,
our choice is the same as the choice in \cite{TH}.
\end{rem}

\subsection{Modified mutual information}\Label{s-3-27-1}
The bound $\Delta_{I,2}(\sM,\varepsilon|\rho_{A,E})$
can be evaluated by using the conditional R\'{e}nyi entropy 
$H_{1+s}(A|E| \rho_{A,E} \|\sigma_E )$
as follows.
\begin{thm}\Label{Lem12-q}
\begin{align}
\Delta_{I,2}(\sM,\varepsilon|\rho_{A,E}) 
\le & 
2 \eta(2 \sM^{\frac{s}{2-s}} e^{-\frac{s}{2-s} H_{1+s}(A|E|
{\cal E}_{\rho_E}(\rho_{A,E}) )},
v \varepsilon/4 +\log \tilde{\sM} ) \Label{12-5-6-n-2} \\
\le & 
2 \eta(2 \sM^{\frac{s}{2-s}} e^{-\frac{s}{2-s} H_{1+s}(A|E|\rho_{A,E})},
v \varepsilon/4 +\log \tilde{\sM} ) \Label{12-5-6-n} 
\end{align}
for $s \in (0,1]$,
where $\tilde{\sM}:=\max\{\sM,d_E\}$
and $v$ is the number of eigenvalues of $\rho_E$.
\end{thm}

\begin{proofof}{Theorem \ref{Lem12-q}}
When $\rho_{A,E}'=P\rho_{A,E} P$ with a projection $P$,
$\|\rho_{A,E}'-\rho_{A,E}\|_1 \le 2 \sqrt{\Tr \rho_{A,E} (I-P)}$.
Any projection $P$ satisfies
\begin{align}
\Delta_{I,2}(\sM,\varepsilon|\rho_{A,E}) 
\le & 2 \eta ( 2 \sqrt{\Tr \rho_{A,E} (I-P)} ,\log \tilde{\sM})
+ \varepsilon \sM e^{-\overline{H}_{2}(A|E|P\rho_{A,E} P \|\rho_E )}.
\Label{9-13-1-q}
\end{align}
We choose 
$P= \{{\cal E}_{\rho_E} (\rho_{A,E}) 
-\frac{1}{\sM'} I \otimes \rho_E \le 0 \}$
with arbitrary real number $\sM'$.
Since $P$ is commutative with 
$I_A \otimes \rho_E$ and $\rho_A \otimes I_E$,
the sub-state $\rho_{A,E}'= P \rho_{A,E} P$ satisfies 
$\rho_E' \le \rho_E$ and $\rho_A' \le \rho_A$.
Hence, we can apply Lemmas \ref{Lem7-q} and \ref{Lem9-q}.

Further, since $P$ is commutative with $I_A \otimes \rho_E$,
similar to (\ref{9-13-4}) and (\ref{9-13-5}), we obtain
\begin{align}
\Tr \rho_{A,E} (I-P) 
=\Tr {\cal E}_{\rho_E}(\rho_{A,E}) (I-P) 
\le {\sM'}^s e^{-s H_{1+s}(A|E|{\cal E}_{\rho_E}(\rho_{A,E}) )}
\end{align}
and
\begin{align}
\sM e^{-\overline{H}_{2}(A|E|P \rho_{A,E} P \|\rho_E)} 
\le & v \sM {\sM'}^{s-1} e^{-s H_{1+s}(A|E|{\cal E}_{\rho_E}(\rho_{A,E}) )}.
\end{align}
We choose $\sM':= 
e^{-\frac{s}{2-s}H_{1+s}(A|E|{\cal E}_{\rho_E}(\rho_{A,E}) )}
\sM^{\frac{2}{2-s}}$.
Then, we obtain
\begin{align}
& \Tr \rho_{A,E} (I-P) 
\le  \sM^{\frac{2s}{2-s}} 
e^{-\frac{2s}{2-s} H_{1+s}(A|E|{\cal E}_{\rho_E}(\rho_{A,E}) )}
\Label{8-26-6-n}
\end{align}
and
\begin{align}
& \sM e^{-\overline{H}_{2}(A|E|P \rho_{A,E} P \|\rho_E )} 
\le  v \sM^{\frac{s}{2-s}} 
e^{-\frac{s}{2-s} H_{1+s}(A|E|{\cal E}_{\rho_E}(\rho_{A,E}) )}.
\Label{8-26-7-n}
\end{align}
Substituting (\ref{8-26-6-n}) and (\ref{8-26-7-n}) to (\ref{9-13-1-q}),
we obtain (\ref{12-5-6-n-2}).
Then, (\ref{12-5-6-n}) follows from (\ref{8-15-12-qq}).
Therefore, we obtain Theorem \ref{Lem12-q}.
\end{proofof}

\section{Secret key generation with no error: Asymptotic case}\Label{s4-1-b}
\subsection{Approximate smoothing of R\'{e}nyi entropy of order 2}
Next, we consider the quantum case when our state
is given by the $n$-fold independent and identical state 
$\rho_{A,E}$, i.e., $\rho_{A,E}^{\otimes n}$.
In this case, we focus on 
the optimal generation rate
\begin{align*}
& G(\rho_{A,E}) 
:=
\sup_{\{(f_n,\sM_n)\}}
\left\{\left.
\lim_{n\to\infty} \frac{\log \sM_n}{n}
\right|
d_1'(f_{n}(A_n)|E_n | \rho_{A,E}^{\otimes n} )
\to 0 
\! \right\}.
\end{align*}
Due to Theorem \ref{Lem14},
when the generation rate $R= \lim_{n\to\infty} \frac{\log \sM_n}{n}$ is smaller than $H(A|E)$,
there exists a sequence of functions $f_n:{\cal A}\to \{1,\ldots, e^{nR} \}$ such that
\begin{align}
& d_1'(f_n(A)|E|\rho_{A,E}^{\otimes n}) 
\le (4+\sqrt{v_n}) 
e^{\frac{-s}{2}H_{1+s}^{\rG}(A|E|\rho_{A,E}^{\otimes n} )+\frac{nsR}{2}} 
= (4+\sqrt{v_n}) e^{n( \frac{-s}{2}H_{1+s}^{\rG}(A|E|\rho_{A,E} )+\frac{sR}{2}) } ,
\Label{8-14-4}
\end{align}
where $v_n$ is the number of eigenvalues of $(\Tr_A \rho_{A,E}^{1+s})^{\otimes n}$,
which is a polynomial increasing for $n$.
Since $\lim_{s \to 0}\frac{-1}{2}H_{1+s}^{\rG}(A|E|\rho_{A,E} )=H(A|E|\rho_{A,E}) $,
there exists a number $s \in (0,1]$ such that
$\frac{s}{2}H_{1+s}^{\rG}(A|E|\rho_{A,E} )-\frac{sR}{2}>0$.
Thus, the right hand side of (\ref{8-14-4}) goes to zero exponentially.
Conversely,
due to (\ref{8-14-1}),
any sequence of functions $f_n: {\cal A}^n \mapsto \{1, \ldots, e^{nR} \}$ satisfies that
\begin{align}
\lim_{n \to \infty} \frac{H(f_{n}(A)|E|\rho_{A,E}^{\otimes n})}{n} \le 
\frac{H(A|E|\rho_{A,E}^{\otimes n})}{n} = H(A|E|\rho_{A,E}).
\end{align}
Therefore, 
\begin{align}
\lim_{n \to \infty} 
\frac{I'(f_{n}(A)|E|\rho_{A,E}^{\otimes n})}{n} 
&= R- \lim_{n \to \infty} \frac{H(f_{n}(A)|E|\rho_{A,E}^{\otimes n})}{n} 
\ge 
R- H(A|E|\rho_{A,E}) \Label{8-14-5}.
\end{align}
That is, when $R>H(A|E|\rho_{A,E})$,
$\frac{I'(f_{n}(A)|E|\rho_{A,E}^{\otimes n})}{n} $ does not go to zero.
Due to (\ref{8-19-14-q}),
$d_1'(f_{n}(A)|E|\rho_{A,E}^{\otimes n})$ does not go to zero.
Hence, we can recover the result by \cite{D-W} as
\begin{align}
G(\rho_{A,E}) =H(A|E|\rho_{A,E}).
\end{align}

In order to treat the speed of this convergence,
we focus on 
the {\it exponentially decreasing rate (exponent)} of 
$d_1'(f_n(A)|E|\rho_{A,E}^{\otimes n})$ for a given $R$.
As another criterion,
we also focus on a variant 
$I'(f_{n}(A_n)|E_n | \rho_{A,E}^{\otimes n} )
=
I(f_{n}(A_n):E_n | \rho_{A,E}^{\otimes n} )
+
D(\rho_{f_{n}(A_n)} \| \rho_{\mix,f_{n}(A_n)})$
of the mutual information.

For this purpose, we evaluate the exponential deceasing rates of upper bounds.
For a given polynomial $P(n)$,
Theorems \ref{Lem14} and \ref{Lem12-q} yield that
\begin{align}
 \liminf_{n \to \infty} \frac{-1}{n}\log \Delta_{d,2}(e^{nR},P(n)|\rho_{A,E}^{\otimes n}) 
\ge &
e_{\rG,\rq}(\rho_{A,E}|R)  ,
\Label{12-18-9b} \\
 \liminf_{n \to \infty} \frac{-1}{n}\log \Delta_{I,2}(e^{nR},P(n)|\rho_{A,E}^{\otimes n}) 
\ge & e_{\rH,\rq}(\rho_{A,E}|R) 
\Label{12-18-7-qb},
\end{align}
where
\begin{align*}
e_{\rG,\rq}(\rho_{A,E}|R) 
:= &
\max_{0 \le s \le 1} 
\frac{s}{2}H_{1+s}^{\rG}(A|E|\rho_{A,E})-\frac{s}{2}R 
=  \max_{0 \le t \le \frac{1}{2}} 
\frac{t}{2(1-t)} (H_{\frac{1}{1-t}}^{\rG}(A|E|\rho_{A,E})-R) ,\\
e_{\rH,\rq}(\rho_{A,E}|R) 
:=& \max_{0 \le s \le 1} \frac{s}{2-s} ( H_{1+s}(A|E|\rho_{A,E} ) -R). 
\end{align*}
When the side information is classical, i.e., the state is given as a joint distribution $P_{A,E}$ on the joint system,
the equation 
\begin{align}
& \lim_{n \to \infty} \frac{-1}{n}\log \Delta_{d,2}(e^{nR},\varepsilon|P_{A,E}^{n}) 
= 
\max_{0 \le s \le 1/2} t (H^{\rG}_{\frac{1}{1-t}}(A|E|P_{A,E}) - R)\Label{8-21-1}
\end{align}
is shown by combination of \cite{H-arxiv} and the forthcoming paper \cite{W-H2}\footnote{The part $\ge$ is shown in \cite{H-arxiv}.
The part $\le$ with $\varepsilon=1$ is shown in \cite{W-H2}.
The LHS is monotonically decreasing for $\varepsilon$. Hence, we have (\ref{8-21-1}) with $\varepsilon \ge 1$.}.
Hence, our evaluation (\ref{12-18-9b}) is not tight in general.
However, Equality in (\ref{12-18-9b}) holds in a special case given in Subsection \ref{s10-2} as Lemma \ref{L8-16-6}.
Since the example given in Subsection \ref{s10-2} is very natural in the quantum case,
our evaluation is useful in the quantum setting.

Applying Lemma \ref{Lem8-q}, we obtain the following theorem.
\begin{thm}\Label{t-3-16-2}
When a function ensemble $f_{\bX^n}$ from $\cA^n$ to $\{1,\ldots, \lfloor e^{nR} \rfloor\}$
is universal$_2$,
\begin{align}
\liminf_{n \to \infty} \frac{-1}{n}\log 
\rE_{\bX_n} d_1'(f_{\bX^n}(A_n)|E_n | \rho_{A,E}^{\otimes n} ) 
\ge &
e_{\rG,\rq}(\rho_{A,E}|R)  ,
\Label{12-18-9} \\
\liminf_{n \to \infty} \frac{-1}{n}\log 
\rE_{\bX_n} I'(f_{\bX^n}(A_n)|E_n | \rho_{A,E}^{\otimes n} ) 
\ge & e_{\rH,\rq}(\rho_{A,E}|R) 
\Label{12-18-7-q}.
\end{align}
\end{thm}

Similarly, using Lemma \ref{Lem9-q}, we obtain the following theorem.
\begin{thm}\Label{t-3-16-3}
When an ensemble of linear functions  $f_{\bX^n}$ from $\cA^n$ to $\{1,\ldots, \lfloor e^{nR} \rfloor\}$
is $P(n)$-almost dual universal$_2$,
we have
\begin{align}
\liminf_{n \to \infty} \frac{-1}{n}\log \rE_{\bX_n} 
d_1'(f_{\bX^n}(A_n)|E_n | \rho_{A,E}^{\otimes n} ) 
\ge &
e_{\rG,\rq}(\rho_{A,E}|R)
\Label{12-18-9-a}, \\
\liminf_{n \to \infty} \frac{-1}{n}\log \rE_{\bX_n} I'(f_{\bX^n}(A_n)|E_n | \rho_{A,E}^{\otimes n} ) 
\ge & e_{\rH,\rq}(\rho_{A,E}|R) .
\Label{12-18-7-a-q}
\end{align}
In particular, 
when codes $C_n$ satisfies condition (\ref{12-20-1}),
we have
\begin{align}
\liminf_{n \to \infty} \frac{-1}{n}\log 
d_1'(f_{C_n}(A_n)|E_n | \rho_{A,E}^{\otimes n} )
\ge &
e_{\rG,\rq}(\rho_{A,E}|R)
\Label{12-18-9-b} \\
\liminf_{n \to \infty} \frac{-1}{n}\log  
I'(f_{C_n}(A_n)|E_n | \rho_{A,E}^{\otimes n} )
\ge & e_{\rH,\rq}(\rho_{A,E}|R) .
\Label{12-18-7-b-q}
\end{align}
\end{thm}


\subsection{Comparison for exponents}
Now, we compare exponents given in Theorem \ref{t-3-16-2}
with exponents derived by Corollaries \ref{c3-29-1-q} and \ref{c3-29-2-q}.
When a function ensemble $f_{\bX^n}$ from $\cA^n$ to $\{1,\ldots, \lfloor e^{nR} \rfloor\}$
is universal$_2$,
Corollary \ref{c3-29-2-q} yields the inequality
\begin{align}
\liminf_{n \to \infty} \frac{-1}{n}\log \rE_{\bX_n} 
d_1'(f_{\bX^n}(A_n)|E_n | \rho_{A,E}^{\otimes n} ) 
\ge &
\frac{1}{2}e_{\rH,\rq}(\rho_{A,E}|R).
\Label{12-18-7-q-x}
\end{align}
Similarly,
Corollary \ref{c3-29-1-q} yields the inequality
\begin{align}
\liminf_{n \to \infty} \frac{-1}{n}\log 
\rE_{\bX_n} I'(f_{\bX^n}(A_n)|E_n | \rho_{A,E}^{\otimes n} ) 
\ge & e_{\rG,\rq}(\rho_{A,E}|R) ,
\Label{12-18-9-x}
\end{align}
under the same condition for a function ensemble $f_{\bX^n}$.

In order to compare (\ref{12-18-7-q-x}) and (\ref{12-18-9-x})
with (\ref{12-18-9}) and (\ref{12-18-7-q}), respectively,
we prepare the following lemma for
two exponents $e_{\rH,\rq}(\rho_{A,E}|R)$ and $e_{\rG,\rq}(\rho_{A,E}|R)$.
\begin{lem}\Label{8-29-10}
We obtain
\begin{align}
\frac{1}{2}e_{\rH,\rq}(\rho_{A,E}|R) \le e_{\rG,\rq}(\rho_{A,E}|R) \Label{12-21-30-q} .
\end{align}
Further,
when the relations
\begin{align}
{H}_{1+s}(A|E|\rho_{A,E} ) 
= H_{\frac{1}{1-s}}^{\rG}(A|E|\rho_{A,E} ) 
\Label{ineq-7-23-2}
\end{align}
and
\begin{align}
R \ge R(2/3):=
\left. \frac{(2-s)^2}{2}\frac{d}{ds} 
\frac{s}{2-s}H_{1+s}(A|E|\rho_{A,E} )
\right|_{s=\frac{2}{3}}
\Label{8-29-1}
\end{align}
hold,
we obtain a stronger inequality
\begin{align}
e_{\rH,\rq}(\rho_{A,E}|R) \le  e_{\rG,\rq}(\rho_{A,E}|R) .\Label{12-21-31-q}
\end{align}
\end{lem}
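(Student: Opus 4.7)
The plan for the first inequality (\ref{12-21-30-q}) is to invoke Lemma \ref{cor1-q}, which gives $s H_{1+s}(A|E|\rho^{A,E}) \le -(1+s)\phi(\tfrac{s}{1+s}|A|E|\rho^{A,E})$ for every $s>0$, since $H_{1+s}(A|E|\rho^{A,E})=H_{1+s}(A|E|\rho^{A,E}\|\rho^E)$ is at most the maximum of the latter over $\sigma^E$. Substituting this into $\tfrac{1}{2}e_{H,q}(\rho^{A,E}|R) = \max_{s\in[0,1]}\frac{s H_{1+s}-sR}{2(2-s)}$ and changing variables via $t=s/(1+s)$ (which sends $[0,1]$ bijectively to $[0,1/2]$ with $2-s=(2-3t)/(1-t)$ and $1+s=1/(1-t)$) yields
\begin{align*}
\tfrac{1}{2}e_{H,q}(\rho^{A,E}|R) \le \max_{t\in[0,1/2]}\frac{-\phi(t|A|E|\rho^{A,E})-tR}{2(2-3t)}.
\end{align*}
The right-hand maximum is non-negative (value $0$ at $t=0$), hence realized at a point with non-negative numerator; the elementary inequality $2-3t\ge 2(1-t)$ on $[0,1/2]$ then bounds it above by $e_{\phi,q}(\rho^{A,E}|R)$.

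For the second inequality (\ref{12-21-31-q}), the equality hypothesis $sH_{1+s}(A|E|\rho^{A,E}) = -\phi(s|A|E|\rho^{A,E})$ allows both exponents to be written uniformly through $H_{1+s}$: set
\begin{align*}
F_H(s,R) := \tfrac{s}{2-s}(H_{1+s}-R), \quad F_\phi(t,R) := \tfrac{t}{2(1-t)}(H_{1+t}-R),
\end{align*}
so that $e_{H,q}(R)=\max_{s\in[0,1]}F_H(s,R)$ and $e_{\phi,q}(R)=\max_{t\in[0,1/2]}F_\phi(t,R)$. Denote by $s^*(R)$ the interior optimizer of $F_H(\cdot,R)$; the first-order condition rearranges to $R=\tilde R(s^*)$ with $\tilde R(s) := H_{1+s}+\tfrac{s(2-s)}{2}H'_{1+s}$, and direct differentiation gives $\tilde R'(s)=\tfrac{2-s}{2}\tfrac{d^2}{ds^2}(sH_{1+s})$. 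This is non-positive by concavity of $s\mapsto sH_{1+s}$, which follows from the convexity of $\psi$ (Section \ref{cqs2}) via $sH_{1+s}(A|E|\rho^{A,E})=s\log|\cA|-\psi(s|\rho^{A,E}\|\rho_{\mix}^A\otimes\rho^E)$. Hence $\tilde R$ is monotone non-increasing, and the hypothesis $R\ge R(2/3)=\tilde R(2/3)$ forces $s^*(R)\le 2/3$.

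Split into two cases. If $s^*(R)\le 1/2$, the inequality $2-s\ge 2(1-s)$ on $[0,1/2]$ directly gives $F_H(s^*,R)\le F_\phi(s^*,R)\le e_{\phi,q}(R)$. If $s^*(R)\in(1/2,2/3]$, substitute the optimality identity $R=\tilde R(s^*)$ into both $F_H(s^*,R)$ and $F_\phi(\tfrac{1}{2},R)$; a short algebraic manipulation yields
\begin{align*}
F_\phi(\tfrac{1}{2},R) - F_H(s^*,R) = \tfrac{1}{2}(H_{3/2}-H_{1+s^*}) - \tfrac{s^*(2-3s^*)}{4}H'_{1+s^*}.
\end{align*}
On $s^*\in(1/2,2/3]$, monotonicity of $H_{1+s}$ (Lemma \ref{L22}) gives $H_{3/2}\ge H_{1+s^*}$, while $s^*(2-3s^*)\ge 0$ together with $H'_{1+s^*}\le 0$ makes the second term non-negative; hence $F_H(s^*,R)\le F_\phi(\tfrac{1}{2},R)\le e_{\phi,q}(R)$.

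The main obstacle is this second case, where the optimizer $s^*(R)$ lies outside the parameter range $[0,1/2]$ used for $e_{\phi,q}$. Resolving it requires anchoring the comparison against the boundary point $t=1/2$ of $e_{\phi,q}$ and invoking the optimality identity $R=\tilde R(s^*)$, whose specific algebraic form combines with the signs coming from $s^*\le 2/3$, $H_{3/2}\ge H_{1+s^*}$, and $H'_{1+s^*}\le 0$ to make the resulting difference manifestly non-negative; the threshold $R(2/3)$ in the hypothesis is precisely calibrated to this argument.
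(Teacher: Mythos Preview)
Your proof is correct, modulo a harmless slip in the first part: the inequality you invoke should read $2(2-3t)\ge 2(1-t)$ (equivalently $2-3t\ge 1-t$, i.e.\ $t\le 1/2$), not ``$2-3t\ge 2(1-t)$'', which is false for $t>0$. With this correction your argument for (\ref{12-21-30-q}) is exactly the paper's argument in $t$-coordinates: the paper stays in $s$-coordinates and uses $\tfrac{1}{2-s}\le 1$, which under $t=s/(1+s)$ becomes precisely $1-t\le 2-3t$.

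For (\ref{12-21-31-q}) your route genuinely differs from the paper's. After establishing $s^*\le 2/3$ (same as the paper, via concavity of $s\mapsto sH_{1+s}$ and monotonicity of $\tilde R$), the paper introduces the substitution $\tfrac{t}{2(1-t)}=\tfrac{s}{2-s}$, which maps $[0,2/3]$ bijectively to $[0,1/2]$, equates the two prefactors, and reduces everything to the single monotonicity step $H_{1+s}\le H_{1+t}$ (using $t\le s$). You instead split into $s^*\le 1/2$ and $s^*\in(1/2,2/3]$; the first case is immediate, while in the second you anchor the comparison at the boundary $t=1/2$ and carry out the explicit identity
\[
F_\phi(\tfrac{1}{2},R)-F_H(s^*,R)=\tfrac{1}{2}(H_{3/2}-H_{1+s^*})-\tfrac{s^*(2-3s^*)}{4}H'_{1+s^*},
\]
whose non-negativity follows from the sign constraints on $(1/2,2/3]$. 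The paper's substitution is more elegant and avoids the case split and the explicit algebra; your approach is more hands-on but makes transparent exactly how the threshold $R(2/3)$ enters through the factor $2-3s^*$.
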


Hence, we can conclude that (\ref{12-18-9}) is better than (\ref{12-18-7-q-x}).
Similarly, under the condition in Lemma \ref{8-29-10}, 
(\ref{12-18-9-x}) is better than (\ref{12-18-7-q}).
However, the relation between 
(\ref{12-18-9-x}) and (\ref{12-18-7-q}) is not clear in general, now.
The condition (\ref{ineq-7-23-2}) seems too restrictive.
However a typical example given in Section \ref{s10}
satisfies the condition.
Hence, Lemma \ref{8-29-10} is often useful.

Therefore, when the number $n$ is sufficiently large,
Inequalities (\ref{8-26-13-g}) and (\ref{8-26-13-g2}) are
better evaluations for the average $\rE_{\bX_n} d_1'(f_{\bX^n}(A_n)|E_n | \rho_{A,E}^{\otimes n} )$
of the $L_1$ distinguishability criterion
s
than 
Corollary \ref{c3-29-2-q}.
In this case, if (\ref{12-21-31-q}) holds,
Corollary \ref{c3-29-1-q} gives
a better evaluation for 
the average of the modified mutual information criterion
$\rE_{\bX_n} I'(f_{\bX^n}(A_n)|E_n | \rho_{A,E}^{\otimes n} ) $ than Inequality (\ref{12-5-6-n}).

\begin{proofof}{Lemma \ref{8-29-10}}
Lemma \ref{cor1-q} yields that
\begin{align}
& \frac{1}{2}e_{\rH,\rq}(\rho_{A,E}|R) 
=
\max_{0 \le s \le 1} \frac{1}{2-s}( \frac{s}{2} H_{1+s}(A|E|\rho_{A,E} ) -\frac{s}{2}  R) 
\le 
\max_{0 \le s \le 1} \frac{1}{2-s} 
(\frac{s}{2}H_{1+s}^{\rG}(A|E|\rho_{A,E} ) -\frac{s}{2}  R) 
\nonumber \\
\le &
\max_{0 \le s \le 1} 
\frac{s}{2} (H_{1+s}^{\rG}(A|E|\rho_{A,E} ) - R) \Label{12-22-1-q} 
=  e_{\rG,\rq}(\rho_{A,E}|R),
\end{align}
where 
Inequality (\ref{12-22-1-q}) follows from 
the non-negativity of the RHS of (\ref{12-22-1-q}) and
the inequality $\frac{1}{2-s} \le 1$.

Next, we show (\ref{12-21-31-q}).
Assume that the relations
(\ref{8-29-1}) and (\ref{ineq-7-23-2}) hold.
We choose $\mu(s):= s H_{1+s}(A|E|\rho_{A,E} )$.
Then, $\mu'(s)\ge 0$ and $\mu''(s) \le 0$.
Defining
$R(s):=\frac{(2-s)^2}{2} \frac{d}{ds} \frac{\mu(s) }{2-s}
= \frac{\mu(s)}{2} 
+\frac{2-s}{2}\mu'(s)$,
we have
\begin{align*}
& \frac{d}{ds}
\frac{1}{2-s}(\mu(s) -s  R)
=
\frac{d}{ds} (\frac{\mu(s) }{2-s})
-\frac{2}{(2-s)^2} R 
= 
\frac{d}{ds} (\frac{\mu(s) }{2-s})
-\frac{2}{(2-s)^2} R(s)
+(R(s)-R)\frac{2}{(2-s)^2} 
= 
\frac{2(R(s)-R)}{(2-s)^2} .
\end{align*}
Since
$\frac{d}{ds} R(s)=\frac{2-s}{2}\mu''(s)\le 0$,
the maximum 
$\max_s \frac{1}{2-s}(\mu(s) -s  R)$
can be attained only when $R=R(s)$.
Hence, when (\ref{8-29-1}) holds, 
\begin{align}
e_{\rH,\rq}(\rho_{A,E}|R)
=& \max_{0 \le s \le 1} \frac{s}{2-s}(H_{1+s}(A|E|\rho_{A,E} ) - R) 
= \max_{0 \le s \le 2/3} \frac{s}{2-s}(H_{1+s}(A|E|\rho_{A,E} ) -R) .\nonumber
\end{align}
Now, we choose $t$ by $\frac{t}{2-2t}=\frac{s}{2-s} $.
Then, $0 \le t \le 1/2$ and $t \le s$
when $0 \le s \le 2/3$.
Hence,
$H_{1+s}(A|E|\rho_{A,E} ) -R
\le
H_{1+t}(A|E|\rho_{A,E} ) -R$,
which implies that
\begin{align}
& \max_{0 \le s \le 2/3} \frac{s}{2-s}(H_{1+s}(A|E|\rho_{A,E} ) -R) 
\le
\max_{0 \le t \le 1/2} \frac{t}{2-2t}(H_{1+t}(A|E|\rho_{A,E} ) - R) .\nonumber
\end{align}
Since the relation (\ref{ineq-7-23-2}) holds,
\begin{align}
 \max_{0 \le t \le 1/2} \frac{t}{2-2t}(H_{1+t}(A|E|\rho_{A,E} ) - R) 
= & \max_{0 \le t \le 1/2} 
\frac{t}{2-2t}( H_{\frac{1}{1-t}}^{\rG}(A|E|\rho_{A,E} ) - R) 
=  e_{\rG,\rq}(\rho_{A,E}|R).\nonumber
\end{align}
\end{proofof}

\subsection{Non i.i.d. case}\Label{s4-1-0-b-qb}
Finally, we consider our bounds 
when the state $\rho_{A,E}^{(n)}$ is given as non i.i.d. state on the system 
$({\cal H}_A \otimes {\cal H}_E)^{\otimes n}$.

In this case,
the speeds of increase of 
$v$ and $v_s$
are not polynomial with respect to the size $n$ of the system, in general.
Hence, when $n$ is sufficiently large,
the factor $v$ and $v_s$ are not negligible.

However, when the minimum eigenvalue of $\rho_{A,E}^{(n)}$
is greater than $c^{n}$ with a constant $c>0$,
the minimum eigenvalue of 
$\Tr_A (\rho_{A,E}^{(n)})^{1+s}/\Tr  (\rho_{A,E}^{(n)})^{1+s}$
is greater than $c^{(1+s)n}$.
Hence, $\lambda_s$ increases linearly with respect to $n$.
Thus,
when the key generation rate is $R$,
the upper bound (\ref{8-26-13-g2}) for the $L_1$ distinguishability criterion
has the factor of the order $O(\sqrt{n})$ with the term 
$e^{\frac{nsR}{2}-
\frac{s}{2}H_{1+s}^{\rG}(A|E| \rho_{A,E}^{(n)} )+\frac{s}{2}}  $.
For the modified mutual information criterion,
Theorem \ref{Lem12-q} cannot derive a good bound
because the factor $v$ does not behave polynomially.
Instead of Theorem \ref{Lem12-q}, Corollary \ref{c3-29-1-q} gives a better upper bound,
which has the factor of the order $O(n^{3/2})$ with the term 
$e^{\frac{nsR}{2}-
\frac{s}{2}H_{1+s}^{\rG}(A|E| \rho_{A,E}^{(n)} )+\frac{s}{2}}  $.
Hence, 
Theorem \ref{Lem14} and Corollary \ref{c3-29-1-q}
have a larger applicability beyond the i.i.d. case.

\section{Secret key generation with error correction}\Label{s5}
\subsection{Protocol}\Label{s5-1}
Next, 
we apply the above discussions to secret key generation with public communication.
Alice is assumed to have an initial random variable $a\in {\cal A}$, which generates with the probability $p_a$,
and 
Bob and Eve are assumed to have 
their initial quantum states $\rho_{B|a}$ and $\rho_{E|a}$ on their quantum systems ${\cal H}_B$ and  ${\cal H}_E$, respectively.
The task for Alice and Bob is to share a common random variable almost independent of Eve's quantum state by using a public communication.
The quality is evaluated by three quantities:
the size of the final common random variable,
the probability of the disagreement of their final variables (error probability), and
the information leaked to Eve,
which can be quantified by 
the $L_1$ distinguishability criterion or
the modified mutual information criterion 
between Alice's final variables and Eve's state.

In order to construct a protocol for this task,
we assume that the set ${\cal A}$ is a vector space on a finite field $\FF_q$.
Indeed, even if the cardinality $|{\cal A}|$ is not a prime power, 
it become a prime power by adding elements with zero probability.
Hence, we can assume that the cardinality $|{\cal A}|$ is a prime power $q$ without loss of generality.
Then, the secret key agreement can be realized by the following two steps: 
The first is the error correction, and the second is the privacy amplification.
In the error correction, 
Alice and Bob prepare a linear subspace $C_1 \subset \cA$
and the representatives $a(x)$ of all cosets $x \in \cA/C_1$.
Alice sends the coset information $[A] \in \cA/C_1$ to Bob 
in stead of her random variable $A\in \cA$,
and Bob obtain his estimate $\hat{A}$ of $A \in \cA$ 
from his quantum state on ${\cal H}_B$
and $[A] \in \cA/C_1$.
Alice obtains her random variable $A_1:=A-a([A])\in C_1$,
and Bob obtains his random variable $\hat{A}_1:=\hat{A}-a([B])\in C_1$.
In the privacy amplification,
Alice and Bob prepare a common hash function $f$ on $C_1$.
Then, applying the hash function $f$ to the their variables $A_1$ and $\hat{A}_1$,
they obtain their final random variables $f(A_1)$ and $f(\hat{A}_1)$.

Indeed, the above protocol depends on the choice of estimator that gives
the estimate $\hat{A}$ from $[A] \in \cA/C_1$ and 
his random variable $B\in {\cal B}$ (or his quantum state on ${\cal H}_B$).
In the remaining part of this section, 
we give the estimator depending on the setting and 
discuss the performance of this protocol.

\subsection{Error probability}\Label{s5-4}
In the following, we 
give the concrete form of the estimator and 
evaluate the error probability when 
Bob's information is quantum.
In this case, we construct an estimator for $\hat{A}$ in the following way.
For a given code $C_1 \subset \cA$
and a normalized c-q state $\rho_{A,B}=
\sum_{a}P_A(a)|a\rangle \langle a| \otimes \rho_{B|a}$,
our decoder is given as follows:
First, we define the projection:
\begin{align}
P_a:= \{P_A(a) \rho_{B|a} - \frac{q^t}{|\cA|} \rho_B \ge 0\},
\end{align}
where $t$ is the dimension of $C_1$.
When Bob receives the coset $[A]$,
he applies the POVM $\{P_a'\}$: 
\begin{align*}
P_a'  := Q_{[A]}^{-1/2} P_a Q_{[A]}^{-1/2} , \quad
Q_{[A]} :=
\sum_{a\in [A]} P_a.
\end{align*}
Then, Bob chooses the outcome $a$ as the estimate $\hat{A}$. 

Next, we evaluate the performance of the error probability.
Using the operator inequality \cite[Lemma 4.5]{Hayashi-book},
we obtain
\begin{align}
I- P_a' \le
2(I-P_a)+ 4 \sum_{a' \in C_1+a \setminus \{a\} }P_{a'}.
\end{align}
Thus, the error probability 
$P_{e}[\rho_{A,B},C_1]$
is evaluated as
follows.
\begin{align*}
&P_{e}[\rho_{A,B},C_1] 
=
\sum_{a}P_A(a) \Tr \rho_{B|a}  (I-P_a') 
\le 
2\sum_{a}P_A(a) \Tr \rho_{B|a} (I-P_a) 
+ 4 
\sum_{a}P_A(a) \Tr \rho_{B|a} \sum_{a' \in C_1+a \setminus \{a\} }P_{a'} .
\end{align*}
Now, we choose the code $C_1$ from $\varepsilon$-almost universal$_2$
code ensemble $\{C_{\bX}\}$ with the dimension $t$.
Then, the average of the error probability can be evaluated as
\begin{align}
&\rE_{\bX} P_{e}[\rho_{A,B},C_{\bX}] 
\le 
2\sum_{a}P_A(a) \Tr \rho_{B|a} (I-P_a) 
+ 4 
\rE_{\bX} \sum_{a}P_A(a) \Tr \rho_{B|a} \sum_{a' \in C_1+a \setminus \{a\} }P_{a'} \nonumber\\
\le &
2\sum_{a}P_A(a) \Tr \rho_{B|a} (I-P_a) 
+ 4 
\sum_{a}P_A(a) \Tr \rho_{B|a} \varepsilon \frac{q^t}{|\cA|} \sum_{a' \neq a  }P_{a'} \nonumber \\
\le &
2\sum_{a}P_A(a) \Tr \rho_{B|a} (I-P_a) 
+ 4 
\varepsilon \frac{q^t}{|\cA|} \sum_{a}P_A(a) \Tr \rho_{B|a}  \sum_{a'  }P_{a'} 
= 
2\sum_{a} \Tr P_A(a) \rho_{B|a} (I-P_a) 
+ 4 
\varepsilon \frac{q^t}{|\cA|} \sum_{a'} \Tr \rho_B P_{a'} \nonumber \\
\le &
2
\sum_{a} \Tr (P_A(a) \rho_{B|a})^{1-s} \rho_B^{s} 
(\frac{q^t}{|\cA|})^s 
+ 4 
\varepsilon 
\sum_{a'} \Tr (P_A(a') \rho_{B|a'})^{1-s} \rho_B^{s} 
(\frac{q^t}{|\cA|})^s \nonumber \\
=&
(2+ 4 \varepsilon )
\sum_{a'} \Tr (P_A(a') \rho_{B|a'})^{1-s} \rho_B^{s} 
(\frac{q^t}{|\cA|})^s 
=
(2+ 4 \varepsilon )
(\frac{q^t}{|\cA|})^s e^{sH_{1-s}(A|B|\rho_{A,B}) }
.\Label{12-23-10}
\end{align}


\subsection{Leaked information with fixed error correction}\Label{s5-5}
As is mentioned in the previous sections,
we have two criteria for quality of secret random variables.
Given a code $C_1\subset \cA$ and a hash function $f$,
the first criterion is $d_1'(f( A_1) |[A],E|\rho_{A,E})$, and
the second criterion is $I'(f( A_1) |[A],E|\rho_{A,E})$.
Note that the random variable $A$ can be written by the pair of $A_1$ and $[A]$
given in Subsection \ref{s5-1}.
\begin{thm}\Label{L3-20-7}
When
$\{f_{\bX}\}$ is a universal$_2$ ensemble of hash functions from $\cA/C_1$ to $\{1, \ldots, \sM\}$,
the relations
\begin{align}
\rE_{\bX} 
d_1'(f_{\bX}( A_1 ) |[A],E|\rho_{A,E})  
\le &
(4+\sqrt{v'}) (|\cA|/\sL)^{s/2}
e^{\frac{-s}{2}H_{1+s}^{\rG}(A|E|\rho_{A,E})}, \Label{12-23-2-q}\\
\rE_{\bX} 
I'(f_{\bX}( A_1 ) |[A],E|\rho_{A,E}) 
\le &
 \eta (
(4+\sqrt{v'}) (|\cA|/\sL)^{s/2}
e^{\frac{-s}{2}H_{1+s}^{\rG}(A|E|\rho_{A,E})},
\log |{\cal A}|d_E
)
\Label{12-23-3-q2}
\end{align}
hold for $s \in (0,1]$,
where $v'$ is the number of eigenvalues of 
$\Tr_A \rho_{A,E}^{1+s}$,
$v$ is the number of eigenvalues of $\rho_E$, $\sL$ is the amount of sacrifice information $|C_1|/\sM$,
and $\tilde{\sM} := \max\{\sM,d_E\}$.
\end{thm}

\begin{proof}
The relations (\ref{8-26-13-g}) and (\ref{12-20-5-q}) 
guarantee that
\begin{align}
& \rE_{\bX} 
d_1'(f_{\bX}( A_1 ) |[A],E|\rho_{A,E}) 
\le 
(4+\sqrt{v'}) \sM^{s/2} e^{\frac{-s}{2}H_{1+s}^{\rG}(A_1|[A],E|\rho_{A,E})} 
\le 
(4+\sqrt{v'}) \sM^{s/2} (|\cA|/|C_1|)^{s/2}
e^{\frac{-s}{2}H_{1+s}^{\rG}(A_1,[A]|E|\rho_{A,E})} \nonumber \\
= &
(4+\sqrt{v'}) (\sM |\cA|/|C_1|)^{s/2}
e^{\frac{-s}{2}H_{1+s}^{\rG}(A|E|\rho_{A,E})} 
= 
(4+\sqrt{v'}) (|\cA|/\sL)^{s/2}
e^{\frac{-s}{2}H_{1+s}^{\rG}(A|E|\rho_{A,E})} \nonumber
\end{align}
for $s \in (0,1]$,
which implies (\ref{12-23-2-q}).
A simple combination of (\ref{8-26-9-q}) and (\ref{12-23-2-q}) 
yields (\ref{12-23-3-q2}).
\end{proof}

Similarly, Lemma \ref{Lem9-q}, (\ref{8-26-13-g}), and (\ref{12-20-5-q})
 yield the following theorem.
\begin{thm}\Label{L3-20-8}
When $\{f_{\bX}\}$ is an $\varepsilon$-almost dual universal$_2$ ensemble 
of hash functions from $\cA/C_1$ to $\{1, \ldots, \sM\}$,
the relation
\begin{align}
\rE_{\bX} 
d_1'(f_{\bX}( A_1 ) |[A],E|\rho_{A,E}) 
\le & 
(4+\sqrt{\varepsilon v'}) (|\cA|/\sL)^{s/2}
e^{\frac{-s}{2}H_{1+s}^{\rG}(A|E|\rho_{A,E})} ,
\Label{12-23-4-q}\\
 \rE_{\bX} 
I'(f_{\bX}( A_1 ) |[A],E|\rho_{A,E})
\le &
 \eta (
(4+\sqrt{\varepsilon v'}) (|\cA|/\sL)^{s/2}
e^{\frac{-s}{2}H_{1+s}^{\rG}(A|E|\rho_{A,E})},
\log |{\cal A}|d_E
)
\Label{12-23-3-q2b}
\end{align}
holds for $s \in (0,1]$,
where $v'$ is the number of eigenvalues of 
$\Tr_A \rho_{A,E}^{1+s}$,
$v$ is the number of eigenvalues of $\rho_E$,
$\sL$ is the amount of sacrifice information $|C_1|/\sM$,
and
$\tilde{\sM} := \max\{\sM,d_E\}$.
\end{thm}

\begin{rem}
Similar to (\ref{12-23-2-q}) and (\ref{12-23-4-q}),
using (\ref{8-26-13-g2}),
we can show formulas with the logarithmic ratio $\lambda$ between the maximum and minimum eigenvalues
of $\Tr_A \rho_{A,E}^{1+s}$.
\end{rem}

\subsection{Leaked information with randomized error correction code}\Label{s5-5-2}
Next, we evaluate leaked information when the error correcting code $C_1$ is chosen from an $\varepsilon_1$-almost universal$_2$ code ensemble.
In this case, the evaluation for the average of the modified mutual information criterion can be improved to the following way.

\begin{thm}\Label{L12-31-2}
We choose the code $C_1$ from an $\varepsilon_1$-almost universal$_2$
code ensemble $\{C_{\bX}\}$ with the dimension $t$.
Assume that
$\{f_{\bY}\}$ is 
an $\varepsilon_2$-almost dual universal$_2$ ensemble of hash functions from $\cA/C_{\bX}$ to $\{1, \ldots, \sM\}$,
the random variables $\bX$ and $\bY$ are independent of each other,
and $\varepsilon_2 \ge 2$.
\begin{align}
\rE_{\bX,\bY} I' (f_{\bY}( A_1 ) |[A]_{C_{\bX}},E
|\rho_{A,E}' ) 
\le &
2 \eta(
(2 (\frac{|\cA|\sM }{q^t})^{\frac{s}{2-s}}
e^{-\frac{s}{2-s} H_{1+s}(A |E  |\rho_{A,E}'  )}  
,\log \tilde{\sM}+ \frac{v\varepsilon_2}{2\varepsilon_1})
 +\log \varepsilon_1 .
\Label{12-23-5-q-2} 
\end{align}
for $s \in (0,1]$,
where $v$ is the number of eigenvalues of $\rho_E$
and $\tilde{\sM}:=\max\{\sM,d_E\}$.
Similarly, when 
$\{f_{\bY}\}$ is 
universal$_2$ ensemble of hash functions from $\cA/C_{\bX}$ to $\{1, \ldots, \sM\}$,
\begin{align}
\rE_{\bX,\bY} I' (f_{\bY}( A_1 ) |[A]_{C_{\bX}},E
|\rho_{A,E}' ) 
\le &
2 \eta(
(2 (\frac{|\cA|\sM }{q^t})^{\frac{s}{2-s}}
e^{-\frac{s}{2-s} H_{1+s}(A |E  |\rho_{A,E}'  )}  
,\log \tilde{\sM}+ \frac{v}{4\varepsilon_1})
+\log \varepsilon_1 .
\Label{12-23-5-4}
\end{align}
\end{thm}
\begin{proof}
We choose a sub cq-state $\rho_{A,E}'
= \sum_{a}|a \rangle \langle a|\otimes \rho_{E|a}'$ 
such that 
${\rho'}_{E}\le \rho_{E} $ and ${\rho'}_{A}\le \rho_{A} $.
Due to (\ref{Lem6-3-q-eq2}), 
we obtain
\begin{align*}
& \rE_{\bY} e^{-\overline{H}_2(f_{\bY}( A_1 ) |[A]_{C_{\bX}},E
|\rho_{A,E}' 
\|\rho_{\mix,[A]_{C_{\bX}}} \otimes \rho_E )} \\
\le &
\varepsilon_2
(1-\frac{1}{\sM})
e^{-\overline{H}_2(A_1| [A]_{C_{\bX}},E  |\rho_{A,E}' 
\|\rho_{\mix,[A]_{C_{\bX}}} \otimes \rho_E )} 
+
\frac{1}{\sM}
e^{\underline{\psi}(1|{\rho'}_{[A]_{C_{\bX}},E} 
\|\rho_{\mix,[A]_{C_{\bX}}} \otimes \rho_E)}
\\
= &
\varepsilon_2
(1-\frac{1}{\sM})
\frac{|\cA|}{q^t}
e^{-\overline{H}_2(A_1,[A]_{C_{\bX}} |E  |\rho_{A,E}' \| \rho_E )} 
+
\frac{1}{\sM}
e^{\underline{\psi}(1|{\rho'}_{[A]_{C_{\bX}},E} 
\|\rho_{\mix,[A]_{C_{\bX}}} \otimes \rho_E)}
\\
= &
\varepsilon_2 (\frac{|\cA|}{q^t})
e^{-\overline{H}_2(A |E  |\rho_{A,E}' \| \rho_E )} 
+
\frac{1}{\sM}
\frac{|\cA|}{q^t}
(
e^{-\overline{H}_2([A]_{C_{\bX}} |E  |\rho_{A,E}' \| \rho_E )} 
-
\varepsilon_2 e^{-\overline{H}_2(A |E  |\rho_{A,E}' \| \rho_E )} ) .
\end{align*}
Since the matrix $\rho_{A,E}'$ satisfies 
\begin{align*}
& 
e^{-\overline{H}_2([A]_{C_{\bX}} |E  |\rho_{A,E}' \| \rho_E )} 
-
e^{-\overline{H}_2(A |E  |\rho_{A,E}' \| \rho_E )} 
=
\sum_{a} 
\Tr_E
{\rho'}_{E|a}
\rho_{E}^{-\frac{1}{2}}
( \sum_{a'\in  C_{\bX}+a \setminus \{a\} } 
{\rho'}_{E|(a')})
\rho_{E}^{-\frac{1}{2}}
\end{align*}
and
\begin{align*}
& 
\rE_{\bX}
\sum_{a} 
\Tr_E {\rho'}_{E|a}
\rho_{E}^{-\frac{1}{2}}
( \sum_{a'\in  C_{\bX}+a \setminus \{a\} }
{\rho'}_{E|(a')} )
\rho_{E}^{-\frac{1}{2}}
\le 
\sum_{a} 
\Tr_E 
{\rho'}_{E|a}
\rho_{E}^{-\frac{1}{2}}
( \varepsilon_1 
\frac{q^t}{|\cA|}
\sum_{a'\neq a}
{\rho'}_{E|(a')} )
\rho_{E}^{-\frac{1}{2}}
\\
\le &
\varepsilon_1
\frac{q^t}{|\cA|}
\Tr_E
\sum_{a} {\rho'}_{E|a}
\rho_{E}^{-\frac{1}{2}}
(  \sum_{a'}
{\rho'}_{E|(a')}
)
\rho_{E}^{-\frac{1}{2}}
= 
\varepsilon_1
\frac{q^t}{|\cA|}
e^{\underline{\psi}(1|{\rho'}_{E} \| \rho_E)}
\le 
\varepsilon_1
\frac{q^t}{|\cA|},
\end{align*}
we have
\begin{align*}
& \rE_{\bX}
e^{-\overline{H}_2([A]_{C_{\bX}} |E  |\rho_{A,E}' \| \rho_E )} 
-
\varepsilon_2 e^{-\overline{H}_2(A |E  |\rho_{A,E}' \| \rho_E )} 
\le 
\rE_{\bX}
e^{-\overline{H}_2([A]_{C_{\bX}} |E  |\rho_{A,E}' \| \rho_E )} 
-
e^{-\overline{H}_2(A |E  |\rho_{A,E}' \| \rho_E )} 
\le  
\varepsilon_1
\frac{q^t}{|\cA|},
\end{align*}
where the first inequality follows from $\varepsilon_2 \ge 1$.

Hence,
we obtain
\begin{align*}
& \rE_{\bX,\bY} e^{-\overline{H}_2(f_{\bY}( A_1 ) |[A]_{C_{\bX}},E
|\rho_{A,E}' 
\|\rho_{\mix,[A]_{C_{\bX}}} \otimes \rho_E )} 
\le 
\varepsilon_2 (\frac{|\cA|}{q^t})
e^{-\overline{H}_2(A |E  |\rho_{A,E}' \| \rho_E )} 
+
\frac{1}{\sM} \varepsilon_1 
=
\frac{1}{\sM}
\varepsilon_1 
(
1+
\frac{\varepsilon_2}{\varepsilon_1} 
\frac{|\cA|}{q^t}
\sM e^{-\overline{H}_2(A |E  |\rho_{A,E}' \| \rho_E )} 
).
\end{align*}
Applying Jensen's inequality to $x \mapsto \log x$,
we obtain
\begin{align*}
 \rE_{\bX,\bY} -\overline{H}_2(f_{\bY}( A_1 ) |[A]_{C_{\bX}},E
|\rho_{A,E}' 
\|\rho_{\mix,[A]_{C_{\bX}}} \otimes \rho_E ) 
\le &
-\log \sM+\log \varepsilon_1 
+\log 
(
1+
\frac{\varepsilon_2}{\varepsilon_1} 
\frac{|\cA|}{q^t}
\sM e^{-\overline{H}_2(A |E  |\rho_{A,E}' \| \rho_E )} ).
\end{align*}
Using (\ref{12-6-5-q}), (\ref{12-31-5}), and (\ref{8-15-14}),
we obtain
\begin{align}
& I' (f_{\bY}( A_1 ) |[A]_{C_{\bX}},E
|\rho_{A,E} ) 
= 
\log \sM-
 H (f_{\bY}( A_1 ) |[A]_{C_{\bX}},E
|\rho_{A,E} ) \nonumber \\
\le &
2 \eta(\|\rho_{A,E}-\rho_{A,E}'\|_1,\log \tilde{\sM} ) 
+
\log \sM-
 H (f_{\bY}( A_1 ) |[A]_{C_{\bX}},E
|\rho_{A,E}' ) \nonumber \\
\le &
2 \eta(\|\rho_{A,E}-\rho_{A,E}'\|_1,\log \tilde{\sM} ) 
+\log \sM-
 H (f_{\bY}( A_1 ) |[A]_{C_{\bX}},E
|\rho_{A,E}' \| \rho_{\mix,[A]_{C_{\bX}}} \otimes \rho_E ) \nonumber \\
\le &
2 \eta (\|\rho_{A,E}-\rho_{A,E}'\|_1,\log \tilde{\sM}) 
+
\log \sM-
 \overline{H}_2 (f_{\bY}( A_1 ) |[A]_{C_{\bX}},E
|\rho_{A,E}' \| \rho_{\mix,[A]_{C_{\bX}}} \otimes \rho_E ) .
\end{align}
Hence, we obtain
\begin{align}
& \rE_{\bX,\bY} I' (f_{\bY}( A_1 ) |[A]_{C_{\bX}},E
|\rho_{A,E} ) 
\le 
2 \eta(\|\rho_{A,E}-\rho_{A,E}'\|_1,\log \tilde{\sM} ) 
+\log \varepsilon_1 
+\log 
( 1
+ \frac{\varepsilon_2}{\varepsilon_1} 
\frac{|\cA|}{q^t}
\sM e^{-\overline{H}_2(A |E  |\rho_{A,E}' \| \rho_E )} ) \nonumber \\
\le &
2 \eta (\|\rho_{A,E}-\rho_{A,E}'\|_1,\log \tilde{\sM} ) 
+
\log \varepsilon_1 
+ \frac{\varepsilon_2}{\varepsilon_1} 
\frac{|\cA|}{q^t}
\sM e^{-\overline{H}_2(A |E  |\rho_{A,E}' \| \rho_E )}  .
\end{align}
Applying the same discussion as the proof of Theorem \ref{Lem12-q},
we obtain
\begin{align}
& \rE_{\bX,\bY} I' (f_{\bY}( A_1 ) |[A]_{C_{\bX}},E
|\rho_{A,E}' ) 
\le 
2 \eta(
(2 (\frac{|\cA|\sM }{q^t})^{\frac{s}{2-s}}
e^{-\frac{s}{2-s} H_{1+s}(A |E  |\rho_{A,E}'  )}  
,\log \tilde{\sM}+ \frac{v\varepsilon_2}{4\varepsilon_1})
+\log \varepsilon_1 .
\end{align}
\end{proof}

\subsection{Asymptotic analysis}\Label{s5-6}
Next, we consider the case when 
the c-q state is given as the $n$-fold independent and identical extension 
$\rho_{A,B,E}^{\otimes n}$
of a c-q normalized state $\rho_{A,B,E}$,
where $\cA$ is $\FF_q$.
Now, we fix codes $C_{1,n}$ in $\FF_q^n$ 
with the dimension $\lfloor n \frac{R_1}{\log q}\rfloor $.
Then, we obtain the following theorem.
\begin{thm}\Label{t3-20-20}
When $\{f_{\bX}\}$ is a universal$_2$ ensemble of hash functions
from $\FF_q^n/C_{1,n}$ to $\FF_q^{\lfloor n \frac{1-R_1-R_2}{\log q}\rfloor } $,
the relations
\begin{align}
\liminf_{n\to \infty}\frac{-1}{n}\log 
\rE_{\bX} d_1'(f_{\bX}( A_{1,n} )|[A_n],E_n|
\rho_{A,E}^{\otimes n}) 
\ge & 
\max_{0\le s \le 1} \frac{s}{2} (R_2 -\log q ) +
\frac{s}{2}H_{1+s}^{\rG}(A|E|\rho_{A,E}) 
= e_{\rG,\rq}(\rho_{A,E}|\log q -R_2),
\Label{12-23-6-q}\\
\liminf_{n\to \infty}\frac{-1}{n}\log 
\rE_{\bX} 
I'(f_{\bX}( A_{1,n} ) |[A_n],E_n|\rho_{A,E}^{\otimes n}) 
\ge &
e_{\rG,\rq}(\rho_{A,E}|\log q -R_2)
\Label{12-23-7-q}
\end{align} 
hold.
\end{thm}
\begin{proof}
(\ref{12-23-2-q}) and (\ref{12-23-3-q2}) yield 
(\ref{12-23-6-q}) and (\ref{12-23-7-q}), respectively.
\end{proof}

Similarly, we have the following theorem.
\begin{thm}\Label{t3-20-21}
When $P(n)$ is an arbitrary polynomial
and
$\{f_{\bX}\}$ is a $P(n)$-almost dual universal$_2$ ensemble of hash functions
from $\FF_q^n/C_{1,n}$ to $\FF_q^{\lfloor n \frac{1-R_1-R_2}{\log q}\rfloor } $,
the relations 
(\ref{12-23-6-q}) and (\ref{12-23-7-q}) hold.
\end{thm}

\begin{proof}
(\ref{12-23-4-q}) and (\ref{12-23-3-q2b}) yield 
Inequalities (\ref{12-23-6-q}) and (\ref{12-23-7-q}), respectively.
\end{proof}

Next, we consider the case when the error correcting code is chosen randomly.
In this case, the exponential decreasing rate for 
$I'(f_{\bX}( A_{1,n} ) |[A_n],E_n|P_{A,E}^n)$
can be improved as follows.

\begin{thm}\Label{p3-16-1c}
For independent random variables $\bX,\bY$,
we assume that
the code ensemble $\{C_{\bX}\}$ 
with the dimension $\lfloor n \frac{R_1}{\log q}\rfloor $
is universal$_2$ 
and 
$\{f_{\bY}\}$ is 
universal$_2$ ensemble of hash functions from $\FF_q^n /C_{\bX}$ to 
$\FF_q^{\lfloor n \frac{1-R_1-R_2}{\log q}\rfloor } $,
the relations (\ref{12-23-6-q}), (\ref{12-23-3-q2}), and 
\begin{align}
\liminf_{n\to \infty}\frac{-1}{n}\log \rE_{\bX} 
P_{e}[\rho_{A,B}^{\otimes n},C_{\bX}] 
\ge &
\max_{0\le s \le 1} s(\log q -R_1 )-s H_{1-s}(A|B|\rho_{A,B}),
 \Label{3-16-11b}\\
 \liminf_{n\to \infty}\frac{-1}{n}\log 
\rE_{\bX,\bY} 
I'(f_{\bY}( A_{1,n} ) |[A_n]_{C_{\bX}},E_n|\rho_{A,E}^{\otimes n}) 
\ge &
e_{\rH,\rq}(P_{A,E}|\log q -R_2)
\Label{12-31-2-c}
\end{align}
hold.
\end{thm}

\begin{proof}
Theorem \ref{L12-31-2} implies that
\begin{align*}
 \liminf_{n\to \infty}\frac{-1}{n}\log 
\rE_{\bX,\bY} 
I'(f_{\bY}( A_{1,n} ) |[A_n]_{C_{\bX}},E_n|\rho_{A,E}^{\otimes n}) 
\ge 
\max_{0\le s \le 1} \frac{s}{2-s} (R_2 -\log q   + H_{1+s}(A|E|\rho_{A,E}) ) 
=& 
e_{\rH,\rq}(P_{A,E}|\log q -R_2),
\end{align*}
which yields  (\ref{12-31-2-c}).
Due to (\ref{12-23-10}), the error probability can be bounded as
\begin{align*}
& \rE_{\bX_n} P_{e}[\rho_{A,B}^{\otimes n},C_{\bX_n}] 
\le
P(n) e^{n (s(R_1 -\log q )+s H_{1-s}(A|B|\rho_{A,B}))}
\end{align*}
for $s\in [0,1]$, which implies (\ref{3-16-11b}).
\end{proof}

Similarly, we obtain the following theorem.
\begin{thm}\Label{t3-20-11b}
For an arbitrary polynomial $P(n)$ and the independent random variables $\bX,\bY$,
we assume that
the code ensemble $\{C_{\bX}\}$ with the dimension $\lfloor n \frac{R_1}{\log q}\rfloor $ is universal$_2$ 
and 
$\{f_{\bY}\}$ is 
a $P(n)$-almost dual universal$_2$ ensemble of hash functions from $\FF_q^n /C_{\bX}$ to 
$\FF_q^{\lfloor n \frac{1-R_1-R_2}{\log q}\rfloor } $,
the relations (\ref{12-23-6-q}), 
(\ref{12-23-3-q2}),
(\ref{3-16-11b}), and (\ref{12-31-2-c}) hold.
\end{thm}
For a comparison between two exponents
$e_{\rG,\rq}(\rho_{A,E}|R)$ and $e_{\rG,\rq}(\rho_{A,E}|R) $,
see Lemma \ref{8-29-10}.

\section{Application to generalized Pauli channel}\Label{s10}
\subsection{General case}
In order to apply the above result to quantum key distribution,
we treat the quantum state generated by transmission by 
a generalized Pauli channel in the $p$-dimensional system ${\cal H}$.
First, we define 
the discrete Weyl-Heisenberg representation $W$ for $\FF_p^2$:
\begin{align*}
\sX &:= \sum_{j=0}^{p-1} |j+1 \rangle \langle j| , \quad
\sZ := \sum_{j=0}^{p-1} \omega^j |j \rangle \langle j| ,\quad
\sW(x,z) := \sX^x \sZ^z,
\end{align*}
where $\omega$ is the root of the unity with the order $p$.
Using this representation and a probability distribution $P_{XZ}$ on $\FF_p^2$, 
we can define the generalized Pauli channel:
\begin{align*}
{\cal E}_{P}(\rho):= \sum_{(x,z)\in \FF_p^2}
P_{XZ}(x,z) \sW(x,z) \rho \sW(x,z)^\dagger .
\end{align*}
In the following, we assume that the eavesdropper can access all of the environment of the channel ${\cal E}_{P}$.
When the state $|j\rangle$ is input to the channel ${\cal E}_{P}$, 
the environment system is spanned 
by the basis $\{|x,z \rangle_E \}$. 
Then, 
the state $\rho_{E|j}$ of the environment (Eve's state)
and Bob's state $\rho_{B|j}$
are given as
\begin{align*}
\rho_{E|j} &=\sum_{z=0}^{p-1}
P_Z(z) |j,z:P_{XZ}\rangle \langle j,z:P_{XZ}|,
\quad 
|j,z:P_{XZ}\rangle := \sum_{x=0}^{p-1} \omega^{j x}
\sqrt{P_{X|Z}(x|z)} |x,z \rangle_E \\
\rho_{B|j} &=\sum_{x=0}^{p-1} 
P_X(x) |j +x\rangle_B ~_B\langle j+x|.
\end{align*}
Thus, the relation
\begin{align*}
& \sum_{a=0}^{p-1}
|a,z:P_{XZ}\rangle \langle j,z:P_{XZ}| 
=
p \sum_{x} 
P_{X|Z}(x|z) |x,z\rangle_E ~_E \langle x,z|
\end{align*}
holds.
Hence,
\begin{align}
\rho_E= \sum_{x,z}P_{X,Z}(x,z) 
|x,z\rangle_E ~_E \langle x,z|.
\end{align}
Then,
we obtain the following state after 
the quantum state transmission via the generalized Pauli channel.
\begin{align*}
\rho_{A,B,E}:=
\sum_{j=0}^{p-1}
\frac{1}{p}
|j\rangle \langle j|\otimes \rho_{B|j}
\otimes \rho_{E|j}.
\end{align*}
In this setting, the joint state $\rho_{A,B}$ is
classical, we can apply the classical theory for error probability.
Since $P_{A,B}(a,b)=
\sum_{a}\frac{1}{p}P_X(b-a)$,
we have
\begin{align*}
&e^{-s H_{\frac{1}{1-s}}^{\rG}(A|B|\rho_{A,B})}
=
\sum_{e} \frac{1}{p} 
(\sum_a P_X(b-a)^{1/(1-s)})^{1-s} 
= 
(\sum_x P_X(x)^{1/(1-s)})^{1-s} 
= e^{(1-s) \frac{-s}{1-s}H_{\frac{1}{1-s}}(X|P_X)} 
= e^{-s H_{\frac{1}{1-s}}(X|P_X)} .
\end{align*}

Now, we choose the rate $R_1$ of size of code $C_1$.
When $\{C_{\bX_n}\}$ is the $P(n)$-almost universal$_2$ code ensemble in $\FF_q^n$ 
with the dimension $\lfloor n \frac{R_1}{\log p}\rfloor $,
due to \cite[(243)]{H-arxiv}, the decoding error probability can be bounded as
\begin{align*}
\rE_{\bX_n} P_{e}[\rho_{A,B}^{\otimes n},C_{\bX_n}] 
\le &
P(n) e^{n (s(R_1 -\log q )-s H_{\frac{1}{1+s}}(X|P_X) )}.
\end{align*}
That is,
\begin{align*}
\liminf_{n\to \infty}\frac{-1}{n}\log \rE_{\bX_n} 
P_{e}[\rho_{A,B}^{\otimes n},C_{\bX_n}] 
\ge &
\max_{0\le s \le 1} s(\log q -R_1 )+s H_{\frac{1}{1+s}}(X|P_X).
\end{align*}

Next, we treat the leaked information.
In the following discussion, we fix codes $C_{1,n}$ in 
$\FF_p^n$.
Since $\rho_{A,E}=
\sum_{a}\frac{1}{q}|a\rangle \langle a| \otimes \rho_{E|a}$,
we have
\begin{align}
& e^{-s H_{\frac{1}{1-s}}^{\rG}(A|E|\rho_{A,E})} 
=
\Tr_E 
(\Tr_A
(\sum_{a}\frac{1}{p}|a\rangle \langle a| \otimes \rho_{E|a}
)^\frac{1}{1-s})^{1-s} 
= 
\frac{1}{p}\Tr_E 
(\sum_{a} (\rho_{E|a})^\frac{1}{1-s})^{1-s}\nonumber  \\
=& 
\frac{1}{p}\Tr_E 
(\sum_{a} 
\sum_{z=0}^{p-1}
P_Z(z)^\frac{1}{1-s}
 |a,z:P_{XZ}\rangle \langle a,z:P_{XZ}|
)^{1-s} 
= 
\frac{1}{p}\Tr_E 
(
\sum_{z=0}^{p-1}
P_Z(z)^\frac{1}{1-s}
\sum_{a} 
 |a,z:P_{XZ}\rangle \langle a,z:P_{XZ}|
)^{1-s}\nonumber  \\
=& 
\frac{1}{p}\Tr_E 
(
\sum_{z=0}^{p-1}
P_Z(z)^\frac{1}{1-s}
p \sum_{x} 
P_{X|Z}(x|z) |x,z\rangle_E ~_E \langle x,z|
)^{1-s}
=
p^{-s} 
\Tr_E 
\sum_{z=0}^{p-1}
\sum_{x} 
P_Z(z)
P_{X|Z}(x|z)^{1-s} |x,z\rangle_E ~_E \langle x,z| 
\nonumber \\
=&
p^{-s} 
e^{s H_{1-s}(X|Z|P_{X,Z})}\Label{1-3-5}
\end{align}
and
\begin{align}
&e^{-sH_{1+s}(A|E|\rho_{A,E})} 
=
\Tr
(\sum_{a}\frac{1}{p}|a\rangle \langle a| \otimes \rho_{E|a}
)^{1+s} \rho_E^{-s} 
=
\frac{1}{p^{1+s}}
\sum_a 
\rho_{E|a}^{1+s}\rho_E^{-s} \nonumber \\
=&
\frac{1}{p^{1+s}}
\sum_a 
\sum_z P_Z(z)
\Tr |j,z :P_{XZ}\rangle \langle j,z :P_{XZ}|^{1+s} 
\cdot (\sum_{x}P_{X|Z}(x|z) |x,z\rangle_E ~_E \langle x,z|)^{-s} \nonumber \\
=&
\frac{1}{p^{1+s}}
\sum_a 
\sum_z P_Z(z)
\sum_x P_{X|Z}(x|z)^{1-s} 
=
\frac{1}{p^{s}}
\sum_z P_Z(z)
\sum_x P_{X|Z}(x|z)^{1-s} 
=
p^{-s} 
e^{s H_{1-s}(X|Z|P_{X,Z})}.\Label{1-3-6}
\end{align}
That is, we have
\begin{align}
H_{1+s}(A|E|\rho_{A,E})
=H_{\frac{1}{1-s}}^{\rG}(A|E|\rho_{A,E})
=\log p -  H_{1-s}(X|Z|P_{X,Z}).
\Label{8-30-b}
\end{align}
Now, we consider the case with randomized error correction.
Given a sequence of fixed codes $C_{1,n}$,
we focus on a sequence of ensembles of hash functions of
$\FF_p^n/C_{1,n}$ with the rate $R_2$ of sacrifice information
(i.e., with the sacrifice bit length $L=n R_2$).

In this case, the numbers of eigenvalues of 
$\rho_E^{\otimes n}$ and 
$\Tr_A (\rho_{A,E}^{\otimes n})^{1+s}$ 
are less than $ (n+1)^{(p^2-1)}$.
Thus,
when the code ensemble $\{C_{\bX}\}$ with the dimension $\lfloor n \frac{R_1}{\log q}\rfloor $ is universal$_2$ 
and 
$\{f_{\bY}\}$ is 
a $\varepsilon$-almost dual universal$_2$ ensemble of hash functions from $\FF_q^n /C_{\bX}$ to 
$\FF_q^{\lfloor n \frac{1-R_1-R_2}{\log q}\rfloor } $,
(\ref{12-23-4-q}), (\ref{12-23-3-q2b}), and (\ref{12-23-5-q-2}) yield that
\begin{align}
\rE_{\bX,\bY} d_1'(f_{\bY}( A_{1,n} ) |[A_n]_{C_{\bX}},E_n|
\rho_{A,E}^{\otimes n}) 
\le &
(4+(n+1)^{(p^2-1)/2} \sqrt{\varepsilon})
e^{ n \frac{s}{2} (-R_2 +H_{\frac{1}{1+s}}(X|Z|P_{X,Z}))} 
\Label{12-24-2}, \\ 
\rE_{\bX,\bY} I'(f_{\bY}( A_{1,n} ) |[A_n]_{C_{\bX}},E_n|
\rho_{A,E}^{\otimes n}) 
\le &
\eta
\bigl( 
(4+(n+1)^{(p^2-1)/2} \sqrt{\varepsilon})
e^{ n \frac{s}{2} (-R_2 +H_{\frac{1}{1+s}}(X|Z|P_{X,Z}))} ,
 n \log p \bigr),
\Label{12-24-1b} \\
\rE_{\bX,\bY} I'(f_{\bY}( A_{1,n} ) |[A_n]_{C_{\bX}},E_n|
\rho_{A,E}^{\otimes n}) 
\le &
2 \eta
\bigl( 2 e^{ n \frac{s}{2-s} (-R_2 +H_{1-s}(X|Z|P_{X,Z}))} , 
{\tiny{\frac{\varepsilon (n+1)^{(p^2-1)}}{4}}} 
+ n \log p\bigr).
\Label{12-24-1}
\end{align} 
In particular, when
$\{f_{\bY}\}$ is a universal$_2$ ensemble of hash functions,
due to (\ref{12-23-2-q}), (\ref{12-23-3-q2}), and (\ref{12-23-5-4}),
the real number $\varepsilon$ can be replaced by $1$ 
in the above inequalities.

Here, we need a remark for (\ref{12-24-1b}).
The second input of the function $\eta$ in (\ref{12-24-1b})
is $n \log p $ not $2n \log p$.
In this case, the state $\rho_A$ is the uniform distribution, 
we can use (\ref{8-26-9-a-q}) instead of (\ref{8-26-9-q}).
Hence, we can replace $2n \log p$ by $n \log p $.

The exponents 
$e_{\rG,\rq}(\rho_{A,E}|\log p -R_2)$
and 
$e_{\rH,\rq}(\rho_{A,E}|\log p -R_2)$
are calculated as 
\begin{align}
e_{\rG,\rq}(\rho_{A,E}|\log p -R_2)
=& \max_{0\le s \le 1} \frac{s}{2} 
(R_2 -H_{1-\frac{s}{1+s}}(X|Z|P_{X,Z})),
\Label{12-23-6-qq2}\\
e_{\rH,\rq}(\rho_{A,E}|\log p -R_2)
=&\max_{0\le s \le 1} 
\frac{s}{2-s} (R_2 -H_{1-s}(X|Z|P_{X,Z})) 
=
\max_{0\le t \le 1} 
\frac{t}{2} (R_2 -H_{1-\frac{2t}{2+t}}(X|Z|P_{X,Z})) 
\Label{12-23-7-qq2},
\end{align} 
where $\frac{s}{2-s}=\frac{t}{2}$.
In fact,
our bound in (\ref{12-23-6-qq}) is the same as 
the bound obtained by the recent paper \cite[(60)]{Tsuru} via the phase error correction approach.
This fact seems the goodness of our bound and our approach. 

Since $\frac{s}{1+s} \le \frac{2s}{2+s}$ for $s \in [0,1]$, 
Lemma \ref{L22-1} guarantees that $
H_{1-\frac{2s}{2+s}}(X|Z|P_{X,Z})
\ge H_{1-\frac{s}{1+s}}(X|Z|P_{X,Z})$, which implies
$e_{\rH,\rq}(\rho_{A,E}|\log p -R_2)
\le e_{\rG,\rq}(\rho_{A,E}|\log p -R_2)$.
That is, (\ref{12-24-1b}) gives a better exponent than (\ref{12-24-1}).
Since the relation (\ref{ineq-7-23-2}) holds due to (\ref{8-30-b}),
this case can be regard as a special case of Lemma \ref{8-29-10}.
Thus, we obtain
\begin{align}
\liminf_{n\to \infty}\frac{-1}{n}\log 
\rE_{\bX,\bY} d_1'(f_{\bY}( A_{1,n} ) |[A_n]_{C_{\bX}},E_n|
\rho_{A,E}^{\otimes n}) 
& \ge 
e_{\rG,\rq}(\rho_{A,E}|\log p -R_2)
\Label{12-23-6-qq}\\
\liminf_{n\to \infty}\frac{-1}{n}\log 
\rE_{\bX,\bY} I'(f_{\bY}( A_{1,n} ) |[A_n]_{C_{\bX}},E_n|
\rho_{A,E}^{\otimes n}) 
& \ge 
e_{\rG,\rq}(\rho_{A,E}|\log p -R_2)
\Label{12-23-7-qq}.
\end{align} 
However, there still exists a possibility that
the evaluation (\ref{12-24-1}) gives a better evaluation than 
(\ref{12-24-1b})
in the finite length setting.

\subsection{Independent case}\Label{s10-2}
Next, we consider the case when the two random variables $X$ and $Z$ are independent,
Eve's state $\rho_{E|j}$ has the following form:
\begin{align*}
\rho_{E|j} =
 |j:P_{X}\rangle \langle j:P_{X}|
\otimes \sum_{z=0}^{p-1}P_Z(z) |z\rangle_Z~_Z\langle z| ,
\quad
 |j:P_{X}\rangle :=& \sum_{x=0}^{p-1} \omega^{j x}
\sqrt{P_{X}(x)} |x \rangle_{X} .
\end{align*}
In this case, the system spanned by $\{|z\rangle_Z\}$
has no correlation with $j$,
and only 
the system spanned by $\{|x\rangle_X\}$
has correlation with $j$.
So, we can replace $\rho_{E|j}$ by the following way:
\begin{align*}
\rho_{E|j} =
 |j:P_{X}\rangle \langle j:P_{X}|.
\end{align*}
In this case,
the numbers of eigenvalues of $\rho_E$ and 
$\Tr_A \rho_{A,E}^{1+s}$ are less than $p$.
Hence, 
the numbers of eigenvalues of 
$\rho_E^{\otimes n}$ and 
$\Tr_A (\rho_{A,E}^{\otimes n})^{1+s}$ 
are less than $ (n+1)^{(p-1)}$.
When we choose $\varepsilon =1$ for simplicity, 
the inequalities  
(\ref{12-24-2}), (\ref{12-24-1b}), and (\ref{12-24-1}) can be 
simplified to
\begin{align}
\rE_{\bX,\bY} d_1'(f_{\bY}( A_{1,n} ) |[A_n]_{C_{\bX}},E_n|
\rho_{A,E}^{\otimes n}) 
\le &
(4+(n+1)^{(p-1)/2} 
)
e^{ n \frac{s}{2} (-R_2 +H_{\frac{1}{1+s}}(X|P_{X}))} ,
\Label{12-24-2-q} \\ 
\rE_{\bX,\bY} I'(f_{\bY}( A_{1,n} )|[A_n]_{C_{\bX}},E_n|\rho_{A,E}^{\otimes n}) 
\le &
\eta ((4+(n+1)^{(p-1)/2} 
)
e^{ n \frac{s}{2} (-R_2 +H_{\frac{1}{1+s}}(X|P_{X}))} ,
n (\log p) ),\Label{1-2-1} \\
\rE_{\bX,\bY} I'(f_{\bY}( A_{1,n} )|[A_n]_{C_{\bX}},E_n|\rho_{A,E}^{\otimes n}) 
\le &
2 \eta
( 2 e^{ n \frac{s}{2-s} (-R_2 +H_{1-s}(X|P_{X}))} ,
(n+1)^{(p-1)}/4 + n \log p ).\Label{12-24-1-q}
\end{align} 
Hence, we obtain
\begin{align}
& \liminf_{n \to \infty}
-\frac{1}{n}\log 
\rE_{\bX,\bY} d_1'(f_{\bY}( A_{1,n} ) |[A_n]_{C_{\bX}},E_n|
\rho_{A,E}^{\otimes n}) 
\ge 
e_{\rG,\rq}(\rho_{A,E}|\log p- R_2)
\nonumber\\
=&\max_{0 \le t \le 1}
\frac{t}{2} (R_2 -H_{\frac{1}{1+t}}(X|P_{X}))  
= \max_{0 \le s \le 1/2}
\frac{s R_2-sH_{1-s}(P_X) }{2(1-s)}
\Label{8-18-1}.
\end{align} 

Here, we compare the evaluations (\ref{1-2-1}) and (\ref{12-24-1-q}).
As is explained in the previous subsection,
the exponent of (\ref{1-2-1}) is better than (\ref{12-24-1-q}).
This relation can be numerically checked in 
Fig. \ref{f1} with the parameters $p=2$, $P_X(0)=0.9$, $P_X(1)=0.1$,
and $R \in (0.53,0.58)$.
However, in the case of a finite $n$,
$-\frac{1}{n}\log \min_{0\le s \le 1} 
\hbox{(RHS of (\ref{1-2-1}))} $
is not necessarily larger than 
$-\frac{1}{n}\log \min_{0\le s \le 1} 
\hbox{(RHS of (\ref{12-24-1-q}))}$.
The relation between these two quantities is also numerically 
demonstrated in Fig. \ref{f1} with the same parameters 
when $n=10,000$.
This numerical result suggests that
the exponents can not necessarily decide 
the order of advantages with the finite size $n$
when $n$ is not sufficiently large.

\begin{figure}[htbp]
\begin{center}
\scalebox{0.7}{\includegraphics[scale=1]{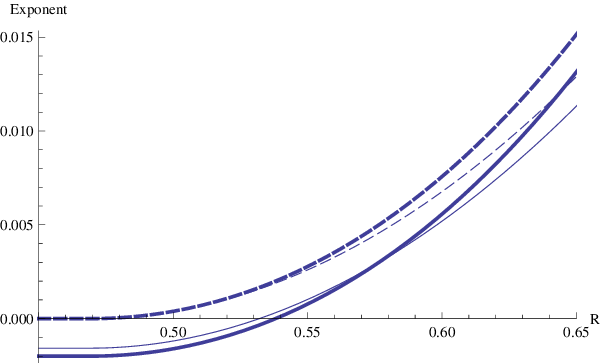}}
\end{center}
\caption{
Lower bounds of exponent.
Thick dashed line: 
$e_{\rG,\rq}(\rho_{A,E}|\log p -R_2)
= \max_{0\le s \le 1} \frac{s}{2} 
(R_2 -H_{1-\frac{s}{1+s}}(X|P_{X}))$
Normal dashed line: 
$e_{\rH,\rq}(\rho_{A,E}|\log p -R_2)
=\max_{0\le s \le 1} 
\frac{s}{2-s} (R_2 -H_{1-s}(X|P_{X})) 
=
\max_{0\le t \le 1} 
\frac{t}{2} (R_2 -H_{1-\frac{2t}{2+t}}(X|P_{X})) $
Thick line: $-\frac{1}{n}\log \min_{0\le s \le 1}$ (RHS of (\ref{1-2-1})),
Normal line: $-\frac{1}{n}\log \min_{0\le s \le 1}$ (RHS of (\ref{12-24-1-q})) 
with $n=10,000$, $p=2$, $P_X(0)=0.9$, $P_X(1)=0.1$.}
\Label{f1}
\end{figure}%

Next, we consider the case when there is no error in $Z$ basis.
In this case, it is sufficient to apply only privacy amplification.
Hence, we evaluate the upper bounds
$\Delta_{d,2}(e^{nR}, \varepsilon_1|\rho_{A,E}^{\otimes n}) $
as follows.

\begin{lem}\Label{L8-16-6}
When $p=2$ and $\rho_{A,E}=\sum_{x \in \FF_2} \frac{1}{2} |x\rangle \langle x|\otimes |x :P_X\rangle \langle x :P_X|$,
we have
\begin{align}
\lim_{n \to \infty}\frac{-1}{n} \log 
\Delta_{d,2}(e^{nR}, \varepsilon_1|\rho_{A,E}^{\otimes n}) 
&=e_{\rG,\rq}(\rho_{A,E}|R)
=\max_{0 \le s \le 1/2}
\frac{-sH_{1-s}(P_X) +s (\log 2-R)}{2(1-s)}.\Label{8-16-20}
\end{align}
\end{lem}

Lemma \ref{L8-16-6} is proven in Appendix \ref{aL8-16-6}.
 
\section{Conclusion}
We have derived upper bounds for the leaked information in the modified mutual information
criterion and the $L_1$ distinguishability criterion
in the quantum case
when we apply 
a family of universal$_2$ hash functions or
a family of $\varepsilon$-almost dual universal$_2$ hash functions for privacy amplification 
(Theorems \ref{Lem14} and \ref{Lem12-q} in Section \ref{s4-1}). 
Then, we have derived lower bounds on their exponential decreasing rates in the i.i.d. setting.
(Theorems \ref{t-3-16-2} and \ref{t-3-16-3} in Section \ref{s4-1-b}). 
The obtained bound for  the $L_1$ distinguishability criterion
has been shown to be tight in the qubit case
when the state is generated by transmission via Pauli channel (Appendix \ref{aL8-16-6}).
The obtained exponents are summarized in Table \ref{table2}.
We have also applied our result to the case when we need error correction.
In this case, we apply the privacy amplification after error correction as given in Subsection \ref{s5-1}.
Then, we have derived upper bounds for 
the information leaked with respect to the final keys in the respective criteria 
as well as upper bounds for the probability for disagreement in the final keys 
(Theorems \ref{L3-20-7}, \ref{L3-20-8}, and \ref{L12-31-2} in Section \ref{s5}). 
Applying them to the i.i.d. setting,
we have derived lower bounds on their exponential decreasing rates.
(Theorems \ref{t3-20-20}, \ref{t3-20-21}, \ref{p3-16-1c}, and \ref{t3-20-11b} in Section \ref{s5}).

\begin{table}[htb]
  \caption{Summary of obtained lower bounds on exponents.}
\begin{center}
  \begin{tabular}{|l|c|c|c} \hline
Task & L1  & MMI   \\ \hline
\multirow{2}{*}{PV (R\'{e}nyi)} 
& \multirow{2}{*}{$e_{\rG,\rq}(\rho_{A,E}|R)$}  & $e_{\rH,\rq}(\rho_{A,E}|R)$, \\
&   &  $e_{\rG,\rq}(\rho_{A,E}|R)$ \\ \hline
{PV \& fixed EC} 
& {$e_{\rG,\rq}(\rho_{A,E}|\log q-R_2)$} 
& $e_{\rG,\rq}(\rho_{A,E}|\log q-R_2)$ \\ 
\hline
\multirow{2}{*}{PV \& randomized EC} 
& \multirow{2}{*}{no improvement} 
& $e_{\rH,\rq}(\rho_{A,E}|\log q-R_2)$ \\ 
&& $e_{\rG,\rq}(\rho_{A,E}|\log q-R_2)$\\ \hline
  \end{tabular}
\end{center}

\vspace{2ex}
$R$ is the key generation rate.
$R_2$ is the sacrifice rate.
PV (R\'{e}nyi) is 
the exponent for privacy amplification via our approximate smoothing of R\'{e}nyi entropy of order 2.
EC is error correction.
L1 is the $L_1$ distinguishability criterion.
MMI is the modified mutual information criterion. 
\Label{table2}
\end{table}

Since a family of $\varepsilon$-almost dual universal$_2$ hash functions 
is a larger family of liner universal$_2$ hash functions, 
the obtained result suggests a possibility of the existence of 
an effective privacy amplification protocol 
with a smaller calculation time than known privacy amplification protocols.
In fact, as shown in the forthcoming paper \cite{H-T},
there exists an example of $\varepsilon$-almost dual universal$_2$ hash functions
with a smaller calculation amount and smaller number of random variables 
than the concatenation of Toeplitz matrix and the identity matrix.
Hence, it is expected that the obtained evaluation has a future application from an applied viewpoint. 

In fact, our bounds have polynomial factors in the quantum setting.
When the order of these polynomial factors are large,
the bounds do not work well when the number $n$ is not sufficiently large.
Fortunately, 
as is discussed in Subsection \ref{s4-1-b},
some of them have the order $n^{3/2}$ at most.
We can expect that these types of bounds work well 
even when 
the number $n$ is not sufficiently large.
These types of bounds and these discussions have been extended to the case when error correction is needed.
Further, as is discussed in Subsubsection \ref{s4-1-0-b-qb},
we can expect that some of obtained bounds work well even in the 
non-i.i.d. case.

In Section \ref{s10},
we have applied our result to the case when
Eve obtains the all information leaked to the environment via Pauli channel.
In this case,
our bounds can be described by using the joint classical distribution 
with respect to the bit error and the phase error.
We have numerically compared the obtained lower bounds on 
the exponential decreasing rates for leaked information.


Due to Pinsker inequality and Inequality (\ref{8-26-9-q}), 
the exponential convergence of one criterion yields the exponential convergence of the other criterion.
However, we have shown that
better exponential decreasing rates can be obtained by separate derivations. 
Our approximate smoothing of R\'{e}nyi entropy of order 2 yields the lower bound
$e_{\rG,\rq}(P_{A,E}|R)$ 
of the exponent of the $L_1$ distinguishability criterion,
which yields the lower bound $e_{\rG,\rq}(P_{A,E}|R)$ 
of the exponent of the modified mutual information criterion by using Pinsker inequality.
Similarly, our approximate smoothing of R\'{e}nyi entropy of order 2 yields the lower bound
$e_{\rH,\rq}(P_{A,E}|R)$ 
of the exponent of the modified mutual information criterion,
which yields the lower bound
$\frac{e_{\rH,\rq}(P_{A,E}|R)}{2}$ 
of the exponent of the $L_1$ 
distinguishability criterion by Inequality (\ref{8-19-14-q}).
Since $e_{\rG,\rq}(P_{A,E}|R) \ge \frac{e_{\rH,\rq}(P_{A,E}|R)}{2}$, 
we can conclude that
the evaluation of the $L_1$ distinguishability criterion
becomes worse if it goes through another criterion.
However, 
since we have not derived the definitive relation between $e_{\rH,\rq}(P_{A,E}|R)$ and $e_{\rG,\rq}(P_{A,E}|R)$,
we cannot say the same thing for the modified mutual information criterion.
The relation is also a future problem. 

\section*{Acknowledgments}
The author is grateful to Dr. Toyohiro Tsurumaru,
Dr. Shun Watanabe, 
Dr. Marco Tomamichel,
Dr. William Henry Rosgen,
Dr. Li Ke,
and Dr. Markus Grassl for a helpful comments.
He would like to express his appreciation to the referees of this paper for their helpful comments.
He is also grateful to the
referee of the first version of \cite{Tsuru}
for informing the literatures \cite{DS05,FS08}.
He also is partially supported by a MEXT Grant-in-Aid for Young Scientists (A) No. 20686026 and Grant-in-Aid for Scientific Research (A) No. 23246071.
He is partially supported by the National Institute of Information and Communication Technology (NICT), Japan.
The Centre for Quantum Technologies is funded by the
Singapore Ministry of Education and the National Research Foundation
as part of the Research Centres of Excellence programme. 

\appendices

\section{Modified mutual information criterion}\Label{s8-24}
It is natural to adopt a quantity expressing the difference between the true state and the ideal state 
$\rho_{\mix,A} \otimes \rho_{E}$ as a security criterion.
However, there are several quantities  expressing the difference between two states.
Both $d_1'(A|E|\rho)$ and $I'(A|E|\rho)$
are characterized in this way.
Here, we show that the modified mutual criterion $I'(A|E|\rho)$ can be derived in a natural way.

It is natural assume the following condition for the security criterion $C(A;E|\rho)$
as well as the unitary invariance on ${\cal H}_E$ and the permutation invariance on ${\cal H}_A$. 
\begin{description}
\item[\bf C1]{\bf Chain rule}
$C(A,B|E|\rho)=C(B|E|\rho)+C(A|B,E|\rho)$.

\item[\bf C2]{\bf Linearity}
When two states $\rho_1$ and $\rho_2$ are distinghuishable on ${\cal H}_E$,
$C(A|E|\lambda \rho_1+(1-\lambda) \rho_2)=\lambda C(A|E|\rho_1)+(1-\lambda)C(A|E|\rho_2)$.


\item[\bf C3]{\bf Range}
$ \log d_A \ge C(A|E|\rho) \ge 0$.

\item[\bf C4]{\bf Ideal case}
$C(A|E|\rho_{\mix,A} \otimes \rho_E)=0$.

\item[\bf C5]{\bf Normalization}
$C(A|E||a\rangle \langle a|\otimes \rho_E)=\log d_A$.
\end{description}
Unfortunately, 
the $L_1$ distinguishability does not satisfies {\bf C1 Chain rule}.
However, 
we have the following lemma.
\begin{lem}\Label{l8-24-1}
The modified mutual information criterion $I'(A|E|\rho)= \log d_A -H(A|E|\rho)$
satisfies all of these conditions.
\end{lem}
Further, we have the following theorem.
\begin{thm}\Label{l8-24-2}
When 
$C(A|E|\rho)$ satisfies all of the above properties and 
$\rho'$ is written as $\sum_{a,e} P_{A,E}(a,e)|a,e\rangle \langle a,e|$,
$C(A|E|\rho')= I'(A|E|\rho')= \log d_A -H(A|E|\rho')$.
\end{thm}

That is,
in the classical case, the security criterion is written by using the conditional entropy.
In the quantum case, the above theorem cannot determine uniquely the security criterion.
Since the most natural quantum extension of the conditional entropy
is the quantum conditional entropy $H(A|E|\rho)$.
Hence, it is natural to adopt
the modified mutual information criterion $I'(A|E|\rho)$ as a security criterion.
In particular, 
if one emphasizes {\bf C1 Chain rule} 
rather than the universal composability,
it is better employ the modified mutual information criterion $I'(A|E|\rho)$.

\begin{proofof}{Lemma \ref{l8-24-1}}
We can trivially check the conditions {\bf C4 Ideal case} 
and {\bf C5 Normalization}.
We show other conditions.

{\bf C1 Chain rule} can be shown as follows.
\begin{align*}
&I'(A,B|E|\rho)
=\log d_A+\log d_B
- H(A,B,E|\rho) + H(E|\rho) \\
=&\log d_A+\log d_B
- H(B,E|\rho) + H(E|\rho)
- H(A,B,E|\rho) + H(B,E|\rho) \\
=& \log d_A+\log d_B
- H(B|E|\rho)
- H(A|B,E|\rho)
= I'(A|B,E|\rho) +I'(B|E|\rho).
\end{align*}

When two states $\rho_1$ and $\rho_2$ are distinghuishable on ${\cal H}_E$,
\begin{align*}
& I'(A|E|\lambda \rho_1+(1-\lambda) \rho_2)
=\log d_A
- H(A,E|\lambda \rho_1+(1-\lambda) \rho_2) + H(E|\lambda \rho_1+(1-\lambda) \rho_2) \\
=&\log d_A
- \lambda  H(A,E|\rho_1)
-(1-\lambda) H(A,E|\rho_2) 
- h(\lambda)
+ \lambda  H(E|\rho_1)
+(1-\lambda) H(E|\rho_2) 
+ h(\lambda) \\
=&\log d_A
- \lambda  H(A,E|\rho_1)
-(1-\lambda) H(A,E|\rho_2) 
+ \lambda  H(E|\rho_1)
+(1-\lambda) H(E|\rho_2) \\
=&\lambda I'(A|E|\rho_1)+(1-\lambda)I'(A|E|\rho_2),
\end{align*}
which implies {\bf C2 Linearity}.

$I'(A|E|\rho)=D(\rho\|\rho_{\mix,A}\otimes \rho_E)\ge 0$.
Since $\rho$ is separable, $H(A,E|\rho)\ge 0 $ \cite{NK}.
Hence, $I'(A|E|\rho)$ satisfies {\bf C3 Range}.
\end{proofof}

\begin{proofof}{Theorem \ref{l8-24-2}}
We discuss 
$\tilde{H}(A|E|\rho):=\log d_A - C(A|E|\rho)$.
Due to {\bf C2 Linearity},
we have
\begin{align*}
\tilde{H}(A|E|\rho)=
\sum_e P_E(e) \tilde{H}(A|E|\sum_{a} P_{A|E}(a|e)|a,e\rangle \langle a,e|).
\end{align*}
Further, we see that the quantity $\tilde{H}(A|E|\sum_{a} P_{A|E}(a|e)|a,e\rangle \langle a,e|)$ satisfies 
Khinchin's axioms \cite{Khinchin} for entropy due to the remaining conditions.
Hence, we find that $\tilde{H}(A|E|\sum_{a} P_{A|E}(a|e)|a,e\rangle \langle a,e|)=H(P_{A|E=e})$.
Thus, $\tilde{H}(A|E|\rho)$ is equal to the conditional entropy ${H}(A|E|\rho)$.
Hence, $C(A|E|\rho)=I'(A|E|\rho)$.
\end{proofof}

\section{Proof of Lemma \ref{L31}}\Label{sL31}
First, 
we focus on the 
spectral decomposition of $\sigma$:
$\sigma= \sum_i s_i E_i$.
Since $x \mapsto x^\frac{1+s}{2}$ is operator concave,
\begin{align}
E_i \rho^\frac{1+s}{2} E_i 
\le 
(E_i \rho E_i )^\frac{1+s}{2} .
\Label{8-29-r1}
\end{align}
When $v$ is the number of eigenvectors of $\sigma$
Inequality (\ref{8-15-23}) implies
\begin{align}
\rho^\frac{1+s}{2} 
\le
v \sum_i E_i \rho^\frac{1+s}{2} E_i .
\Label{8-29-r2}
\end{align}
Since $E_i$ and $E_{i'}$ are orthogonal to each other 
for $i\neq i'$,
\begin{align}
\sum_i (E_i \rho E_i )^\frac{1+s}{2} 
=
( \sum_i E_i \rho E_i )^\frac{1+s}{2} .
\Label{8-29-r3}
\end{align}
Combining (\ref{8-29-r1}), (\ref{8-29-r2}), and (\ref{8-29-r3}), 
we obtain
\begin{align*}
& \sigma^{-\frac{s}{4}} \rho^\frac{1+s}{2} \sigma^{-\frac{s}{4}}
\le
v \sigma^{-\frac{s}{4}} 
\sum_i E_i \rho^\frac{1+s}{2} E_i \sigma^{-\frac{s}{4}} \\
\le &
v \sum_i
\sigma^{-\frac{s}{4}} 
(E_i \rho E_i )^\frac{1+s}{2} 
\sigma^{-\frac{s}{4}}
=
v \sigma^{-\frac{s}{4}} 
({\cal E}_{\sigma}(\rho))^\frac{1+s}{2} 
\sigma^{-\frac{s}{4}}.
\end{align*}
Thus, (\ref{8-21-7-q}) implies 
\begin{align}
& e^{\underline{\psi}(s|\rho\|\sigma)} 
= \Tr (\sigma^{-\frac{s}{4}} \rho^\frac{1+s}{2} \sigma^{-\frac{s}{4}})^2 
\le 
v\Tr (\sigma^{-\frac{s}{4}} 
({\cal E}_{\sigma}(\rho))^\frac{1+s}{2} 
\sigma^{-\frac{s}{4}})^2 
=
v e^{\underline{\psi}(s|{\cal E}_{\sigma}(\rho) \|\sigma)} 
=
v e^{\psi(s|{\cal E}_{\sigma}(\rho) \|\sigma)} 
\le 
v e^{\psi(s|\rho\|\sigma)} .
\Label{8-27-1}
\end{align}
That is, $\underline{\psi}(s|\rho\|\sigma)\le
\log v+\psi(s|\rho\|\sigma)$.
When we denote the number of eigenvalues of $\sigma^{\otimes n}$
by $v_n$,
we have
\begin{align}
n \underline{\psi}(s|\rho\|\sigma)
=\underline{\psi}(s|\rho^{\otimes n}\|\sigma^{\otimes n})
\le
\log v_n+\psi(s|\rho^{\otimes n}\|\sigma^{\otimes n})
=
\log v_n+ n \psi(s|\rho\|\sigma).\Label{8-29-r4}
\end{align}
Dividing (\ref{8-29-r4}) by $n$ and taking the limit $n \to \infty$, we obtain (\ref{8-26-2}).

\section{Proof of Lemma \ref{L3-26-1}}\Label{sL3-26-1}
The convexity of $\psi(s|\rho\|\sigma)$ is shown in \cite[Exercises 2.24]{Hayashi-book}.
Using this fact, we obtain the desired argument with respect to
$\psi(s|\rho\|\sigma)$.
The convexity of $\underline{\psi}(s|\rho\|\sigma)$ 
can be shown in the following way:
\begin{align*}
\frac{d \underline{\psi}(s|\rho\|\sigma)}{ds} 
=&
\frac{\Tr (\log \rho-\log \sigma) \rho^{\frac{1+s}{2}} \sigma^{-s/2}\rho^{\frac{1+s}{2}} \sigma^{-s/2}}
{\Tr\rho^{\frac{1+s}{2}} \sigma^{-s/2}\rho^{\frac{1+s}{2}} \sigma^{-s/2}} ,\\
\frac{d^2 \underline{\psi}(s|\rho\|\sigma)}{ds^2} 
=&
\frac{\Tr (\log \rho-\log \sigma) \rho^{\frac{1+s}{2}} (\log \rho-\log \sigma) \sigma^{-s/2}\rho^{\frac{1+s}{2}} \sigma^{-s/2}}
{2 \Tr\rho^{\frac{1+s}{2}} \sigma^{-s/2}\rho^{\frac{1+s}{2}} \sigma^{-s/2}} \\
&+
\frac{\Tr (\log \rho-\log \sigma) \rho^{\frac{1+s}{2}} \sigma^{-s/2} \rho^{\frac{1+s}{2}} (\log \rho-\log \sigma) \sigma^{-s/2}}
{2 \Tr\rho^{\frac{1+s}{2}} \sigma^{-s/2}\rho^{\frac{1+s}{2}} \sigma^{-s/2}} 
-
(\frac{\Tr (\log \rho-\log \sigma ) \rho^{\frac{1+s}{2}} \sigma^{-s/2}\rho^{\frac{1+s}{2}} \sigma^{-s/2}}
{\Tr\rho^{\frac{1+s}{2}} \sigma^{-s/2}\rho^{\frac{1+s}{2}} \sigma^{-s/2}})^2 .
\end{align*}
Now, we consider two kinds of inner products between two matrices $X$ and $Y$:
\begin{align*}
\langle Y, X \rangle_1 
:=
\Tr X \rho^{\frac{1+s}{2}} Y^\dagger \sigma^{-s/2}\rho^{\frac{1+s}{2}} \sigma^{-s/2} , \quad
 \langle Y, X \rangle_2 
:= 
\Tr X \rho^{\frac{1+s}{2}} \sigma^{-s/2} \rho^{\frac{1+s}{2}} Y^\dagger \sigma^{-s/2}.
\end{align*}
Applying Schwarz inequality to the case of $X=(\log \rho-\log \sigma)$ and $Y=I$,
we obtain
\begin{align*}
& \Tr (\log \rho-\log \sigma) \rho^{\frac{1+s}{2}} (\log \rho-\log \sigma) \sigma^{-s/2}\rho^{\frac{1+s}{2}} \sigma^{-s/2} 
\cdot \Tr\rho^{\frac{1+s}{2}} \sigma^{-s/2}\rho^{\frac{1+s}{2}} \sigma^{-s/2}
\ge 
(\Tr (\log \rho-\log \sigma ) \rho^{\frac{1+s}{2}} \sigma^{-s/2}\rho^{\frac{1+s}{2}} \sigma^{-s/2})^2 
\end{align*}
and
\begin{align*}
\Tr (\log \rho-\log \sigma) \rho^{\frac{1+s}{2}} \sigma^{-s/2} \rho^{\frac{1+s}{2}} (\log \rho-\log \sigma) \sigma^{-s/2}
\cdot \Tr\rho^{\frac{1+s}{2}} \sigma^{-s/2}\rho^{\frac{1+s}{2}} \sigma^{-s/2} 
\ge &
(\Tr (\log \rho-\log \sigma ) \rho^{\frac{1+s}{2}} \sigma^{-s/2}\rho^{\frac{1+s}{2}} \sigma^{-s/2})^2 .
\end{align*}
Therefore,
\begin{align*}
\frac{\Tr (\log \rho-\log \sigma) \rho^{\frac{1+s}{2}} (\log \rho-\log \sigma) \sigma^{-s/2}\rho^{\frac{1+s}{2}} \sigma^{-s/2}}
{\Tr\rho^{\frac{1+s}{2}} \sigma^{-s/2}\rho^{\frac{1+s}{2}} \sigma^{-s/2}} 
\ge &
(\frac{\Tr (\log \rho-\log \sigma ) \rho^{\frac{1+s}{2}} \sigma^{-s/2}\rho^{\frac{1+s}{2}} \sigma^{-s/2}}
{\Tr\rho^{\frac{1+s}{2}} \sigma^{-s/2}\rho^{\frac{1+s}{2}} \sigma^{-s/2}})^2 ,\\
\frac{\Tr (\log \rho-\log \sigma) \rho^{\frac{1+s}{2}} \sigma^{-s/2} \rho^{\frac{1+s}{2}} (\log \rho-\log \sigma) \sigma^{-s/2}}
{\Tr\rho^{\frac{1+s}{2}} \sigma^{-s/2}\rho^{\frac{1+s}{2}} \sigma^{-s/2}} 
\ge &
(\frac{\Tr (\log \rho-\log \sigma ) \rho^{\frac{1+s}{2}} \sigma^{-s/2}\rho^{\frac{1+s}{2}} \sigma^{-s/2}}
{\Tr\rho^{\frac{1+s}{2}} \sigma^{-s/2}\rho^{\frac{1+s}{2}} \sigma^{-s/2}})^2 ,
\end{align*}
which implies
\begin{align*}
\frac{d^2 \underline{\psi}(s|\rho\|\sigma)}{ds^2} \ge 0.
\end{align*}
In particular, when $\rho$ and $\sigma$ are not completely mixed,
the above inequalities are strict.
Hence, 
the functions $s \mapsto
\psi(s|\rho\|\sigma),
\underline{\psi}(s|\rho\|\sigma)$ are strictly convex

\section{Proof of Lemma \ref{cor1-q}}\Label{scor1-q}
Assume that $s \in (0,\infty)$.
For 
two non-negative matrices $X$ and $Y$, 
the reverse operator H\"{o}lder inequality
\begin{align*}
\Tr X Y \ge
(\Tr X^{1/(1+s)})^{1+s}
(\Tr Y^{-1/s})^{-s}
\end{align*}
holds.
Substituting 
$\sum_a P_A(a)^{1+s} \rho_{E|a}^{1+s}$
and $\sigma_E^{-s}$
to $X$ and $Y$, we obtain  
\begin{align*}
& e^{-s H_{1+s}(A|E|\rho_{A,E}\| \sigma_E )} 
= 
\Tr 
\sum_a 
( P_{A}(a) \rho_{E|a})^{1+s}
\sigma_E^{-s} 
\ge 
(\Tr (\sum_a ( P_{A}(a) \rho_{E|a} )^{1+s})^{1/(1+s)})^{1+s}
(\Tr \sigma_E^{-s\cdot -1/s})^{-s} \\
= &
(\Tr 
(\sum_a ( P_{A}(a) \rho_{E|a} )^{1+s})^{1/(1+s)})^{1+s} 
= e^{-s H_{1+s}^{\rG}(A|E|\rho_{A,E})} .
\end{align*}
Since the equality holds
when $\sigma_E=
(\sum_{a} ( P_{A}(a) \rho_{E|a} )^{1+s})^{1/(1+s)} /
\Tr  (\sum_{a} ( P_{A}(a) \rho_{E|a} )^{1+s})^{1/(1+s)}$,
we obtain
\begin{align*}
\min_{\sigma_E}e^{-s H_{1+s}(A|E|\rho_{A,E}\| \sigma_E )} 
= 
e^{-s H_{1+s}^{\rG}(A|E|\rho_{A,E})} ,
\end{align*}
which implies (\ref{8-26-8}).

When $s \in (-1,0)$, 
applying the operator H\"{o}lder inequality 
$\Tr X Y \le
(\Tr X^{1/(1+s)})^{1+s}
(\Tr Y^{-1/s})^{-s}$
instead of the reverse operator H\"{o}lder inequality,
we obtain
\begin{align*}
 e^{-s H_{1+s}(A|E|\rho_{A,E}\| \sigma_E )} 
\le & e^{-s H_{1+s}^{\rG}(A|E|\rho_{A,E})} .
\end{align*}
The equality can be shown in the same way.

\section{Proof of Lemma \ref{L8-16-6}}\Label{aL8-16-6}
\subsection{Outline of the proof}
Since (\ref{12-18-9b}) implies
\begin{align}
\lim_{n \to \infty}\frac{-1}{n} \log 
\Delta_{d,2}(e^{nR}, \varepsilon_1|\rho_{A,E}^{\otimes n}) 
& \ge e_{\rG,\rq}(\rho_{A,E}|R)
=\max_{0 \le s \le 1/2}
\frac{-sH_{1-s}(P_X) +s (\log 2-R)}{2(1-s)},
\end{align}
it is enough to show the opposite inequality.
For this purpose, we will show the following lemma.

\begin{lem}\Label{L8-16-3}
When we choose an $\lceil n(1-R) \rceil$-dimensional subspace $C_{\bbZ}\subset \FF_2^n$ with equal probability,
we obtain
\begin{align}
& \lim_{n \to \infty}\frac{-1}{n}\log
\rE_{\bbZ}
d_1'([A]_{C_{\bbZ}}|E|\rho_{A,E})
=
\max_{0 \le s \le 1/2}
\frac{-sH_{1-s}(P_X) +s (1-R)\log 2}{2(1-s)}.\Label{8-16-9}
\end{align}
\end{lem}

Here, we prove Lemma \ref{L8-16-6} by using Lemma \ref{L8-16-3}.
When we choose an $\lceil n(1-\frac{R'}{\log 2}) \rceil$-dimensional subspace $C_{\bbZ}\subset \FF_2^n$ with equal probability,
since the hash function $X \mapsto [X]_{C_{\bbZ}}$ satisfies the universal$_2$ condition,
we obtain
\begin{align}
\rE_{\bbZ}
d_1'([A]_{C_{\bbZ}}|E|\rho_{A,E})
\le 
\Delta_{d,2}(2^{\lfloor n \frac{R'}{\log 2} \rfloor},1|\rho_{A,E})
\le 
\Delta_{d,2}(e^{nR'},1|\rho_{A,E}),
\end{align}
which implies that
\begin{align}
& \limsup_{n \to \infty}\frac{-1}{n}\log
\Delta_{d,2}(e^{nR'},1|\rho_{A,E})
\le
\max_{0 \le s \le 1/2}
\frac{-sH_{1-s}(P_X) +s (\log 2-R')}{2(1-s)}.
\end{align}
Since Inequality (\ref{8-18-1}) is the opposite inequality,
we obtain (\ref{8-16-20}).

In the following, we prepare two lemmas for the proof of Lemma \ref{L8-16-3}.
Given a code $C \subset \FF_p^n$,
we can define its orthogonal space $C^{\perp} \subset \FF_p^n$. 
Then, for $[x_2]_{C^{\perp}}\in \FF_p^n/C^{\perp}$ and $x_1 \in [x_2]_{C^{\perp}}$,
we define the conditional distribution 
$P_{X|[X]_{C^{\perp}}}(x_1|[x_2]_{C^{\perp}}):= \frac{P_X(x_1)}{P_{[X]_{C^{\perp}}}([x_2]_{C^{\perp}})}$,
where $P_{[X]_{C^{\perp}}}([x_2]_{C^{\perp}}):
= \sum_{x_1 \in [x_2]_{C^{\perp}}} P_X(x_1)$.
Then, we define a pure state 
$|[a]_{C},[x_2]_{C^{\perp}}\rangle  
:= $\par\noindent$\sum_{x_1 \in [x_2]_{C^{\perp}}} \omega^{a x_1} \sqrt{P(x_1|[x_2]_{C^{\perp}})}|x_1 \rangle$
for $[a]_{C} \in \FF_p^n/C$ and $[x_2]_{C^{\perp}}\in \FF_p^n/C^{\perp}$.
Note that the definition of the state 
$|[a]_{C},[x_2]_{C^{\perp}}\rangle $ does not depend on the choice of the representatives of
$[a]_{C}$ and $[x_2]_{C^{\perp}}$ except for the phase factor.
Then, the relation
\begin{align*}
\rho_{E|[a]_{C}}:= \sum_{y \in C} \frac{1}{|C|} |a + y : P_X \rangle 
\langle j + y : P_X |
= \sum_{[x_2]_{C^{\perp}} \in \FF_p^n/C^{\perp}} 
P_X([x_2]_{C^{\perp}}) |[a]_{C},[x_2]_{C^{\perp}}\rangle 
\langle [a]_{C},[x_2]_{C^{\perp}}|.
\end{align*}
holds.
In order to describe the maximum likelihood estimator 
of the code $C^{\perp}$ under the distribution $P_X$,
we define $x([x_2]_{C^{\perp}}):= \argmax_{x_1 \in [x_2]_{C^{\perp}}} P_X(x_1) $.
Then, the decoding error probability is given as
\begin{align}
P_e(C^{\perp}):= 
1-\sum_{[x_2] \in \FF_p^n/C^{\perp}} P_X(x([x_2]))
1-\sum_{[x_2] \in \FF_p^n/C^{\perp}}
\max_{x_1 \in [x_2]}P_X(x_1).
\end{align}

\begin{lem}\Label{L8-16-1}
The relation
\begin{align}
& 2 \sum_{[x_2]_{C^{\perp}} \in \FF_p^n/C^{\perp}}
\sqrt{P_X(x([x_2]_{C^{\perp}}))(P_X([x_2]_{C^{\perp}})- P_X(x([x_2]_{C^{\perp}}))) } \nonumber \\
\le &
d_1'([A]_{C}|E|\rho_{A,E})=\| \rho_E - \rho_{E|[a]_{C}}\|_1 
\Label{8-16-1}\\
\le & 2 \sqrt{2 P_e({C^{\perp}})} \Label{8-16-2}
\end{align}
holds for $a \in \FF_2^n$.
\end{lem}
The proof of Lemma \ref{L8-16-1} is given in Appendix \ref{sL8-16-1}.

Now, we consider the binary case, i.e., the case of $\FF_2^n$.
We choose an $m$-dimensional subspace $C_{\bX}\subset \FF_2^n$ with equal probability.
That is, there are $G(m):=\prod_{i=0}^{m-1}\frac{2^n-2^i}{2^m-2^i}$ distinct 
$m$-dimensional subspaces in $\FF_2^n$.
Hence, we chose each of them with the probability $1/G(m)$.
\begin{lem}\Label{L8-16-2}
When 
we choose an $\lceil nR \rceil$-dimensional subspace $C_{\bX}\subset \FF_2^n$ with equal probability,
\begin{align}
& \lim_{n \to \infty}\frac{-1}{n}\log
\rE_{\bX}
\sum_{[x_2]_{C_{\bX}} \in \FF_2^n/C_{\bX}}
\sqrt{P_X^n(x([x_2]_{C_{\bX}}))(P_X^n([x_2]_{C_{\bX}})- P_X^n(x([x_2]_{C_{\bX}}))) } 
\nonumber \\
=& \lim_{n \to \infty}\frac{-1}{n}\log
\rE_{\bX} P_e(C_{\bX}) \nonumber \\
=& \frac{1}{2} \min_{Q:\log 2(1 -R) \ge H(Q)} D(Q\|P_X) + \log 2(1- R) - H(Q) \nonumber \\
=& \frac{1}{2} 
\max_{0 \le s \le 1/2}
\frac{-sH_{1-s}(P_X) +s \log 2(1-R)}{1-s}.
\Label{8-13-h}
\end{align}
\end{lem}
The proof of Lemma \ref{L8-16-2} is given in Appendix \ref{sL8-16-2}.

\begin{proofof}{Lemma \ref{L8-16-3}}
We apply Lemma \ref{L8-16-2} to the case $C_{\bX}=C_{\bbZ}^{\perp}$.
Then, the exponential decreasing rates of the upper and lower bounds given in Lemma \ref{L8-16-1}
are $\max_{0 \le s \le 1/2} \frac{sH_{1-s}(P_X) -s (1-R)\log 2}{1-s}$,
which implies (\ref{8-16-9}).
\end{proofof}

\subsection{Proof of Lemma \ref{L8-16-1}}\Label{sL8-16-1}
In this proof, we abbreviate $[x]_{C^{\perp}}$ by $[x]$.
Since 
\begin{align}
\| \rho_E - \rho_{E|[a]_C}\|_1
=
\| \sW(0,z) (\rho_E - \rho_{E|[a]_C}) \sW(0,z)^\dagger\|_1
=\| \rho_E - \rho_{E|[a+z]_C}\|_1
\end{align}
for $z,a \in \FF_p^n$,
we have
$d_1'([A]_C|E|\rho_{A,E})=\| \rho_E - \rho_{E|[a]_C}\|_1$.

Next, we prove the inequality (\ref{8-16-2}).
For this purpose, we define the fidelity as
$F(\rho_E,\rho_{E|[a]_C}):=\Tr |\sqrt{\rho_E} \sqrt{\rho_{E|[a]_C}}|$.
The fidelity satisfies that
\begin{align}
\| \rho_E - \rho_{E|[a]_C}\|_1
\le 2 \sqrt{1-F(\rho_E,\rho_{E|[a]_C})^2},\Label{8-16-5}
\end{align}
and is characterized as
\begin{align}
F(\rho_E,\rho_{E|[a]_C})^2=
\Bigl(\sum_{[x_2] \in \FF_p^n/C^{\perp}} P_X([x_2])
\sqrt{\sum_{x_1 \in [x_2]} P_{X|[X]}(x_1|[x_2])^2}
\Bigr)^2
= e^{-H_2^{\rm G}(X|[X]|P_X)}.\Label{8-16-6}
\end{align}
Since $e^{-H_2^{\rm G}(X|[X]|P_X)} \ge  
\Bigl(\sum_{[x_2] \in \FF_p^n/C^{\perp}} P_X([x_2]) 
\max_{x_1 \in [x_2]} P_{X|[X]}(x_1|[x_2])
\Bigr)^2
=(1-P_e ({C^{\perp}}) )^2$,
we have
\begin{align}
1- e^{-H_2(X|[X]|P_X)}
\le 1-(1-P_e ({C^{\perp}}) )^2
=1 -1 + 2P_e ({C^{\perp}})-P_e ({C^{\perp}})^2
\le 2P_e ({C^{\perp}})
\Label{8-16-4}.
\end{align}
Combining (\ref{8-16-5}), (\ref{8-16-6}), and (\ref{8-16-4}),
we obtain (\ref{8-16-2}).

Next, we show (\ref{8-16-1}).
For $x_1 \in [x_2] \setminus \{x([x_2])\}$, we define the operator
$K_{x_1}:= 
|x_1 \rangle \langle x_1|+ 
\sqrt{\frac{P_X(x_1)}{P_X([x_2])- P_X(x([x_2])) }}
|x([x_2])\rangle \langle x([x_2])|$.
Then, we have the relation
$\sum_{[x_2] \in \FF_p^n/C^{\perp}}\sum_{x_1 \in [x_2] \setminus \{x([x_2])\}} K_{x_1}^2=I$.
Hence, we can define the TP-CP map
$\Lambda:\rho \mapsto 
\sum_{[x_2] \in \FF_p^n/C^{\perp}}\sum_{x_1 \in [x_2] \setminus \{x([x_2])\}} 
K_{x_1} \rho K_{x_1} \otimes |x_1 \rangle_R ~_R\langle x_1|$,
where $\{|x_1 \rangle_R\}$ is a CONS on another system.
Thus,
\begin{align*}
K_{x_1} \rho_E K_{x_1}
&= 
P_X(x_1)|x_1 \rangle \langle x_1|
+ 
\frac{P_X(x_1)P_X(x([x_2])) }{P_X([x_2])- P_X(x([x_2])) }
|x([x_2])\rangle \langle x([x_2])| \\
K_{x_1} \rho_{E|[a]} K_{x_1}
&= 
(\sqrt{P_X(x_1)}|x_1 \rangle 
+ 
\sqrt{\frac{P_X(x_1)P_X(x([x_2])) }{P_X([x_2])- P_X(x([x_2])) }}
|x([x_2])\rangle )
(\langle x_1 \sqrt{P_X(x_1)} | 
+ 
\langle x([x_2])|
\sqrt{\frac{P_X(x_1)P_X(x([x_2])) }{P_X([x_2])- P_X(x([x_2])) }}).
\end{align*}
Hence,
\begin{align}
& \| K_{x_1} \rho_E K_{x_1}- K_{x_1} \rho_{E|[a]} K_{x_1}\|_1
= 
2
\sqrt{P_X(x_1)}
\sqrt{\frac{P_X(x_1)P_X(x([x_2])) }{P_X([x_2])- P_X(x([x_2])) }} \nonumber\\
= &
2
\frac{P_X(x_1)}{P_X([x_2])- P_X(x([x_2]))}
\sqrt{P_X(x([x_2]))(P_X([x_2])- P_X(x([x_2])))}\Label{8-16-7}.
\end{align}
Using the relation
$\sum_{x_1 \in [x_2] \setminus \{x([x_2])\}} \frac{P_X(x_1)}{P_X([x_2])- P_X(x([x_2]))}=1$
and (\ref{8-16-7}),
we obtain
\begin{align*}
& 
\| \rho_E - \rho_{E|[a]_C}\|_1
\ge
\| \Lambda(\rho_E) - \Lambda(\rho_{E|[[a]})\|_1 \\
\ge &
\sum_{[x_2] \in \FF_p^n/C^{\perp}}\sum_{x_1 \in [x_2] \setminus \{x([x_2])\}} 
\| K_{x_1} \rho_E K_{x_1} - K_{x_1}  \rho_{E|[a]_C} K_{x_1} \|_1 \\
\ge &
\sum_{[x_2] \in \FF_p^n/C^{\perp}}\sum_{x_1 \in [x_2] \setminus \{x([x_2])\}} 
\| K_{x_1} \rho_E K_{x_1} - K_{x_1}  \rho_{E|[a]_C} K_{x_1} \|_1 \\
= &
2 \sum_{[x_2] \in \FF_p^n/C^{\perp}}\sum_{x_1 \in [x_2] \setminus \{x([x_2])\}} 
\frac{P_X(x_1)}{P_X([x_2])- P_X(x([x_2]))}
\sqrt{P_X(x([x_2]))(P_X([x_2])- P_X(x([x_2])))} \\
= &
2 \sum_{[x_2] \in \FF_p^n/C^{\perp}}
\sqrt{P_X(x([x_2]))(P_X([x_2])- P_X(x([x_2]))) },
\end{align*}
which implies (\ref{8-16-1}).

\subsection{Proof of Lemma \ref{L8-16-2}}\Label{sL8-16-2}
In this proof, we abbreviate $[x]_{C_{\bX}}$ by $[x]$.
It was shown in \cite[Theorem 7]{Tsuru} that
\begin{align}
& \lim_{n \to \infty}\frac{-1}{n}\log
\rE_{\bX} P_e(C_{\bX}) 
\ge \max_{0 \le s \le 1/2}
\frac{-sH_{1-s}(P_X) +s \log 2(1-R)}{1-s}
\Label{8-13-i}.
\end{align}
We can show the following lemma.
\begin{lem}\Label{L8-22-1}
\begin{align}
&\max_{0 \le s \le 1/2}
\frac{-sH_{1-s}(P_X) +s \log 2(1-R)}{1-s}
\nonumber \\
=& \min_{Q:\log 2 (1-R) \ge H(Q)} D(Q\|P_X) + \log 2 (1-R) - H(Q) 
\Label{8-13-i2}.
\end{align}
\end{lem}
Lemma \ref{L8-22-1} is shown in Appendix \ref{sL8-22-1}.

Hence, it is enough to show that
\begin{align}
& \lim_{n \to \infty}\frac{-1}{n}\log
\rE_{\bX}
\sum_{[x_2] \in \FF_2^n/C_{\bX}}
\sqrt{P_X^n(x([x_2]))(P_X^n([x_2])- P_X^n(x([x_2]))) }\nonumber \\
\le & \frac{1}{2} \min_{Q:\log 2 (1-R) \ge H(Q)} D(Q\|P_X) + \log 2(1-R) - H(Q) 
\Label{8-13-j}
\\
& \lim_{n \to \infty}\frac{-1}{n}\log
\rE_{\bX}
\sum_{[x_2] \in \FF_2^n/C_{\bX}}
\sqrt{P_X^n(x([x_2]))(P_X^n([x_2])- P_X^n(x([x_2]))) } \nonumber\\
\ge & \frac{1}{2} 
\max_{0 \le s \le 1/2}
\frac{-sH_{1-s}(P_X) +s \log 2(1-R)}{1-s}.
\Label{8-13-k}
\end{align}

Now, we denote the set of empirical distributions on $\FF_2$
with $n$ trials by ${\cal T}_n$.
The cardinality $|{\cal T}_n|$ is $n+1$ \cite{CKbook}.
When $T_n(Q)$ represents the set of $n$-trial data whose empirical distribution is $Q$,
the cardinality of $T_n(Q)$ can be evaluated as \cite{CKbook}:
\begin{align}
\lceil \frac{e^{nH(Q)}}{n+1} \rceil \le |T_n(Q)| \le \lfloor e^{nH(Q)} \rfloor,
\Label{9-7-5}
\end{align}
where $ \lceil x \rceil$ is the minimum integer $m$ satisfying $m \ge x $,
and $\lfloor x \rfloor$ is the maximum $m$ satisfying $m \le x $.
Since any element $\vec{a} \in T_n(Q)$ 
satisfies 
\begin{align}
P_X^n(\vec{a})=e^{-n (D(Q\|P_X)+H(Q)) }, \Label{9-7-13}
\end{align}
we obtain an important formula
\begin{align}
\frac{1}{n+1} e^{-n D(Q\|P_X)}
\le P_X^n(T_n(Q)) \le 
e^{-n D(Q\|P_X)}
\Label{9-7-14}.
\end{align}

Now, we prepare the following lemma in the finite-length case.
\begin{lem}\Label{L8-13-2}
Assume that we choose an $m$-dimensional subspace $C_{\bX}\subset \FF_2^n$ with equal probability.
When 
$Q_1,Q_2 \in {\cal T}_n$ satisfies that
$H(Q_1) \le 2^{n-m}$
and $D(Q_1\|P_X)+H(Q_1) < D(Q_2\|P_X)+H(Q_2)$, 
we have
\begin{align}
& \rE_{\bX}
\sum_{[x_2] \in \FF_2^n/C_{\bX}}
\sqrt{P_X^n(x([x_2]))(P_X^n([x_2])- P_X^n(x([x_2]))) } 
\ge B_{n,m}(Q_1,Q_2) \Label{8-13-e} \\
& \rE_{\bX}
\sum_{[x_2] \in \FF_2^n/C_{\bX}}
\sqrt{P_X^n(x([x_2]))(P_X^n([x_2])- P_X^n(x([x_2]))) } 
\le 
e^{-\frac{1}{2} \max_{0 \le s \le 1}
\frac{n sH_{1-s}(P_X) -s (n-m)\log 2}{1-s}  },
\Label{8-13-f}
\end{align}
where
\begin{align}
B_{n,m}(Q_1,Q_2) := e^{-\frac{n}{2}(D(Q_1\|P_X)+H(Q_1)+D(Q_2\|P_X)+H(Q_2))} 
\frac{|T_n(Q_2)|(|T_n(Q_1)|-1)(1-\frac{(2^m-2)(|T_n(Q_1)|-2)}{2(2^n-2)} )2^{m-n}}
{(1+ \frac{|T_n(Q_2)|-1}{\frac{2^n-2}{2^m-2}-\frac{|T_n(Q_1)|-2}{2}} 
)^{\frac{1}{2}}}  .
\end{align}
\end{lem}
The proof of Lemma \ref{L8-13-2} is given in Appendix \ref{sL8-13-2}.

Since (\ref{8-13-f}) shows (\ref{8-13-k}),
we will show (\ref{8-13-j}) by using (\ref{8-13-e}).
When $ \log 2 (1-R) < H(Q_1)$
and $D(Q_1\|P_X)+H(Q_1) < D(Q_2\|P_X)+H(Q_2)$, 
\begin{align}
\lim_{n \to \infty}\frac{-1}{n}\log
B_{n,\lfloor nR \rfloor}(Q_1,Q_2)
=\frac{1}{2}
(D(Q_1\|P_X) +D(Q_2\|P_X) +\log 2 (1-R) - H(Q_1))
\Label{8-13-a}.
\end{align}

Choosing $Q_1=P_X$, we have
\begin{align}
&\inf_{Q_1,Q_2: \log 2 (1-R) < H(Q_1),D(Q_1\|P_X)+H(Q_1) < D(Q_2\|P_X)+H(Q_2)}
\frac{1}{2}
(D(Q_1\|P_X) +D(Q_2\|P_X) +\log 2 (1-R) - H(Q_1)) \nonumber\\
=&\inf_{Q_1: \log 2 (1-R) < H(Q_1)}
\frac{1}{2} (D(Q_1\|P_X) +\log 2 (1-R) - H(Q_1)) .
\Label{8-13-b}
\end{align}
Thus,
\begin{align*}
&\inf_{Q_1: \log 2 (1-R) < H(Q_1)}
D(Q_1\|P_X) +\log 2 (1-R) - H(Q_1) \\
=&\min_{Q_1: \log 2 (1-R) \le H(Q_1)}
D(Q_1\|P_X) +\log 2 (1-R) - H(Q_1) 
\ge 
\min_{Q_1: \log 2 (1-R) \le H(Q_1)}
D(Q_1\|P_X) .
\end{align*}
Since the minimum $\min_{Q_1: \log 2 (1-R) < H(Q_1)}D(Q_1\|P_X) $ can be realized with
$Q_1^*$ satisfying $\log 2 (1-R) = H(Q_1^*)$,
we have
\begin{align}
&\inf_{Q_1: \log 2 (1-R) < H(Q_1)}
D(Q_1\|P_X) +\log 2 (1-R) - H(Q_1) \nonumber \\
=&
\min_{Q_1: \log 2 (1-R) \le H(Q_1)}
D(Q_1\|P_X) 
=
D(Q_1^*\|P_X) .\Label{8-13-g}
\end{align}
It is known that
this quantity is the optimal error exponent with the source coding with the compression rate 
$\log 2 (1-R) $,
which is equal to 
$\max_{0 \le s \le 1/2}
\frac{sH_{1-s}(P_X) -s \log 2(1-R)}{1-s}$.
Hence, combining (\ref{8-13-e}), (\ref{8-13-e}), (\ref{8-13-g}),
and the above mentioned fact,
we obtain (\ref{8-13-h}).

\begin{proofof}{(\ref{8-13-a})}
Since $ \log 2 (1-R) < H(Q_1)$, we have
\begin{align*}
\lim_{n \to \infty}\frac{-1}{n}\log
(1-\frac{(2^m-2)(|T_n(Q_1)|-2)}{2(2^n-2)} ) &=0 \\
\lim_{n \to \infty}\frac{-1}{n}\log
\frac{|T_n(Q_2)|-1}{\frac{2^n-2}{2^m-2}-\frac{|T_n(Q_1)|-2}{2}} 
&=H(Q_2)-\log 2 (1-R) .
\end{align*}
Hence,
\begin{align*}
& \lim_{n \to \infty}\frac{-1}{n}\log
\frac{|T_n(Q_2)|(|T_n(Q_1)|-1)2^{m-n}}
{(1+ \frac{|T_n(Q_2)|-1}{\frac{2^n-2}{2^m-2}-\frac{|T_n(Q_1)|-2}{2}} 
)^{\frac{1}{2}}}  \\
=&
 H(Q_2)+H(Q_1)-\log 2(1-R) -\frac{1}{2}(H(Q_2)-\log 2 (1-R) ) 
= 
 \frac{H(Q_2)+R-\log 2}{2} +H(Q_1) .
\end{align*}
Thus, 
\begin{align*}
& \lim_{n \to \infty}\frac{-1}{n}\log
B_{n,\lfloor nR \rfloor}(Q_1,Q_2) 
= \frac{1}{2}(D(Q_1\|P_X)+H(Q_1)+D(Q_2\|P_X)+H(Q_2))
- \frac{H(Q_2)-\log 2(1-R)}{2} +H(Q_1) \\
=& \frac{1}{2}(D(Q_1\|P_X) +D(Q_2\|P_X) +\log 2 (1-R) - H(Q_1)),
\end{align*}
which implies (\ref{8-13-a}).
\end{proofof}

\subsection{Proof of Lemma \ref{L8-13-2}}\Label{sL8-13-2}
In this proof, we abbreviate $[x]_{C_{\bX}}$ by $[x]$.
In Lemma \ref{L8-13-2},
we choose an $m$-dimensional subspace $C_{\bX}\subset \FF_2^n$ with equal probability.
That is, there are $G(m):=\prod_{i=0}^{m-1}\frac{2^n-2^i}{2^m-2^i}$ distinct 
$m$-dimensional subspace in $\FF_2^n$.
Hence, we chose each of them with the probability $1/G(m)$.

Now, we show (\ref{8-13-f}). 
Since $x \mapsto \sqrt{x}$ is concave for $s \in [0,1]$,
we have
\begin{align}
& \rE_{\bX}
\sum_{[x_2] \in \FF_2^n/C_{\bX}}
\sqrt{P_X^n(x([x_2]))(P_X^n([x_2])- P_X^n(x([x_2]))) } \nonumber \\
= &
\rE_{\bX}
\sum_{[x_2] \in \FF_2^n/C_{\bX}}
P_X^n(x([x_2]))
\sqrt{\frac{(P_X^n([x_2])- P_X^n(x([x_2])))}{P_X^n(x([x_2]))} } \nonumber \\
\le &
\rE_{\bX}
\sqrt{
\sum_{[x_2] \in \FF_2^n/C_{\bX}}
P_X^n(x([x_2]))
\frac{(P_X^n([x_2])- P_X^n(x([x_2])))}{P_X^n(x([x_2]))} } \nonumber \\
= &
\rE_{\bX}
\sqrt{
\sum_{[x_2] \in \FF_2^n/C_{\bX}} P_X^n([x_2])- P_X^n(x([x_2]))} \nonumber \\
\le &
\sqrt{
\rE_{\bX}
\sum_{[x_2] \in \FF_2^n/C_{\bX}} P_X^n([x_2])- P_X^n(x([x_2]))} .\Label{8-13-l}
\end{align}
Since the quantity $\sum_{[x_2] \in \FF_2^n/C_{\bX}} P_X^n([x_2])- P_X^n(x([x_2]))$
is the average error probability when we apply maximum likelihood decoder,
it can be evaluated as
\begin{align}
\sum_{[x_2] \in \FF_2^n/C_{\bX}} P_X^n([x_2])- P_X^n(x([x_2]))
\le
2^{\frac{s(n- m)}{1-s}} (\sum_{x \in \FF_2} P_X(x)^{1-s})^{\frac{n}{1-s}}\Label{8-13-m}
\end{align}
with $s \in [0,\frac{1}{2}]$.
Combining (\ref{8-13-l}) and (\ref{8-13-m}), we obtain (\ref{8-13-f}).

Next, we proceed to the proof of (\ref{8-13-e}). 
For distinct elements $y_1, \ldots, y_l, y_{l+1}, \ldots, y_{l+k} \in \FF_2^n$,
we define the number $M(y_1, \ldots, y_l|y_{l+1}, \ldots, y_{l+k})$ as 
the number of cases
that one of $y_1, \ldots, y_l$ belongs to $C_{\bX}$ and one of $y_{l+1}, \ldots, y_{l+k}$ belongs to $C_{\bX}$.
In particular, 
$M(y_1, \ldots, y_l|\emptyset)$ denotes the number of cases that one of $y_1, \ldots, y_l$ belongs to $C_{\bX}$.
Then, we prepare the following lemma.
\begin{lem}\Label{L8-13-1}
\begin{align}
 M(y_1,y_2,\ldots, y_l|y_{l+1}) 
\le &
l \prod_{i=2}^{m-1}\frac{2^n-2^i}{2^m-2^i} \Label{8-13-1}\\
M(y_1,y_2,\ldots, y_l|\emptyset)
\ge &
l(\frac{2^n-2^1}{2^m-2^1}-\frac{l-1}{2} )
\prod_{i=2}^{m-1}\frac{2^n-2^i}{2^m-2^i} \Label{8-13-2}.
\end{align}
\end{lem}
The proof of Lemma \ref{L8-13-1} is given in the end of this subsection.

Now, using Lemma \ref{L8-13-1}, we show (\ref{8-13-e}). 
We define
$a([x_2])$ to be $1$ if $[x_2]\cap T_n(Q_1) \neq \emptyset$,
and to be $0$ otherwise.
We also define $N([x_2])$ the number of elements of $[x_2] \cap T_n(Q_2)$.
Then, for any code $C$ and any $[x_2] \in \FF_2^n/C$, 
\begin{align}
& \sqrt{P_X^n(x([x_2]))(P_X^n([x_2])- P_X^n(x([x_2]))) }\nonumber  \\
\ge & 
a([x_2]) e^{-\frac{n}{2}(D(Q_1\|P_X)+H(Q_1))}
e^{-\frac{n}{2}(D(Q_2\|P_X)+H(Q_2))}\sqrt{N([x_2]) } \nonumber \\
= &
a([x_2]) e^{-\frac{n}{2}(D(Q_1\|P_X)+H(Q_1))}
e^{-\frac{n}{2}(D(Q_2\|P_X)+H(Q_2))} N([x_2]) N([x_2])^{-\frac{1}{2}} .
\end{align}
Next, for $x \in \FF_2^n$, 
we define 
$b(C,x)$ to be $1$ if $(x+C) \cap T_n(Q_1) \neq \emptyset$,
and to be $0$ otherwise.
We also define the number $N(C,x)$
as the number of elements of $(x+C) \cap T_n(Q_2)$.
Hence,
\begin{align*}
& \sum_{[x_2] \in \FF_2^n/C}
\sqrt{P_X^n(x([x_2]))(P_X^n([x_2])- P_X^n(x([x_2]))) } \\
\ge &
\sum_{[x_2] \in \FF_2^n/C}
a([x_2]) e^{-\frac{n}{2}(D(Q_1\|P_X)+H(Q_1))}
e^{-\frac{n}{2}(D(Q_2\|P_X)+H(Q_2))} N([x_2]) N([x_2])^{-\frac{1}{2}} \\
= &
\sum_{x \in T_n(Q_2)}
b(C,x) e^{-\frac{n}{2}(D(Q_1\|P_X)+H(Q_1))}
e^{-\frac{n}{2}(D(Q_2\|P_X)+H(Q_2))} N(C,x)^{-\frac{1}{2}} .
\end{align*}
Thus,
\begin{align}
& \rE_{\bX}
\sum_{[x_2] \in \FF_2^n/C_{\bX}}
\sqrt{P_X^n(x([x_2]))(P_X^n([x_2])- P_X^n(x([x_2]))) } \nonumber \\
\ge &
\rE_{\bX}
\sum_{x \in T_n(Q_2)}
b(C_{\bX},x) e^{-\frac{n}{2}(D(Q_1\|P_X)+H(Q_1))}
e^{-\frac{n}{2}(D(Q_2\|P_X)+H(Q_2))} N(C_{\bX},x)^{-\frac{1}{2}} \nonumber  \\
\ge &
\sum_{x \in T_n(Q_2)}
\rP_{\bX} (b(C_{\bX},x)=1 )
e^{-\frac{n}{2}(D(Q_1\|P_X)+H(Q_1))}
e^{-\frac{n}{2}(D(Q_2\|P_X)+H(Q_2))} (\rE_{\bX|b(C_{\bX},x)=1} N(C_{\bX},x))^{-\frac{1}{2}} .\Label{8-16-11}
\end{align}
Now, we evaluate the values
$\rP_{\bX} (b(C_{\bX},x)=1 )$ and 
$\rE_{\bX|b(C_{\bX},x)=1} N(C_{\bX},x)$.

The condition
$(x+C_{\bX}) \cap T_n(Q_1) \neq \emptyset$ is equivalent with
the condition
$C_{\bX} \cap (T_n(Q_1) -x)
\neq \emptyset$,
where $(T_n(Q_1) -x):=\cup_{y\in T_n(Q_1)} (y-x) $.
When $y_1, \ldots, y_l$ are all non-zero elements of $(T_n(Q_1) -x)$ for a fixed $x$,
the number of cases that $C_{\bX} \cap (T_n(Q_1) -x) \neq \emptyset$
is 
$M(y_1, \ldots, y_l|\emptyset)$.
Lemma \ref{L8-13-1} guarantees that
$M(y_1, \ldots, y_l|\emptyset) \ge (|T_n(Q_1)|-1)(\frac{2^n-2}{2^m-2}-\frac{|T_n(Q_1)|-2}{2} )
\prod_{i=2}^{m-1}\frac{2^n-2^i}{2^m-2^i}$.
Thus,
\begin{align}
\rP_{\bX} (b(C_{\bX},x)=1 )
\ge &
(|T_n(Q_1)|-1)(\frac{2^n-2}{2^m-2}-\frac{|T_n(Q_1)|-2}{2} )
\frac{2^m(2^m-2)}{2^n(2^n-2)} \nonumber \\
=&
(|T_n(Q_1)|-1)(1-\frac{(2^m-2)(|T_n(Q_1)|-2)}{2(2^n-2)} )2^{m-n}
\Label{8-16-12}.
\end{align}
The number $N(C,x)$ is the number of elements of $C \cap (T_n(Q_2)-x)$.
For any non-zero element $y'\in (T_n(Q_2)-x)$, 
$M(y_1, \ldots, y_l| y')
\le l \prod_{i=2}^{m-1}\frac{2^n-2^i}{2^m-2^i} $.
Hence, we have
\begin{align}
& \rE_{\bX|b(C_{\bX},x)=1} N(C_{\bX},x) 
=1+\sum_{y'\in (T_n(Q_2)-x) \setminus \{0\}}
\rP_{\bX|b(C_{\bX},x)=1} (y' \in C_{\bX})\nonumber \\
=&1+\sum_{y'\in (T_n(Q_2)-x) \setminus \{0\}}
\frac{M(y_1, \ldots, y_l| y')}{M(y_1, \ldots, y_l| \emptyset)}
\le 
1+ \sum_{y'\in (T_n(Q_2)-x) \setminus \{0\}}
\frac{1}{\frac{2^n-2}{2^m-2}-\frac{|T_n(Q_1)|-2}{2}} 
= 
1+ \frac{|T_n(Q_2)|-1}{\frac{2^n-2}{2^m-2}-\frac{|T_n(Q_1)|-2}{2}} .
\Label{8-16-13}
\end{align}

Combining (\ref{8-16-11}), (\ref{8-16-12}), and (\ref{8-16-13}), we obtain
\begin{align}
& \rE_{\bX}
\sum_{[x_2] \in \FF_2^n/C_{\bX}}
\sqrt{P_X^n(x([x_2]))(P_X^n([x_2])- P_X^n(x([x_2]))) } \nonumber \\
\ge &
\sum_{x \in T_n(Q_2)}
e^{-\frac{n}{2}(D(Q_1\|P_X)+H(Q_1)+D(Q_2\|P_X)+H(Q_2))} 
\frac{(|T_n(Q_1)|-1)(1-\frac{(2^m-2)(|T_n(Q_1)|-2)}{2(2^n-2)} )2^{m-n}}
{(1+ \frac{|T_n(Q_2)|-1}{\frac{2^n-2}{2^m-2}-\frac{|T_n(Q_1)|-2}{2}} 
)^{\frac{1}{2}}} \nonumber  \\
= &
e^{-\frac{n}{2}(D(Q_1\|P_X)+H(Q_1)+D(Q_2\|P_X)+H(Q_2))} 
\frac{|T_n(Q_2)|(|T_n(Q_1)|-1)(1-\frac{(2^m-2)(|T_n(Q_1)|-2)}{2(2^n-2)} )2^{m-n}}
{(1+ \frac{|T_n(Q_2)|-1}{\frac{2^n-2}{2^m-2}-\frac{|T_n(Q_1)|-2}{2}} 
)^{\frac{1}{2}}}  ,
\end{align}
which implies (\ref{8-13-e}).

\begin{proofof}{Lemma \ref{L8-13-1}}
We fix the one-dimensional subspace spanned by a non-zero element $y_1\in \FF_2^n$.
We count the number of $m-1$ dimensional subspaces that are orthogonal to $y_1$ and belong to $C_{\bX}$.
Hence, $M(y_1|\emptyset)$ is $\prod_{i=1}^{m-1}\frac{2^n-2^i}{2^m-2^i}$.

Next, we consider two elements $y_1$ and $y_2$.
We fix the two-dimensional subspace spanned by $y_1$ and $y_2$ in $\FF_2^n$.
We count the number of $m-2$ dimensional subspaces
that are orthogonal to the two-dimensional subspace and belong to $C_{\bX}$.
Hence, $M(y_1|y_2)$ is $\prod_{i=2}^{m-1}\frac{2^n-2^i}{2^m-2^i}$.
Thus, 
\begin{align*}
& M(y_1,y_2|\emptyset)=
M(y_1|\emptyset)+M(y_2|\emptyset)-M(y_1|y_2) \nonumber \\
=&
2\prod_{i=1}^{m-1}\frac{2^n-2^i}{2^m-2^i}
-\prod_{i=2}^{m-1}\frac{2^n-2^i}{2^m-2^i}
=(2\frac{2^n-2^1}{2^m-2^1}-1)
\prod_{i=2}^{m-1}\frac{2^n-2^i}{2^m-2^i}.\nonumber 
\end{align*}

We consider $l+1$ elements $y_1,y_2, \ldots, y_l,y_{l+1} \in \FF_2^n \setminus \{0\}$.
We focus on the two-dimensional subspace $C'$ spanned by $y_{l+1}$ and one of $y_1, \ldots, y_l$.
The number of choices of $C'$ is at most $l$.
When we fix the subspace $C'$,
we consider the number of cases 
what $m-2$ dimensional space of the orthogonal space belongs $C_{\bX}$.
This number of cases is $\prod_{i=2}^{m-1}\frac{2^n-2^i}{2^m-2^i}$.
Hence, we obtain (\ref{8-13-1}).

Using (\ref{8-13-1}), we can show (\ref{8-13-2}) with $l=3$ as follows.
\begin{align*}
& M(y_1,y_2,y_3|\emptyset)=
M(y_1|\emptyset)+M(y_2|\emptyset)+M(y_3|\emptyset)
-M(y_1|y_2) -M(y_1,y_2|y_3) \\
\ge &
3\prod_{i=1}^{m-1}\frac{2^n-2^i}{2^m-2^i}
-(1+2)\prod_{i=2}^{m-1}\frac{2^n-2^i}{2^m-2^i}
=(3\frac{2^n-2^1}{2^m-2^1}-3)
\prod_{i=2}^{m-1}\frac{2^n-2^i}{2^m-2^i}.
\end{align*}
Similarly, using (\ref{8-13-1}), we can show (\ref{8-13-2}) in the general case as follows.
\begin{align*}
& M(y_1,y_2,\ldots, y_l|\emptyset)=
M(y_1|\emptyset)+M(y_2|\emptyset)+\ldots+M(y_l|\emptyset)
-M(y_1|y_2) -M(y_1,y_2|y_3) -\ldots -M(y_1,\ldots,y_{l-1}|y_l) \\
\ge &
l\prod_{i=1}^{m-1}\frac{2^n-2^i}{2^m-2^i}
-(1+2+\ldots+(M-1) )\prod_{i=2}^{m-1}\frac{2^n-2^i}{2^m-2^i}
=(l\frac{2^n-2^1}{2^m-2^1}-\frac{l(l-1)}{2} )
\prod_{i=2}^{m-1}\frac{2^n-2^i}{2^m-2^i}.
\end{align*}
\end{proofof}

\subsection{Proof of Lemma \ref{L8-22-1}}\Label{sL8-22-1}
It is enough to show that 
\begin{align}
\max_{0 \le s \le 1/2}
\frac{-f (s)  +s r}{1-s}
=& 
\min_{Q:r \ge H(Q)} D(Q\|P_X) + r - H(Q) 
\Label{8-13-ic},
\end{align}
where $f(s):=sH_{1-s}(P_X)$.
Since both quantities are zero when $r \le H(P_X)$,
it is enough to show (\ref{8-13-ic}) with $r > H(P_X)$.

We define the distribution $P_s(x):= P_X(x)^{1-s}/\sum_{x'}P_X(x')^{1-s}$.
Since $f(s)$ is strictly convex,
$f'(s)$ is strictly increasing.
Hence, we can define the function $s(t)$ as the inverse function 
$s \mapsto f'(s)$.
Since 
\begin{align}
\frac{d}{dt} (1-s(t)) t +f(s(t))
= 1-s(t) -s'(t) t + s'(t) f'(s(t))
= 1-s(t) >0
\end{align}
for $s(t) \in [0,1)$,
we can define $t_r$ as
\begin{align}
r=(1-s(t_r)) t_r +f(s(t_r)).\Label{8-22-9}
\end{align}
Then,
we have
$s(H(P_X))=0$, 
$t_{H(P_X)}=H(P_X)$,
and
$t_{H(P_s)}=f'(s)$.

Hence, 
when $r \in [H(P_X),H(P_1)]$,
we obtain
\begin{align}
t_r -r= t_r s(t_r) - f(s(t_r))
=\frac{s(t_r)r -f(s(t_r)) }{1-s(t_r)}
=\max_{s \in [0,1]} 
\frac{s r -f(s) }{1-s} \Label{8-22-1},
\end{align}
which is shown below.
In the following, we denote the above value by $g(r)$.
Hence, we obtain
\begin{align}
\max_{s \in [0,1/2]} 
\frac{s r -f(s) }{1-s} 
=
\left\{
\begin{array}{ll}
\frac{s(t_r)r -f(s(t_r)) }{1-s(t_r)}
& \hbox{if } s(t_r) \ge 1/2 \\
\frac{H(P_{1/2})/2 -f(1/2) }{1-1/2}
 r - H(P_{1/2}) & \hbox{if } s(t_r) < 1/2 .
\end{array}
\right.
\Label{8-22-6}
\end{align}

We can also show
\begin{align}
\frac{d}{dr} g(r)
= \frac{s(t_r)}{1-s(t_r)}.
\Label{8-22-3}
\end{align}
Its proof is given below.
By simple calculation, 
we obtain
\begin{align}
D(P_s\|P_X)= s f'(s) -f (s). 
\Label{8-22-2}
\end{align}
When $H(Q)=H(P_s)$, 
we can show
\begin{align}
D(Q\|P_X) -D(P_s\|P_X)= 
\frac{D(Q\|P_s)}{1-s}.
\Label{8-22-4}
\end{align}
Its proof is given below.
Combining (\ref{8-22-1}), (\ref{8-22-2}), and (\ref{8-22-4}), we obtain
\begin{align}
\max_{s \in [0,1]} 
\frac{s r -f(s) }{1-s} 
=
\min_{Q:r = H(Q)} D(Q\|P_X) 
=D(P_{s(t_r)}\|P_X)= s(t_r) t_r -f (s(t_r)). 
\Label{8-22-8}
\end{align}
Hence, (\ref{8-22-3}) and (\ref{8-22-8}) yield that
\begin{align}
\min_{Q:r \ge H(Q)} D(Q\|P_X) + r - H(Q) 
=
\min_{r':r \ge r'}
g(r') + r - r' 
=
\left\{
\begin{array}{ll}
g(r') & \hbox{if } s(t_r) \ge 1/2 \\
g(H(P_{1/2})+ r - H(P_{1/2}) & \hbox{if } s(t_r) < 1/2 .
\end{array}
\right.
\Label{8-22-5}
\end{align}
Therefore, combination of (\ref{8-22-6}) and 
(\ref{8-22-5}) yields (\ref{8-13-ic}).

\begin{proofof}{(\ref{8-22-1})}
The first equation follows from (\ref{8-22-9}).
The second equation can be shown
by substituting $t_r= \frac{r-f(s(t_r))}{1-s(t_r)}$.
Now, we show the final equation.
We have
\begin{align}
\frac{d}{ds}\frac{s r -f(s) }{1-s} 
&= \frac{(1-s)(r-f'(s))+sr-f(s)  }{(1-s)^2} .
\end{align}
Since
\begin{align}
\frac{d}{ds}(1-s)(r-f'(s))+sr-f(s) 
= -f''(s)(1-s),
\end{align}
$(1-s)(r-f'(s))+sr-f(s)$ is monotonically increasing for $s$,
Hence, the maximum $\max_{s \in [0,1]} \frac{s r -f(s) }{1-s}$
is realized when
$(1-s)(r-f'(s))+sr-f(s)=0$,
which is equivalent with 
$s=s(t_r)$ because of (\ref{8-22-9}).
Therefore, we obtain the final equation.
\end{proofof}

\begin{proofof}{(\ref{8-22-3})}
Thanks to the proof of (\ref{8-22-1}),
we have $\frac{d}{dr}\frac{s r -f(s) }{1-s}|_{r=s(t_r)}=0$.
Hence,
\begin{align}
\frac{d}{dr}
\frac{s(t_r)r -f(s(t_r)) }{1-s(t_r)}
=
\frac{s(t_r)-f(s(t_r)) }{1-s(t_r)}
+
\frac{d s(t_r)}{dr}
\frac{d}{dr}\frac{s r -f(s) }{1-s}|_{r=s(t_r)}
=
\frac{s(t_r)-f(s(t_r)) }{1-s(t_r)}.
\end{align}
\end{proofof}

\begin{proofof}{(\ref{8-22-4})}
We have
\begin{align*}
&D(Q\|P_X) -D(P_s\|P_X)
=\sum_{x} Q(x)(\log Q(x) - \log P_X(x))
- \sum_{x} P_s(x) (\log P_s(x) - \log P_X(x))\\
=&\sum_{x} Q(x)(\log Q(x) - \log P_s(x))
+ \sum_{x}(Q(x) - P_s(x)) (\log P_s(x) - \log P_X(x))\\
=&D(Q\|P_s) - s  \sum_x (Q(x) - P_s(x)) \log P_X(x)
\end{align*}
and
\begin{align*}
&-H(Q)+H(P_s)
=\sum_{x} Q(x)(\log Q(x) - \log P_s(x))
+ \sum_{x}(Q(x) - P_s(x)) \log P_s(x)\\
=&D(Q\|P_s) +(1- s)  \sum_x (Q(x) - P_s(x)) \log P_X(x).
\end{align*}
Since $H(Q)=H(P_s)$, we obtain (\ref{8-22-4}).
\end{proofof}

\bibliographystyle{IEEE}

\end{document}